\newtheorem{lemma}{\bf Lemma}
\newtheorem{remark}{\bf Remark}
\newtheorem{thm}{\bf Theorem}
\newtheorem{rem}{\bf Remark}
\newtheorem{cor}{\bf Corollary}
\newtheorem{ex}{\bf Example}
\newtheorem{cl}{\bf Claim}
\begin{document}

\title{The Diversity-Multiplexing Tradeoff of the Dynamic Decode-and-Forward Protocol on a MIMO Half-Duplex Relay Channel}
%
%
%

\author{{\Large Sanjay ~Karmakar ~~~~~~Mahesh~ K. ~Varanasi}
\thanks{S. Karmakar and M. K. Varanasi are with the Department
of Electrical Computer and Energy Engineering, University of Colorado, Boulder,
CO, 30809 USA e-mail: (sanjay.karmakar@colorado.edu, varanasi@colorado.edu).}
\thanks{This work was supported in part by the US National Science Foundation Grants
ECCS-0725915 and CCF-0728955.}\\
\thanks{The material in this paper was presented in part at the IEEE International Symposium on Information Theory, 2009, Seoul and at the Asilomar Conference on Signals, Systems and Computers, Pacific Grove, CA, 2009}}

\markboth{Submitted, IEEE Trans. Inform. Th., August~2010}%
{Shell \MakeLowercase{\textit{et al.}}: Bare Demo of IEEEtran.cls for Journals}
%

\maketitle

\begin{abstract}
The diversity-multiplexing tradeoff of the dynamic decode-and-forward protocol is characterized for the half-duplex three-terminal $(m,k,n)$-relay channel where the source, relay and the destination terminals have $m$, $k$ and $n$ antennas, respectively. It is obtained as a solution to a simple, two-variable, convex optimization problem and this problem is solved in closed form for special classes of relay channels, namely, the $(1,k,1)$ relay channel, the $(n,1,n)$ relay channel and the $(2,k,2)$ relay channel. Moreover, the tradeoff curves for a certain class of relay channels, such as the $(m,k,n>k)$ channels, are identical to those for the decode-and-forward protocol for the full duplex channel while for other classes of channels they are marginally lower at high multiplexing gains. Our results also show that for some classes of relay channels and at low multiplexing gains the diversity orders of the dynamic decode-and-forward protocol protocol are greater than those of the static compress-and-forward protocol which in turn is known to be tradeoff optimal over all {\em static} half duplex protocols. In general, the dynamic decode-and-forward protocol has a performance that is comparable to that of the static compress-and-forward protocol which, unlike the dynamic decode-and-forward protocol, requires global channel state information at the relay node. Its performance is also close to that of the decode-and-forward protocol over the full-duplex relay channel thereby indicating that the half-duplex constraint can be compensated for by the dynamic operation of the relay wherein the relay switches from the receive to the transmit mode based on the source-relay channel quality.
\end{abstract}


\begin{keywords}
Decode-and-forward, diversity-multiplexing tradeoff, dynamic protocol, half-duplex relay, MIMO, relay channel.
\end{keywords}

%
\IEEEpeerreviewmaketitle

\newpage

\section{Introduction}
\label{intro}

Higher transmission rates and increased reliability or quality-of-service are two of the most important goals in the design of wireless communication systems. Techniques enabling the simultaneous realization of higher transmission rates and reliability include the employment of multiple antennas at the receiver and the transmitter and cooperation or relaying among users of the network. In this paper, our interest is on a three-terminal relay network with the source, relay and destination each equipped with multiple, and possibly distinct, number of antennas. One application that is being considered for relaying, for example, is the potential for expanded throughput and coverage for broadband wireless access (IEEE 802.16) with rapid and low cost deployment of relay stations of complexity and cost lower than that of legacy base stations but higher than that of mobile stations \cite{relayTG}.

Communication theoretic results on relaying in wireless channels can be found in \cite{SEB1,SEB2,LW,LanemanJN:coop,NabarRU:relay,PrasadN:CO-OP:ISIT04,KHP,NpV} for ergodic \cite{SEB1,SEB2} and outage settings \cite{LW,LanemanJN:coop,NabarRU:relay,PrasadN:CO-OP:ISIT04,KHP,NpV} with \cite{LW,LanemanJN:coop,PrasadN:CO-OP:ISIT04,KHP,NpV} characterizing the diversity-multiplexing tradeoff (DMT), a high SNR metric originally proposed for the multiple-input, multiple-output (MIMO) Rayleigh fading point-to-point links in \cite{tse1}, of several increasingly high performance half-duplex (HD) relaying protocols for single antenna terminals in \cite{LW,LanemanJN:coop,PrasadN:CO-OP:ISIT04,KHP,NpV} and for the case of a multiple-antenna destination in \cite{PrasadN:CO-OP:ISIT04}. Of these protocols, the one that concerns us in this work is the so-called dynamic decode-and-forward (DDF) protocol of \cite{KHP} but in the much more general context of a relay network with multiple and arbitrary number of antennas at each of the three nodes. The word ``dynamic" in dynamic decode-and-forward highlights the feature of this protocol wherein the relay listens for a source-to-relay-channel-dependent fraction of a frame before deciding to transmit to the destination. Protocols wherein the relay listens for an {\em a priori} fixed fraction of the frame length are called static protocols.

Relay networks with multiple antenna nodes were first considered in~\cite{YEE} where the authors analyzed the performance of a number of cooperative protocols and showed that the compress-and-forward (CF) protocol attains the fundamental DMT of both the full-duplex (FD) and the static half-duplex relay networks. Our choice of the DDF protocol however, is based on the fact that the CF protocol requires that the relay have perfect and global channel knowledge (i.e., the channel matrices between each of the three pairs of nodes) which may be difficult or even impossible to realize in practice. Moreover, while practical finite-length coding/decoding schemes based on the DDF protocol (cf. \cite{RC} and the references therein) exist, no corresponding code has been found -- to the best of our knowledge -- for the CF protocol. In contrast to the CF protocol, the DDF protocol requires the relay node to merely know its incoming channel. This suggests a possible performance-complexity tradeoff between the CF and DDF protocols which can be illuminated in the high SNR regime by providing the DMT achievable by both these protocols on the half duplex MIMO relay channel. While~\cite{YEE} proved the optimality of the static CF (SCF) protocol on the half-duplex relay channel, it did not provide an explicit DMT of the half-duplex channel under the constraint on protocols being static. More recently, explicit DMT characterizations of the SCF protocol and the DDF protocol for the so-called symmetric half-duplex MIMO relay channel, in which the number of antennas at the source and destination are equal, were reported in \cite{lvy_2009} and by the authors in a conference version of this paper in \cite{skv_2009}, respectively. Moreover, the work on the DDF protocol in \cite{skv_2009} was generalized to the relay channel with an arbitrary number of antennas at the three nodes in a second conference version of this paper in \cite{SkV2}, and independently and at almost the same time, the authors of \cite{lvy_2009} obtained a similar generalization of their work on the SCF protocol in \cite{OCM}. Furthermore, there is one key enabling analytical tool that is common to both works, namely the specification of the joint distribution of the eigenvalues of two specially correlated Wishart matrices, but the methods employed to solve this problem are different in \cite{OCM} and in this paper (first reported in \cite{SkV2}). Moreover, the dynamic nature of the DDF protocol considered in this work introduces another source of difficulty in the analysis that is not encountered in the analysis of the SCF protocol in \cite{OCM}.
Note that the generalization to the relay channel with an arbitrary number of nodes at the source, relay and destination is not only mathematically interesting but it is also a practically important problem. For example, this extra generality is critical in the application of relaying in broadband wireless access \cite{relayTG} where the three nodes are envisioned to have unequal number of antennas and computational capability. Other potential practical examples of cooperative networks that involve terminals with different numbers of antennas are also detailed in Section \ref{sec_explicit_dmt} along with a comparative DMT performance of the DDF protocol with non-cooperative communication as well as with full-duplex DF (FD-DF) and the SCF protocols.

As stated earlier, it was found in \cite{YEE} that the SCF protocol is DMT optimal on a relay channel under the constraint that the relay node operates statically. This doesn't of course preclude the DDF protocol from outperforming the SCF protocol since in the DDF protocol the relay operates in the dynamic mode. Indeed, comparing the DMT curves of the DDF protocol with that of the SCF protocol, it is found that for some channel configurations and at lower multiplexing gains, the DDF protocol does in fact achieve higher diversity orders than the SCF protocol. This proves that a half-duplex relay node operating via a static protocol prevents optimal performance over the HD channel. That the DDF protocol does not always perform uniformly better than the SCF protocol can be explained from the fact that the DF strategy is itself in general sub-optimal for static half-duplex and full duplex relaying \cite{YEE}. While allowing dynamic operation improves the DF strategy in low multiplexing gain regimes sometimes beyond even that of the SCF protocol, the superior performance of the SCF protocol over its DF counterpart persists in spite of allowing dynamic operation for high multiplexing gains. While performance improvement over the DDF protocol was sought within the framework of dynamic operation of the relay and the decode-and-forward strategy in \cite{NpV} for a relay channel with single antennas nodes, we do not pursue this improvement in this paper for the case of multiple antenna nodes.

Comparison with the DMT performance of the full duplex decode-and-forward (FD-DF) protocol also reveals an interesting fact. For a number of cases depending on the relative numbers of antennas at the three nodes, the optimal DMTs of the FD-DF and the DDF protocol can be nearly equal. In these cases therefore, the extra cost of full duplex relaying (due to enabling simultaneous transmission and reception) can be completely offset relative to half-duplex relaying by allowing dynamic operation.

It is also noteworthy that the application of the DDF protocol is not only limited to the relay channel. In \cite{KHP1}, it was shown that the DDF protocol is optimal on both a relay channel with automatic-retransmission-request (ARQ) protocol and a multiple-access-channel with a relay (MAR) and ARQ, with single antenna nodes. This encourages one to further analyze the performance of the protocol on these channels with multiple antenna nodes. The performance analysis of the DDF protocol on a MIMO half-duplex three node relay channel can be seen to provide the first step in that direction. 

The rest of the paper is organized as follows. In Section~\ref{section_system_model}, we describe the system model and the DDF protocol. In Section \ref{sec_eigenvalue_distribution}, we provide the eigenvalue distribution result using which, in Section \ref{sec_outage_and_error_analysis}, we derive the outage probability of the DDF protocol and specify the optimization problem whose solution is its DMT. In Section \ref{sec_closed_form_dmt_calculations}, closed-form solutions for three simple channel configurations are provided, following which explicit DMT curves are provided using these methods for a few more channel configurations in Section~\ref{sec_explicit_dmt}. Section \ref{conclusion} concludes the paper.

\begin{proof}[Notations]$(x)^+$, $x\land y$, $| \mathcal{X} | $ $|X|$ and $(X)^{\dagger}$ represent $\max \{0,x\}$, the minimum of $x$ and $y$, the size of the set $\mathcal{X}$, the determinant, and the conjugate transpose of the matrix, $X$, respectively. Let $\mathbb{R}$ and $\mathbb{C}$ denote the real and complex number fields, respectively, and $\mathbb{C}^{n\times m}$ the set of all $n\times m$ matrices with complex entries. The interval containing all real numbers between $x$ and $y$ will be denoted by $[x,y]$, i.e., $[x,y]=\{z\in \mathbb{R}:x\leq z\leq y\}$. Similarly we denote the set $\{z\in \mathbb{R}:x < z\leq y\}$ by $(x,y]$. The empty set is denoted by $\Phi$. Let $[a_{i,j}]_{i,j=1}^{M_1,N_1}$ represent a matrix in $\mathbb{C}^{M_1\times N_1}$, where $a_{i,j}$ represents the element in the $i^{th}$ row and $j^{th}$ column. If $x_1, x_2, \cdots , x_u$ represents a set of real numbers then $\bar{x}$ represents the vector whose components are $x_i$s, i.e., $\bar{x}=[x_1, x_2, \cdots , x_u]$. The Vandermonde matrix $[x_i^{(j-1)}]_{i,j=1}^{u,u}$ formed from the vector $\bar{x}=[x_1, x_2, \cdots , x_u]$ will be denoted by $\mathbf{V}_1(\bar{x})$. The probability distribution of a complex Gaussian random variable with zero mean and unit variance is denoted by $\mathcal{CN}(0,1)$. The symbol $\textrm{diag}(.)$ represents a square diagonal matrix of corresponding size with the elements in its argument on the diagonal and $I_n$ denotes an $n\times n$ identity matrix.  The probability of an event $\mathcal{E}$ is denoted as $Pr(\mathcal{E})$. All the logarithms in this text are to the base $2$. Finally, any two functions $f(\rho)$ and $g(\rho)$ of $\rho$, where $\rho$ is the signal-to-noise ratio (SNR) defined later, are said to be exponentially equal and denoted as $f(\rho)\dot{=}g(\rho)$ if,
\begin{equation*}
\lim_{\rho \to \infty} ~\frac{\log(f(\rho))}{\log(\rho)} ~~= ~~\lim_{\rho \to \infty} ~\frac{\log(g(\rho))}{\log(\rho)},
\end{equation*}
$\dot{\leq}$ and $\dot{\geq }$ signs are defined similarly. We also define the following function
\begin{equation}
\label{varphi-fn}
\varphi(x,y)=\left\{\begin{array}{cc}
0,~\textrm{if}~x<y;\\
+\infty,~\textrm{if}~x\geq y.
\end{array}\right.
\end{equation}
\end{proof}

\begin{figure}[!thb]
\setlength{\unitlength}{1mm}
\begin{picture}(80,40)
\thicklines
\put(50,10){\circle{10}}
\put(110,10){\circle{10}}
\put(80,32){\circle{10}}
\thicklines
\put(60,10){\vector(1,0){40}}
\put(55,15){\vector(4,3){18}}
\put(87,28){\vector(4,-3){18}}
\put(49,9){$\mathbf{S}$}
\put(109,9){$\mathbf{D}$}
\put(79,31){$\mathbf{R}$}
\put(48,18){$(m)$}
\put(88,32){$(k)$}
\put(108,18){$(n)$}
\put(57,24){$\mathbf{H_{SR}}$}
\put(94,24){$\mathbf{H_{RD}}$}
\put(78,12){$\mathbf{H_{SD}}$}
\end{picture}
\caption{System model of the $(m,k,n)$ MIMO relay channel.}
\label{system_model}
\end{figure}
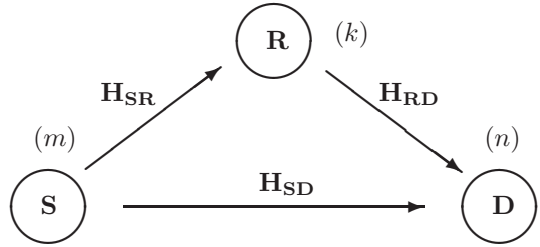

\section{System Model}
\label{section_system_model}
Consider a quasi-static Rayleigh faded MIMO relay channel with a single relay node as shown in Figure \ref{system_model}, where the source, the destination and the relay node have $m$, $n$ and $k$ antennas, respectively. Let $H_{SR}\in\mathbb{C}^{k\times m}$, $H_{SD}\in\mathbb{C}^{n\times m}$ and $H_{RD}\in\mathbb{C}^{n\times k}$ represent the channel matrices between source and relay, source and destination and relay and destination, respectively. For economy of notation these channel matrices will be denoted by $H$ collectively, i.e., $H=\{H_{SR},~H_{SD},~H_{RD}\}$. The quasi-static fading assumption implies that these channel coefficient matrices remain fixed for the entire duration of a codeword and change independently from one codeword to the next. All these matrices are assumed to be mutually independent and the elements of these matrices are independently and identically distributed (i.i.d.) as $\mathcal{CN}(0,1)$, thus modeling Rayleigh fading. Let the channel state information be perfectly known at the receivers but unknown at the transmitters. Suppose that independent random Gaussian codes are used by both the source and the relay node.

\subsection{The DDF protocol}

The DDF protocol was proposed and analyzed in~\cite{KHP} for an HD relay channel with {\em single-antenna} nodes.  In this protocol, the relay node has two phases of operation. In the first phase, the relay node listens to the source transmission and decodes it as soon it receives enough mutual information to do so. In particular, if $\hat{l}$ is the minimum integer such that $\hat{l}\log(\det(I_n+\rho {H_{SR}}^{\dagger}H_{SR}))\geq l R$, where $R$ is the rate of transmission in bits per channel use, $\rho$ is the signal-to-noise ratio (SNR) of the source to relay link and $l$ is the block length of the source codeword, then the first phase ends at the $\hat{l}$-th channel use. During the first phase the relay node does not transmit.  The second phase starts from the $(\hat{l}+1)$-st channel use and lasts for the rest of the source transmission (i.e., it consists of $l-\hat{l}$ channel uses). In what follows, $\hat{l}$ is called the relay {\em decision time} as in \cite{RC}. During the second phase, the relay re-encodes the source message using an independent codebook and transmits it during the rest of the codeword. Note that the relay node can help by transmitting an independent copy of the message to the destination only if $\hat{l}<l$. Otherwise, it does not participate in the cooperation and the channel behaves like a point-to-point (PtP) channel. Clearly, in this scheme, the source codeword $X_S$ consists of two parts ($X_S=[X_{S1}, X_{S2}]$), the first part ($X_{S1}\in \mathbb{C}^{m\times (\hat{l}\land l)}$) is sent by the source while the relay is listening and the second part ($X_{S2}\in \mathbb{C}^{m\times (l-(\hat{l}\land l)}$) is transmitted while the relay is transmitting its own codeword $X_R\in \mathbb{C}^{k\times (l-(\hat{l}\land l))}$. Thus the received signal at the relay and the destination in phase one can be written as
\begin{equation*}
\begin{array}[r]{l}
Y_{1D}=\sqrt{\rho} H_{SD}X_{S1}+N_1, \qquad Y_{1D}\in\mathbb{C}^{n\times (\hat{l}\land l)},\\
Y_{R}=\sqrt{\rho} H_{SR}X_{S1}+N_R, \qquad Y_{R}\in\mathbb{C}^{k\times (\hat{l}\land l)},
\end{array}
\end{equation*}
and the received signal at the destination in phase two is given by
\begin{equation*}
Y_{2D}=\sqrt{\rho} H_{SD}X_{S2}+\sqrt{\rho} H_{RD}X_R+N_2, \qquad Y_{2D}\in\mathbb{C}^{n\times (l-(\hat{l}\land l))},
\end{equation*}
where $N_1\in\mathbb{C}^{n\times (\hat{l}\land l)}$, $N_2\in\mathbb{C}^{n\times (l-(\hat{l}\land l))}$ and $N_{R}\in\mathbb{C}^{k\times (\hat{l}\land l)}$ represent the additive noises at the destination during the first and second phases and at the relay, respectively. All the entries of $N_1$, $N_2$ and $N_R$ are assumed to be i.i.d. $\mathcal{CN}(0,1)$. Besides assuming channel state information at the receivers, we assume for simplicity, as does \cite{KHP}, that the destination has perfect (genie-aided) knowledge of the relay decision time $\hat{l}$ which of course is a function of the source-to-relay channel $H_{SR}$\footnote{The assumption of genie-aided relay decision time (and infinite codeword length) is relaxed and addressed rigorously for the single-antenna relay channel in the recent work of \cite{RC} where it is shown that there is no loss of diversity-multiplexing tradeoff optimality if the relay does not convey the side information about relay decision time but the decoder at the destination {\em jointly} decodes the decision time and the message.}. Further, to ensure that $\rho$ represents the SNR of each link, we impose the following constraints on the covariance matrices of the inputs:
\begin{equation}
E\left(X_{S,T} X_{S,T}^\dagger\right)=I_{m\times m} ~\textrm{and}~E\left(X_{R,T} X_{R,T}^\dagger\right)=I_{k\times k}, ~\forall T,
\end{equation}
where $X_{S,T}$ and $X_{R,T}$ represent the $T$-th column of the source and the relay codeword, respectively. Let us also define the ratio $\frac{\min\{\hat{l},l\}}{l}$ by $f$, i.e., $f$ represents the fraction of time for which the relay node listens before starting its own transmission, if it can decode the source transmission. In Section~\ref{sec_outage_and_error_analysis}, we shall see that this parameter $f$ plays an important role in the formulation of an appropriate outage event. Note that $f$ is defined in terms of $\hat{l}$ which is a function of the rate of transmission $R$, and source-to-relay channel $H_{SR}$, making it a random variable. In what follows, we derive the dependence of $f$ on the channel matrices more rigorously. From the definition of $\hat{l}$ specified earlier, we get
\begin{IEEEeqnarray}{rl}
\hat{l} =&\left\lceil \frac{lR}{\log(\det(I_n+\rho {H_{SR}}^{\dagger}H_{SR}))}\right\rceil
 \nonumber\\
\label{ratio}
=&\left\lceil \frac{lr \log(\rho)}{\log(\rho)\sum_{i=1}^{t}(1-\gamma_i)^{+}}\right\rceil,~\left[\because~ R=r\log(\rho)\right] \nonumber\\
\end{IEEEeqnarray}
where $\rho^{-\gamma_i}=\nu_i,~1\leq i\leq \min\{k, m\}=t$, $\nu_1\geq \nu_2 \geq \cdots \geq \nu_t\geq 0$\footnote{Note that $\nu_t>0$ with probability 1 (w.p.1).} are the ordered eigenvalues of the central Wishart matrix $H_{SR}^{\dagger}H_{SR}$ and $r$ is the multiplexing gain. Putting this into the definition of $f$, in the limit when $ l \to \infty $, we get
\begin{IEEEeqnarray}{l}
\label{eq_f_dependence_on_channel}
f=\frac{\min\{\hat{l},l\}}{l}=\min \left\{1,\frac{r}{\sum_{i=1}^{t}(1-\gamma_i)^{+}}\right\}.
\end{IEEEeqnarray}

Besides $f$, computation of the outage probability will also involve the joint distribution of 2 Wishart matrices mutually correlated in a special way. In the next section, we shall describe the structure of this correlation and compute the corresponding joint distribution.

\section{Joint Eigenvalue distribution of two correlated matrices}
\label{sec_eigenvalue_distribution}
In the DMT analysis we need only the asymptotic (in SNR) distribution of the eigenvalues of the matrices appearing in the outage formulation. In this section, we shall derive the joint distribution of the eigenvalues of two such mutually correlated Wishart matrices. Mathematically, the asymptotic behavior of the eigenvalues of a random matrix is captured, following  \cite{tse1}, as shown below. Letting the ordered eigenvalues of a matrix of interest be denoted by $\pi_1\geq \pi_2\geq \cdots \geq \pi_u$, the asymptotic nature of the eigenvalues is characterized by $\delta_i$'s, where
\begin{equation}
\label{eq_asymptotic_def}
\pi_i=\rho^{-\delta_i},~1\leq i\leq u.
\end{equation}
Eventually, in this section we shall derive the joint distributions of these $\delta_i$'s; the following Theorem is the first step in that direction.

\vspace{2mm}

\begin{thm}
\label{thm_eigenvalue_distribution_1}
Let $H_1 \in \mathbb{C}^{N_2\times N_1}$ and $H_2 \in \mathbb{C}^{N_2\times N_3}$ be two mutually independent random matrices with i.i.d. $\mathcal{CN}(0,1)$ entries. Suppose that $\xi_1 \geq \xi_2\geq \cdots \xi_q > 0$ and $\lambda_1 \geq \lambda_2 \geq \cdots \lambda_p > 0$ are the ordered non-zero eigenvalues (w.p.$1$) of $V_1\triangleq H_1^{\dagger}(I_{N_2}+\rho H_2H_2^{\dagger})^{-1}H_1$ and $V_2\triangleq H_2H_2^{\dagger}$, respectively, with $p=\min\{N_2, N_3\}$ and $q=\min\{N_1, N_2\}$, and where all the eigenvalues are assumed to vary exponentially with SNR in the sense of equation \eqref{eq_asymptotic_def}. Then, the conditional asymptotic probability density function (pdf) of the eigenvalues $\bar{\xi}$ given $\bar{\lambda}$ is given as
\begin{eqnarray}
\mathbf{f_1}(\bar{\xi}|\bar{\lambda}) \, \dot{=} \, \prod_{j=1}^{q}(\xi_j^{(N_1+N_2-2j)}e^{-\xi_j}) \prod_{\substack{(u=1,v=1)\\ ((u+v)= (N_2+1))}}^{(p,q)} \left({e}^{-\rho \xi_v \lambda_u}\right) 
\prod_{i=1}^{p}(1+\rho \lambda_i)^{N_1}  \prod_{j=1}^{q} \prod_{i=1}^{(N_2-j)\land N_3} \left(\frac{1-{e}^{-\rho \xi_j \lambda_i}}{\rho \xi_j \lambda_i}\right). \nonumber
\end{eqnarray}
\end{thm}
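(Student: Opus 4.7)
My plan is to condition on $H_2$ (equivalently on $\bar\lambda$), reduce $V_1$ to a correlated complex Wishart matrix, invoke its classical joint eigenvalue density together with the Harish-Chandra--Itzykson--Zuber (HCIZ) determinantal identity, and finally extract the high-$\rho$ exponential equivalent.

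First, using the SVD $H_2=U\Lambda^{1/2}W^{\dagger}$ and the unitary invariance of the law of $H_1$ (i.e.\ $U^{\dagger}H_1\stackrel{d}{=}H_1$), one can replace $(I_{N_2}+\rho H_2H_2^{\dagger})^{-1}$ by $D^{-1}$, where $D=\mathrm{diag}(1+\rho\lambda_1,\dots,1+\rho\lambda_p,1,\dots,1)\in\mathbb{C}^{N_2\times N_2}$. Setting $\tilde H_1=D^{-1/2}H_1$, the columns of $\tilde H_1$ are i.i.d.\ $\mathcal{CN}(0,D^{-1})$, so that $V_1$ has the same non-zero eigenvalues as the correlated Wishart matrix $\tilde H_1\tilde H_1^{\dagger}$ with covariance $\Sigma=D^{-1}$.

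Next, invoking the classical joint eigenvalue density of a correlated complex Wishart matrix and the HCIZ integral, one obtains, up to a constant,
\begin{equation*}
f(\bar\xi\mid\bar\lambda)\;\propto\;|\Sigma|^{-N_1}\;\prod_{j=1}^{q}\xi_j^{|N_1-N_2|}\;\Delta(\bar\xi)\;\frac{\det\bigl[e^{-d_i\xi_j}\bigr]_{i,j=1}^{N_2}}{\Delta(\bar d)},
\end{equation*}
where $\Delta(\cdot)$ denotes the Vandermonde, $d_i=1+\rho\lambda_i$ with the convention $\lambda_i=0$ for $i>p$, and the ratio is regularised by L'H\^opital when some $d_i$'s coincide (which occurs whenever $N_3<N_2$). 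The prefactor $|\Sigma|^{-N_1}$ immediately evaluates to $\prod_{i=1}^{p}(1+\rho\lambda_i)^{N_1}$, producing the third product in the target formula.

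The remainder of the proof is the high-$\rho$ analysis of this ratio. Writing $\xi_j=\rho^{-\theta_j}$ with $\theta_1\le\cdots\le\theta_q$, the Zheng--Tse identity $\xi_i-\xi_j\dot{=}\xi_i$ for $i<j$ converts $\prod_j\xi_j^{|N_1-N_2|}\Delta(\bar\xi)$ into $\prod_j\xi_j^{|N_1-N_2|+q-j}$. The core step is the asymptotic expansion of $\det[e^{-d_i\xi_j}]/\Delta(\bar d)$: the permutation $\sigma(i)=N_2+1-i$ that minimises $\sum_i d_i\xi_{\sigma(i)}$ produces, via $d_i=1+\rho\lambda_i$, the exponential $e^{-\sum_i d_i\xi_{N_2+1-i}}=\prod_j e^{-\xi_j}\prod_{u+v=N_2+1}e^{-\rho\xi_v\lambda_u}$, recovering both the $e^{-\xi_j}$ factors and the anti-diagonal cross term in the claim. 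Sub-leading permutations, grouped pairwise via $e^{-a}-e^{-b}=e^{-a}(1-e^{-(b-a)})$, furnish differences that cancel the Vandermonde factors $d_j-d_i$ in the denominator, leaving residual factors of the form $(1-e^{-\rho\xi_j\lambda_i})/(\rho\xi_j\lambda_i)$; the additional $\xi_j$-powers liberated by this pairwise reorganisation combine with $\prod_j\xi_j^{|N_1-N_2|+q-j}$ to yield $\prod_j\xi_j^{N_1+N_2-2j}$, completing the first bracketed product.

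The main obstacle I anticipate is the uniform book-keeping of the HCIZ determinantal expansion: one has to show that exactly one permutation supplies the leading anti-diagonal exponentials while all other contributions, after cancellation against $\Delta(\bar d)$, collapse precisely into the stated $(1-e^{-\rho\xi_j\lambda_i})/(\rho\xi_j\lambda_i)$ factors with index range $i\le(N_2-j)\wedge N_3$. The cut-off emerges because the $(N_2-N_3)^{+}$ unit eigenvalues of $D$ (present when $N_3<N_2$) contribute neither non-trivial exponentials nor pairwise Vandermonde cancellations and therefore must be isolated cleanly by the L'H\^opital regularisation; handling this degeneracy uniformly, rather than through separate case analysis on the sign of $N_2-N_3$, is what will consume the bulk of the technical effort.
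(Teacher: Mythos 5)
Your plan follows essentially the same route as the paper: the same SVD/unitary-invariance reduction of $V_1$ to a receive-correlated Wishart matrix, the same determinantal conditional eigenvalue density (the paper invokes the $N_1<N_2$ and $N_1\geq N_2$ formulas of \cite{GaS} and \cite{CZZ} separately rather than one confluent HCIZ expression), the same identification of the anti-diagonal pairing as the dominant exponential with the sub-leading terms collapsing against the denominator Vandermonde into the $(1-e^{-\rho\xi_j\lambda_i})/(\rho\xi_j\lambda_i)$ factors (carried out in the paper by successive column operations in Lemmas~\ref{lemma_eig1_help1} and~\ref{lemma_eig1_help2}), and the same limiting argument for the degenerate case $N_2>N_3$. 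The bookkeeping you identify as the main obstacle is precisely what those two lemmas execute, so the plan is sound and matches the paper's proof.
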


\begin{proof}[Proof~(Outline)]
Let the singular-value-decomposition (SVD) of $V_2$ be $V_2=U^{\dagger}\Lambda U$, where $U\in \mathbb{C}^{N_2\times N_2}$ is a unitary matrix and $\Lambda \triangleq\textrm{diag}([\lambda_1, \lambda_2 \cdots \lambda_{N_2}])$, where $\lambda_1 \geq \lambda_2 \geq \cdots \lambda_{N_2}$ are the eigenvalues of $V_2$ (Note that $(N_2-p)$ of these are $0$ w.p.1). Denoting $\Sigma=(I_{N_2}+\rho \Lambda)^{-1}$ and $\hat{H}_1=U H_1$, $V_1$ can be written as $V_1=\tilde{H}_1^\dagger \tilde{H}_1$ where $\tilde{H_1}={\Sigma}^{1/2}\hat{H}_1$. $\tilde{H_1}$ can be thought of as a channel matrix of $N_1$ transmit antennas and $N_2$ receive antenna MIMO channel, where the channel is correlated at the receiver only, with the covariance matrix being $\Sigma$. The eigenvalue distribution of $V_1$ for such a channel was derived in \cite{GaS} for $N_2>N_1$ and in \cite{CZZ} for $N_2\leq N_1$, respectively. However, the expressions for the corresponding distributions given in \cite{GaS} and \cite{CZZ} can not be used directly for the DMT calculation as they involve ratio of determinants whose components are hypergeometric functions. Fortunately, for high SNR values these expressions can be simplified. A detailed proof is given in Appendix~\ref{pf_thm_eigenvalue_distribution_1}.
\end{proof}

\begin{cor}
\label{joint_eigen}
The joint pdf of $\bar{\xi}$ and $\bar{\lambda}$ is given as
\begin{eqnarray}
\label{joint_distribution}
\mathbf{f}(\bar{\xi},\bar{\lambda})\dot{=}\prod_{i=1}^{p}\left((1+\rho \lambda_i)^{N_1}e^{-\lambda_i}{\lambda_i}^{(N_2+N_3-2i)}\right) \prod_{\substack{(u=1,v=1)\\ ((u+v)= (N_2+1))}}^{(p,q)} \left({e}^{-\rho \xi_v \lambda_u}\right) \nonumber{} \\
\label{eq_joint_distribution_11}
 \prod_{j=1}^{q}(\xi_j^{(N_1+N_2-2j)}e^{-\xi_j}) \prod_{j=1}^{q} \prod_{i=1}^{(N_2-j)\land N_3} \left(\frac{1-{e}^{-\rho \xi_j \lambda_i}}{\rho \xi_j \lambda_i}\right).
\end{eqnarray}
\end{cor}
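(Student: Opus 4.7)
The plan is to obtain the joint density by multiplying the conditional density $\mathbf{f_1}(\bar{\xi}\mid\bar{\lambda})$ from Theorem~\ref{thm_eigenvalue_distribution_1} by the (asymptotic) marginal density $f(\bar{\lambda})$ of the ordered non-zero eigenvalues of $V_2=H_2H_2^{\dagger}$. Since $H_2\in\mathbb{C}^{N_2\times N_3}$ has i.i.d.\ $\mathcal{CN}(0,1)$ entries, $V_2$ is a standard (uncorrelated) complex Wishart matrix, and its ordered non-zero eigenvalue density is the classical expression
\begin{equation*}
f(\bar{\lambda}) = K \prod_{i=1}^{p}\lambda_i^{|N_2-N_3|}\, e^{-\lambda_i}\prod_{1\le i<j\le p}(\lambda_i-\lambda_j)^{2},
\end{equation*}
with $p=\min\{N_2,N_3\}$ and $K$ a normalizing constant. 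I would invoke this as a known result (e.g., from the same references underlying \cite{tse1}) rather than re-derive it.

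Next, I would pass to the high-SNR (exponential) regime using the standard trick from \cite{tse1}: with $\lambda_i=\rho^{-\alpha_i}$ and the eigenvalues ordered so that they lie on distinct exponential scales, one has $(\lambda_i-\lambda_j)\dot{=}\lambda_i$ for $i<j$, so the Vandermonde-squared factor reduces to $\prod_{i=1}^{p}\lambda_i^{2(p-i)}$ in the $\dot{=}$ sense. Combining the exponents then gives $\lambda_i^{|N_2-N_3|+2(p-i)}=\lambda_i^{N_2+N_3-2i}$ (a uniform identity regardless of which of $N_2,N_3$ is larger, since $p=\min\{N_2,N_3\}$ and $|N_2-N_3|=N_2+N_3-2p$). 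Hence
\begin{equation*}
f(\bar{\lambda})\,\dot{=}\,\prod_{i=1}^{p}\lambda_i^{(N_2+N_3-2i)}\,e^{-\lambda_i}.
\end{equation*}

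Finally, I would form $\mathbf{f}(\bar{\xi},\bar{\lambda})=\mathbf{f_1}(\bar{\xi}\mid\bar{\lambda})\,f(\bar{\lambda})$ by direct multiplication. The factor $(1+\rho\lambda_i)^{N_1}$ appearing in $\mathbf{f_1}$ is absorbed together with $\lambda_i^{(N_2+N_3-2i)}e^{-\lambda_i}$ from $f(\bar{\lambda})$ to produce the first product in \eqref{eq_joint_distribution_11}; the Jacobi-type exponential product $\prod_{u+v=N_2+1}e^{-\rho\xi_v\lambda_u}$, the factor $\prod_{j=1}^{q}\xi_j^{(N_1+N_2-2j)}e^{-\xi_j}$, and the product of the $(1-e^{-\rho\xi_j\lambda_i})/(\rho\xi_j\lambda_i)$ terms carry over from $\mathbf{f_1}$ unchanged. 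This yields the claimed expression.

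The only non-routine step is the asymptotic collapse of the Vandermonde, which I expect to be the main technical point to flag; however, it is standard in the DMT literature and can be justified by the usual argument that, on the set where the $\alpha_i$'s are distinct (which carries probability mass $\dot{=}$ the full mass), the dominant term in each factor $(\lambda_i-\lambda_j)$ is the larger eigenvalue. The normalizing constant $K$ is absorbed into the $\dot{=}$ symbol since it does not depend on $\rho$.
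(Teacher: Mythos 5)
Your proposal is correct and follows essentially the same route as the paper: the paper also forms the joint pdf as the product of the conditional density from Theorem~\ref{thm_eigenvalue_distribution_1} with the asymptotic marginal $\mathbf{f_2}(\bar{\lambda})\dot{=}\prod_{i=1}^{p}e^{-\lambda_i}\lambda_i^{(N_2+N_3-2i)}$, simply citing \cite{CZZ} for that marginal rather than spelling out the Vandermonde collapse as you do. Your explicit justification of the exponent $|N_2-N_3|+2(p-i)=N_2+N_3-2i$ is a correct and slightly more self-contained version of the same step.
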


\begin{proof}[{\bf Proof of Corollary~\ref{joint_eigen}}]
The joint distribution of the ordered eigenvalues of $V_2=H_2H_2^{\dagger}$ is given in \cite{CZZ}, which for asymptotically high $\rho$ values becomes
\begin{equation*}
\mathbf{f_2}(\bar{\lambda})\dot{=}\prod_{i=1}^{p}e^{-\lambda_i} {\lambda_i}^{(N_2+N_3-2i)}.
\end{equation*}
 Using this marginal distribution of $\bar{\lambda}$ along with the conditional distribution of Theorem~\ref{thm_eigenvalue_distribution_1} we get \eqref{eq_joint_distribution_11}.
\end{proof}

Now, using the transformations $\lambda_i=\rho^{-\alpha_i}$ for $1\leq i \leq p$ and $\xi_j=\rho^{-\beta_j}$ for $1\leq j\leq q$ in equation~(\ref{joint_distribution}) we get the following
\begin{thm}
\label{thm_eigenvalue_distribution_2}
If the non-zero ordered eigenvalues of $V_1=H_1^{\dagger}(I_{N_2}+\rho H_2H_2^{\dagger})^{-1}H_1$ and $V_2=H_2H_2^{\dagger}$ are denoted by $\xi_j=\rho^{-\beta_j}, ~1\leq j\leq q$ and $\lambda_i=\rho^{-\alpha_i}, ~1\leq i\leq p$, respectively, where $H_1$ and $H_2$ are as in Theorem~\ref{thm_eigenvalue_distribution_1}, then the joint distribution of $\bar{\alpha}$ and $\bar{\beta}$ is given by
\begin{equation}
\label{distribution2}
\mathbf{g}(\bar{\alpha},\bar{\beta})\dot{=}\left\{
\begin{array}{l}
\rho^{-E\left(\bar{\alpha},\bar{\beta}\right)},~ \textrm{if}~(\bar{\alpha},\bar{\beta})\in \mathcal{A};\\
0,~ \textrm{if}~(\bar{\alpha},\bar{\beta})\notin \mathcal{A},
\end{array}\right.
\end{equation}
where $\mathcal{A}=\Big\{(\bar{\alpha},\bar{\beta}):(\alpha_i+\beta_j)\geq 1,~ \forall (i+j)\geq(N_2+1);~
 0\leq \alpha_1 \leq \cdots \leq \alpha_p;~
 0\leq \beta_1 \leq \cdots \leq \beta_q \Big\}$ and
\begin{IEEEeqnarray}{rl}
E\left(\bar{\alpha},\bar{\beta}\right)=\sum_{i=1}^{p}\Big((N_3+N_2-2i+1)\alpha_i- N_1(1-\alpha_i)^+\Big)+ & \sum_{j=1}^{q}(N_1+N_2-2j+1)\beta_j\nonumber \\
&+\sum_{j=1}^{q}\sum_{i=1}^{(N_2-j)\land N_3}(1-\alpha_i-\beta_j)^+.
\end{IEEEeqnarray}
\end{thm}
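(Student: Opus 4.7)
The plan is to start from the asymptotic joint pdf $\mathbf{f}(\bar{\xi},\bar{\lambda})$ given in Corollary~\ref{joint_eigen} and perform the one-to-one change of variables $\lambda_i=\rho^{-\alpha_i}$, $\xi_j=\rho^{-\beta_j}$. The Jacobian contributes a factor
$$
\prod_{i=1}^{p}\Big|\frac{d\lambda_i}{d\alpha_i}\Big|\prod_{j=1}^{q}\Big|\frac{d\xi_j}{d\beta_j}\Big|\;\dot{=}\;\prod_{i=1}^{p}\rho^{-\alpha_i}\prod_{j=1}^{q}\rho^{-\beta_j},
$$
since the $\log(\rho)$ factors are absorbed by the $\dot{=}$ relation. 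After this substitution, every remaining factor in $\mathbf{f}(\bar{\xi},\bar{\lambda})$ becomes a power of $\rho$ up to sub-polynomial terms, and I intend to read off the exponent one factor at a time.

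The case-by-case asymptotic evaluation goes as follows. First, $e^{-\lambda_i}$ and $e^{-\xi_j}$ collapse to $1$ when $\alpha_i\geq 0$ and $\beta_j\geq 0$ and decay super-polynomially otherwise, thus enforcing the sign constraints in $\mathcal{A}$. Second, $(1+\rho\lambda_i)^{N_1}\dot{=}\rho^{N_1(1-\alpha_i)^+}$, which produces the $-N_1(1-\alpha_i)^+$ term with the correct sign after conversion to $E$. Third, the polynomial factors give $\lambda_i^{N_2+N_3-2i}\dot{=}\rho^{-(N_2+N_3-2i)\alpha_i}$ and $\xi_j^{N_1+N_2-2j}\dot{=}\rho^{-(N_1+N_2-2j)\beta_j}$; combining these with the Jacobian factors and regrouping yields precisely the coefficients $(N_2+N_3-2i+1)$ and $(N_1+N_2-2j+1)$ in the statement. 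Fourth, each cross factor $(1-e^{-\rho\xi_j\lambda_i})/(\rho\xi_j\lambda_i)$, for $i\leq (N_2-j)\land N_3$, is $\doteq \min\{1,(\rho\xi_j\lambda_i)^{-1}\}\dot{=}\rho^{-(1-\alpha_i-\beta_j)^+}$, which supplies the double sum in $E(\bar{\alpha},\bar{\beta})$.

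The remaining product $\prod_{u+v=N_2+1}e^{-\rho\xi_v\lambda_u}$ is where the interior constraints defining $\mathcal{A}$ are generated: $\rho\xi_v\lambda_u=\rho^{1-\alpha_u-\beta_v}$, so if $\alpha_u+\beta_v<1$ for some $(u,v)$ with $u+v=N_2+1$ then the factor decays super-polynomially and the density is $\dot{=}0$, whereas if $\alpha_u+\beta_v\geq 1$ the factor is $\doteq 1$. Combined with the orderings $\alpha_1\leq\cdots\leq\alpha_p$ and $\beta_1\leq\cdots\leq\beta_q$ (inherited from the orderings of $\lambda_i$ and $\xi_j$), the constraint along the anti-diagonal $u+v=N_2+1$ implies $\alpha_i+\beta_j\geq 1$ for every $(i,j)$ with $i+j\geq N_2+1$, which is exactly the description of $\mathcal{A}$.

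The main obstacle I foresee is bookkeeping rather than conceptual: correctly pairing the Jacobian $\rho^{-\alpha_i-\beta_j}$ terms with the polynomial exponents so that the $-1$ shifts combine into the $(N_2+N_3-2i+1)$ and $(N_1+N_2-2j+1)$ coefficients, and verifying that all sub-polynomial $\log(\rho)$ corrections are indeed swept into the $\dot{=}$ relation. A secondary subtlety is justifying that the vanishing-exponential factors outside $\mathcal{A}$ truly force $\mathbf{g}\dot{=}0$, which amounts to observing that $\exp(-\rho^{\epsilon})$ for any $\epsilon>0$ is $\dot{=}\rho^{-\infty}$; this is standard in the DMT framework and matches the convention used elsewhere in the paper (cf.\ the function $\varphi$ in \eqref{varphi-fn}).
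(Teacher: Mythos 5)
Your proposal is correct and follows essentially the same route as the paper's proof in Appendix~\ref{pf_thm_eigenvalue_distribution_2}: change of variables with the $\rho^{-\alpha_i}$, $\rho^{-\beta_j}$ Jacobian factors, factor-by-factor asymptotic evaluation of the pdf from Corollary~\ref{joint_eigen} (including the $\rho^{-(1-\alpha_i-\beta_j)^+}$ limit of the cross terms), and the observation that the exponential factors force the density to vanish outside $\mathcal{A}$. Your remark that the anti-diagonal constraint $u+v=N_2+1$ together with the eigenvalue orderings yields $\alpha_i+\beta_j\geq 1$ for all $i+j\geq N_2+1$ is in fact slightly more explicit than the paper's treatment of that point.
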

\begin{proof}
The proof is given in Appendix~\ref{pf_thm_eigenvalue_distribution_2}.
\end{proof}

This joint pdf of $\left(\bar{\alpha},\bar{\beta}\right)$ will be used in the next section to compute the probability of an appropriately defined outage event. The scope of the asymptotic joint pdf derived in this section however is not restricted to the results derived in this paper. Although the correlation between the two Wishart matrices has a specific structure, it may arise in different communication problems. For example, a similar correlation structure is encountered in the outage analysis of the 2-user Z interference channel and the result of this paper was used to derive the DMT of that channel in~\cite{Sanjay_Varanasi_ZIC_DMT}.

\section{DMT of the DDF protocol}
\label{sec_outage_and_error_analysis}
The optimal diversity order of a coding scheme, at a given multiplexing gain, is defined as the negative SNR exponent of the average codeword error probability averaged over the channel realizations. Thus to derive the DMT of a coding scheme it is important to first compute the average codeword error probability. In this section, we shall derive the best achievable diversity order of the DDF protocol in the following two steps: first, we shall show that the probability of error is exponentially equal to the probability of an appropriately defined outage event, $\mathcal{O}$; and then we shall compute the negative SNR exponent of this outage probability, $\Pr(\mathcal{O})$. It is the second step, where we shall have to use the distribution result derived in the previous section.

Let the average probability of codeword error of the DDF protocol, achievable over the MIMO relay channel at a given SNR, $\rho$ and minimized over all possible coding schemes, be denoted by $P^*_E(\rho)$, i.e.,
\begin{equation}
P^*_E(\rho)=\min_{\left\{\mathcal{C}(\rho)\in \mathscr{C}(\rho)\right\}} P^{\mathcal{C}(\rho)}_E,
\end{equation}
where $P^{\mathcal{C}(\rho)}_E$ represents the probability of codeword error achievable by a particular coding scheme $\mathcal{C}(\rho)$ and $\mathscr{C}(\rho)$ represents the family of possible codes at SNR $\rho$. Then the optimal diversity order, denoted by $d^*(r)$, at a multiplexing gain of $r$ is defined as
\begin{equation}
\label{def_optimal_diversity_order}
d^*(r)=\lim_{\rho \to \infty} ~\frac{-\log\left(P^*_E(\rho)\right)}{\log(\rho)}.
\end{equation}

\subsection{Probability of codeword error}

The computation of the average codeword error probability $P^*_E$ of the DDF protocol is divided in two parts depending on whether the relay node participates in the end-to-end communication or not. First, we consider the case where the relay node helps by cooperating (when $f<1$) and then we consider the case where the relay node does not participate in the communication (when $f=1$). We start with the first case.
\begin{itemize}
\item {\it Achievability $\left(P_E^*\dot{\leq} \Pr(\mathcal{O})\right)$\footnote{\textrm{The outage event $\mathcal{O}$ will be defined shortly.}}}:
\end{itemize}
Let us assume that the source and relay use independent Gaussian codebooks and denote the conditional codeword error probability by $P_{E|H}$, where
\begin{equation}
\label{eq_average_error_probability}
P_{E|H}\triangleq \sum_{\left\{\mathcal{C}_{G}(\rho)\right\}}\sum_{\left\{X_{S},\hat{X}_S\in \mathcal{C}_{G_s}(\rho)\right\}}\Pr\left(\mathcal{C}_{G}(\rho)\right)\Pr(\hat{X}_S,X_S)\Pr(X_S\neq \hat{X}_{S}|X_S,\hat{X}_S, H).
\end{equation}
That is the error probability is computed conditioned on a channel realization, $H$ and averaged over all ensembles of Gaussian codebooks having codeword length $l$ and cardinality $2^{lr\log(\rho)}$. This error probability can be upper bounded using Bayes' rule as,
\begin{IEEEeqnarray}{rl}
\label{bayes1}
P_{E|H}&=P_{E,E_r^c|H}+P_{E,E_r|H} \nonumber\\
&\leq P_{E|E_r^c,H}+P_{E_r|H},
\end{IEEEeqnarray}
where $E_r$ and $E_r^c$ represent the events of relay error and its complement. We know from equation \eqref{ratio} that the assumption $f<1$ is equivalent to saying that the source-to-relay link is not in outage. On the other hand, on a delay limited point-to-point (PtP) fading channel the best achievable probability of error is essentially equal to the so called information outage probability. Therefore, since we assume sufficiently large block length for the codewords used by the source, it can be easily proved that $P_{E_r|H}\leq \epsilon,$ for any $\epsilon >0$ and $f<1$.
\begin{rem}
The fact that $P_{E_r|H}\leq \epsilon$, for any $\epsilon >0, ~f<1$ and $l\to \infty$ was proved in \cite{KHP1} (Lemma~$1$) for a relay channel with single antenna nodes. The proof for the MIMO case is identical. In what follows, we provide an outline of the proof. Suppose, a codeword is divided into $N$ segments of length $L$ each, i.e. $l=LN$, where both $L$ and $N$ both grow to infinity and the ML decoder at the relay waits for $\hat{N}$ such segments to decode the message from the source, where $\hat{N}$ is given as
\begin{equation}
\hat{N}=\left\lfloor\frac{NR}{I(X_{S,T};Y_{R,T})}\right\rfloor+1,
\end{equation}
where $I(X_{S,T};Y_{R,T})$ represents the mutual information between the source and relay node at time $T$. Note that this $I(X_{S,T};Y_{R,T})$ is same for all $T$ since the input and noise are identically distributed across time and the channel is fixed for the entire codeword. From equation \eqref{ratio} and the fact that $f<1$ we get that $NR<\hat{N} I(X_{S,T};Y_{R,T})$. Now, $P_{E_r|H}$ as defined before represents the conditional probability of error on the PtP channel from the source to the relay node. $P_{E_r|H}$ can be upper bounded by replacing the ML decoder by a typical set decoder and then taking the average over the ensemble of codebooks. Then, following a method similar to that in in \cite{CT} (Theorem 10.1.1) it can be shown that
\begin{IEEEeqnarray}{l}
P_{E_r|H}\leq 2\epsilon+2^{3l\epsilon} 2^{-L(\hat{N}I(X_{S,t};Y_{R,t})-NR)}\leq 3\epsilon,
\end{IEEEeqnarray}
for sufficiently large $L$ and any $\epsilon>0$.
\end{rem}
Using this fact in equation \eqref{bayes1} and averaging both sides with respect to channel coefficients, in the high SNR limit we get
\begin{equation}
\label{asymptotic1}
P_{E}~\dot{\leq} ~\mathbb{E}_H\left(P_{E|E_r^c,H}\right) ~\dot{=} ~P_{E|E_r^c}.
\end{equation}
By the preceding argument, after $\hat{l}$ channel uses the relay node can decode the source message, where $\hat{l}=fl<l$. Suppose the relay node encodes the message into a codeword from its own codebook and starts transmitting it from the $(\hat{l}+1)$-th symbol. Thus, for the first $\hat{l}$ channel uses the relay channel essentially behaves like an $m\times n$ point-to-point channel and for the rest $(l-\hat{l})$ channel uses it behaves like an $(m+k)\times n$ point-to-point channel. Since the source and the relay use independent random Gaussian codes, averaging over the ensemble of random Gaussian codes, it can be easily proved~\cite{TB} that the pairwise error probability for a given channel realization, $P_{PE|E_r^c,H}$ is upper bounded as follows:
\begin{IEEEeqnarray}{l}
\label{eq_pep_1}
P_{PE|E_r^c,H}\leq \det\left(I_n+\frac{\rho}{2n}H_{SD}^{\dagger}H_{SD}\right)^{-\hat{l}}
\det\left(I_n+\frac{\rho}{2n}(H_{SD}^{\dagger}H_{SD}+H_{RD}^{\dagger}H_{RD})\right)^{-(l-\hat{l})}
\end{IEEEeqnarray}
\begin{rem}
The subtle difference between $P_{PE|E_r^c,H}$ and $P_{E|E_r^c,H}$, as defined in \eqref{eq_average_error_probability} should be noted. In the former, the averaging within a particular codebook is not done. However, the two can be related through the well known union bound as follows
\begin{equation*}
   P_{E|E_r^c,H}\leq |\mathcal{C}| P_{PE|E_r^c,H},
\end{equation*}
where $|\mathcal{C}|$ represents the cardinality of the codebooks.
\end{rem}
Recall that the cardinality of the codebooks were assumed to be $2^{rl\log(\rho)}={\rho}^{lr}$. Thus, using the union bound of probability of error in equation \eqref{eq_pep_1} we get
\begin{IEEEeqnarray}{rl}
\label{pairwise_error3}
P_{E|E_r^c,H}\leq {\rho}^{lr} P_{PE|E_r^c,H} =&\left[\det\left(I_n+\frac{\rho}{2n}H_{SD}^{\dagger}H_{SD}\right)^{-f}\right. \nonumber\\
& \left.  ~~~~~\times {\rho}^{r} \det \left(I_n+\frac{\rho}{2n}(H_{SD}^{\dagger}H_{SD}+H_{RD}^{\dagger}H_{RD})\right)^{-(1-f)}\right]^l
\end{IEEEeqnarray}
Now, if we define $\mathcal{O}$ in the following way,
\begin{IEEEeqnarray}{rl}
\label{outage_definition}
\mathcal{O}\triangleq\Bigg\{(\bar{\gamma},H_{SD},H_{RD}):& I(H) \triangleq f\log\left(\det(I_n+\frac{\rho}{2n}H_{SD}^{\dagger}H_{SD})\right)  \nonumber\\
&  +(1-f)\log\left(\det (I_n+\frac{\rho}{2n}(H_{SD}^{\dagger}H_{SD}+H_{RD}^{\dagger}H_{RD}))\right)\leq r\log({\rho})\Bigg\},
\end{IEEEeqnarray}
then it is evident from equation \eqref{pairwise_error3} that
\begin{equation}
\label{eq_prob_on_ontage_complement}
P_{E|\mathcal{O}^c,E_r^c}\to 0, ~\textrm{as} ~l\to \infty.
\end{equation}
Finally, from equation \eqref{asymptotic1} we have
\begin{IEEEeqnarray}{rl}
P_{E} ~\dot{\leq}&~P_{E|E_r^c},\nonumber\\
\stackrel{(a)}{=}&~ P_{E|\mathcal{O}^c,E_r^c}\Pr(\mathcal{O}^c)+P_{E|\mathcal{O},E_r^c} \Pr(\mathcal{O}),\nonumber\\
\leq &~P_{E|\mathcal{O}^c,E_r^c}\Pr(\mathcal{O}^c)+\Pr(\mathcal{O}), \nonumber\\
\label{p_upper_0}
\dot{\leq}& ~\Pr(\mathcal{O}),
\end{IEEEeqnarray}
where step $(a)$ follows from Bayes' rule and in the last step we used equation \eqref{eq_prob_on_ontage_complement}. Since $P_E$ represents the average probability of error averaged over ensemble of codes, there exist a code for which \eqref{p_upper_0} is true. Denoting the average probability of error for such a code by $P_e$ where the averaging is now over only the fading states, we have
\begin{equation}
\label{p_upper}
P_E^*\stackrel{(a)}{\leq}P_e\dot{\leq} \Pr(\mathcal{O})
\end{equation}
where step $(a)$ in the above equation follows from the fact that $P^*_E$ represents the minimum probability of error among all possible coding schemes and in the preceding analysis we have only considered Gaussian codes. The above equation establishes an upper bound on $P^*_E$. Next we derive a lower bound on $P^*_E$.

\vspace{2mm}

\begin{itemize}
\item {\it Converse ($P_E^*\dot{\geq} \Pr(\mathcal{O})$):}
\end{itemize}

Consider a genie aided relay channel where the genie gives the source message to the relay node after $\hat{l}$ channel uses. In the presence of such a genie the relay channel becomes a composite point-to-point channel, where for the latter $(1-f)$ fraction of the codeword, the relay and the source node together acts as the composite source. Clearly, $I(H)$ represents the mutual information between the source and the destination node and consequently $\mathcal{O}$ represents the outage event of the genie aided composite point-to-point MIMO channel. Thus using Fano's inequality as in~\cite{tse1} and the fact that the real system has a larger error probability than the genie-aided one, we get
\begin{equation}
\label{p_lower}
\Pr(\mathcal{O}) ~\dot{\leq } ~P^*_e(\textrm{genie})=\min_{\textrm{all coding schems}}P_e(\textrm{genie}) ~\dot{\leq } \min_{\textrm{all coding schems}}~P_e~= ~P^*_E,
\end{equation}
where $P_e$ and $P_e(\textrm{genie})$ represent the probability of error of the actual and genie aided system for any particular coding scheme. Finally, combining (\ref{p_lower}) and (\ref{p_upper}) we get
\begin{equation}
\label{eq_th1_case1}
    P_E^*\dot{=}\Pr\left(\mathcal{O}\right),~\textrm{for}~f<1.
\end{equation}

Next we consider the case when $f=1$. From the definition of $f$ in equation \eqref{ratio} we know that when $f=1$, the relay node does not take any part in the communication from the source to the destination. In this case, the relay channel becomes a point-to-point MIMO channel. It was shown in~\cite{tse1} that $P_E^*$ of such a channel is exponentially equal to the corresponding outage probability. Putting $f=1$ in our definition of $\mathcal{O}$ we get
\begin{equation}
\label{eq_outage_event_f1}
\mathcal{O}_{f=1}=\left\{H_{SD}: \log\left(\det(I_n+\frac{\rho}{2n}H_{SD}^{\dagger}H_{SD})\right) < r\log(\rho)\right\}.
\end{equation}
This is same as the outage event defined in~\cite{tse1} for a point-to-point channel having channel matrix $H_{SD}$ and thus using the result of \cite{tse1} we get
\begin{equation}
    P_E^*\dot{=}\Pr\left(\mathcal{O}\right),~\textrm{for}~f=1.
\end{equation}
Finally, combining the last equation with equation \eqref{eq_th1_case1} we get the following Theorem.

\begin{thm}
\label{thm_exp_equality}
The minimum (among all coding schemes) probability of codeword error, $P^*_E$ of the DDF protocol is exponentially equal to the probability of the event $\mathcal{O}$ defined in (\ref{outage_definition}), i.e.
\begin{equation}
\label{outage_theorem_equation}
P^*_E   ~\dot{=} ~~\Pr(\mathcal{O})
\end{equation}
\end{thm}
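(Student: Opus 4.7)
The plan is to establish the exponential equality $P_E^* \dot{=} \Pr(\mathcal{O})$ by proving matching upper and lower bounds, split according to whether the relay participates ($f<1$) or not ($f=1$).

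For the achievability direction with $f<1$, I would let source and relay draw independent Gaussian codebooks of cardinality $\rho^{lr}$ and condition on the complement $E_r^c$ of the relay-decoding error event. A Bayes decomposition gives $P_{E|H} \le P_{E|E_r^c,H} + P_{E_r|H}$. The second term $P_{E_r|H}$ vanishes for large block length on the $m\times k$ source-to-relay PtP channel because, by the definition of $\hat{l}$ in \eqref{ratio}, the rate $R$ is supportable on that link whenever $f<1$; this can be made precise via a typical-set decoder at the relay along the lines of \cite{CT}. For $P_{E|E_r^c,H}$, I would combine the standard Gaussian pairwise-error bound over the two sub-blocks (an $m\times n$ PtP channel for the first $\hat{l}$ symbols and an $(m+k)\times n$ PtP channel for the remaining $l-\hat{l}$ symbols) with a union bound over the $\rho^{lr}$ codewords. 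This shows that whenever the compound channel mutual information $I(H)$ exceeds $r\log\rho$, i.e. on $\mathcal{O}^c$, the conditional error probability decays with $l$. Averaging over the fading and applying Bayes again gives $P_E \dot{\leq} \Pr(\mathcal{O})$, and since $P_E^*$ minimizes over all schemes we conclude $P_E^* \dot{\leq} \Pr(\mathcal{O})$.

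For the converse with $f<1$, I would employ the standard genie argument: a genie hands the source message to the relay at time $\hat{l}$, so the augmented channel is literally a PtP MIMO channel whose mutual information is exactly $I(H)$. Fano's inequality applied to this genie-aided PtP channel, as in the Zheng--Tse argument \cite{tse1}, gives $P_e^*(\text{genie}) \dot{\geq} \Pr(\mathcal{O})$; since revealing extra side information to the relay can only reduce error probability, the original $P_E^*$ is at least as large. This yields $P_E^* \dot{\geq} \Pr(\mathcal{O})$. Combined with the achievability bound, \eqref{eq_th1_case1} follows.

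For the case $f=1$, the definition of $\hat{l}$ forces the relay to be silent for the entire codeword, reducing the system to a bare $m\times n$ point-to-point MIMO channel with matrix $H_{SD}$. Specializing $\mathcal{O}$ recovers precisely the outage set of \cite{tse1}, and the Zheng--Tse result directly gives $P_E^* \dot{=} \Pr(\mathcal{O}_{f=1})$. Stitching the two cases together via the common definition of $\mathcal{O}$ yields the theorem. I expect the main subtlety to lie in the $f<1$ achievability step, specifically in carefully accounting for the random decision time $\hat{l}$ and the random split of the codeword when bounding $P_{E_r|H}$ and the pairwise error probability --- one must be careful that the asymptotic in $l\to\infty$ is uniform enough across channel realizations (with $f<1$) that averaging over $H$ still preserves the SNR exponent, as in the remark following \eqref{bayes1}.
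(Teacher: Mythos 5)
Your proposal follows the paper's proof essentially step for step: the same Bayes decomposition conditioning on the relay-decoding error event, the same typical-set argument (via \cite{CT}) to kill $P_{E_r|H}$ when $f<1$, the same Gaussian pairwise-error bound over the two sub-blocks combined with a union bound over $\rho^{lr}$ codewords for achievability, the same genie-aided composite point-to-point channel with Fano's inequality for the converse, and the same reduction to the Zheng--Tse point-to-point result when $f=1$. The subtlety you flag about uniformity of the $l\to\infty$ limit across channel realizations is exactly the issue the paper addresses in the remark following \eqref{bayes1}, so nothing is missing.
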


\subsection{SNR exponent of $\Pr(\mathcal{O})$ }

In what follows we shall refer to $\mathcal{O}$ as the outage event and $\Pr(\mathcal{O})$ as the outage probability. By definition \eqref{def_optimal_diversity_order} and Theorem~\ref{thm_exp_equality} it is clear that the the negative SNR exponent of the outage probability is equal to the optimal diversity order of the DDF protocol, i.e.,
\begin{equation}\label{eq_alternate_def_diversity_order}
    d^*(r)=\lim_{\rho \to \infty}\frac{-\log\left(\Pr(\mathcal{O})\right)}{\log(\rho)}
\end{equation}
For asymptotically high value of $\rho$, $I(H)$ can be written as
\begin{IEEEeqnarray}{rl}
\label{mutual_information}
I(H)\dot{=}&~f\log\left(\det\left(I_n+\rho H_{SD}H_{SD}^{\dagger}\right)\right)+ (1-f)\log\left(\det\left(I_n+\rho (H_{SD}H_{SD}^{\dagger}+H_{RD}H_{RD}^{\dagger})\right)\right),\nonumber\\
\dot{=}&~\log\left(\det(I_n+\rho H_{SD}H_{SD}^{\dagger})\right)+(1-f) \log\left(\det (I_n+\rho H_{RD}H_{RD}^{\dagger}
(I_n+\rho H_{SD}H_{SD}^{\dagger})^{-1} )\right)\nonumber, \nonumber\\
\label{mutual_information}
\dot{=}&~\log\left(\det(I_n+\rho H_{SD}H_{SD}^{\dagger})\right)+(1-f) \log\left(\det (I_k+\rho H_{RD}^{\dagger}
(I_n+\rho H_{SD}H_{SD}^{\dagger})^{-1} H_{RD})\right)\nonumber.
\end{IEEEeqnarray}
Note that in the above expression $f$ depends on $\bar{\gamma}$ through equation~\eqref{eq_f_dependence_on_channel}. The distribution of $\bar{\gamma}$ for asymptotic $\rho$ is given by~\cite{tse1}
\begin{equation}
\label{gamma_distribution}
\mathbf{h}(\bar{\gamma})\dot{=}\left\{\begin{array}{cc}
{\rho}^{-\sum_{i=1}^{t}(k+m-2i+1)\gamma_i},~&~\textrm{if}~\bar{\gamma}\in\mathcal{D};\\
0,&~\textrm{if}~\bar{\gamma}\notin\mathcal{D},
\end{array}\right.
\end{equation}
where $\mathcal{D}=\{ 0\leq \gamma_1\leq \gamma_2\leq \cdots \gamma_t\}$. Putting $H_2=H_{SD}$ and $H_1=H_{RD}$ in Theorem~\ref{thm_eigenvalue_distribution_2}, the above expression can be written as

\begin{eqnarray}
\label{alphabeta1}
I(H)\dot{=}\left(\sum_{i=1}^{p}(1-\alpha_i)^++(1-f)\sum_{j=1}^{q}(1-\beta_j)^+\right)\log(\rho),
\end{eqnarray}
where the joint pdf of $\bar{\alpha}$ and $\bar{\beta}$ is given by equation~(\ref{distribution2}).
Substituting this equivalent expression for $I(H)$ into the definition of the outage event we see that the outage probability $\Pr(\mathcal{O})$ depends on the different channel matrices only through the joint distribution of $\bar{\alpha},\bar{\beta}$ and $\bar{\gamma}$. Further, since $\bar{\gamma}$ is independent\footnote{ Because $\bar{\gamma}$ is a function of $H_{SR}$ whereas $(\bar{\alpha},\bar{\beta})$ is a function of $H_{RD}$ and $H_{SD}$ only, and does not depend on $H_{SR}$} of  $(\bar{\alpha},\bar{\beta})$, the outage probability can be written as
\begin{equation}
\label{outage_integration}
\Pr(\mathcal{O})=\int_{(\bar{\alpha},\bar{\beta},\bar{\gamma})\in \mathcal{O}} \mathbf{g}(\bar{\alpha},\bar{\beta})\mathbf{h}(\bar{\gamma}) \,\mathrm{d} \bar{\alpha}\,\mathrm{d} \bar{\beta}\,\mathrm{d} \bar{\gamma},
\end{equation}
where $\mathcal{O}$ is given, using (\ref{outage_definition}) and (\ref{alphabeta1}), as
\begin{IEEEeqnarray}{rl}
\label{outage_definition2}
\mathcal{O}=\Bigg\{(\bar{\alpha},\bar{\beta},\bar{\gamma}):\left(\sum_{i=1}^{p}(1-\alpha_i)^++(1-f)\sum_{j=1}^{q}(1-\beta_j)^+\right)~\leq& r;\\
0\leq \min\left\{1,\frac{r}{\sum_{l=1}^{k}(1-\gamma_l)^+}\right\}=&f;\Bigg\}.
\end{IEEEeqnarray}
Finally, evaluating the integral in equation \eqref{outage_integration} we get the following theorem.
\begin{thm}
\label{thm_optimization_problem}
The optimal diversity order, $d^*(r)$, of the DDF protocol at any multiplexing gain $r$ is given by
\begin{equation}
\label{eq_thm_opt_prob_eq1}
d^*(r)=\min \left\{\left(\hat{d}(r)+\varphi(r,t)\right),\Big(d_{m,n}(r)+d_{k,m}(r)\Big)\right\},~\textrm{for}~0\leq r\leq \min\{m,n\},
\end{equation}
where the $\varphi(\cdot,\cdot)$ function is defined as in (\ref{varphi-fn}), $d_{m,n}(r)$ represents the diversity order of a MIMO PtP channel at a multiplexing gain of $r$~\cite{tse1} and
\begin{IEEEeqnarray*}{rl}
\hat{d}(r)= &\min_{1\leq i\leq 3}~\min_{\left\{y\in \mathcal{R}_i,~b\in \mathcal{B}_i(y)\right\}} F\left(\phi_{\alpha}\left(r-b\left(1-\frac{r}{y}\right)\right), \phi_{{\beta}}(b), \phi_{{\gamma}}\left(y\right) \right),\\
\mathcal{B}_1(y)=&\left[0,\frac{y(n-r)}{r}\right];~
\mathcal{B}_2(y)=\left[0,q\right];~
\mathcal{B}_3(y)=\left[0, \frac{ry}{(y-r)}\right],\\
\mathcal{R}_1=&\left(r, \frac{qr}{(n-r)}\right];
~\mathcal{R}_2=\left(\frac{qr}{(n-r)}, \frac{qr}{(q-r)}\right];
~\mathcal{R}_3=\left(\frac{qr}{(q-r)}, t\right],
\end{IEEEeqnarray*}
with
\begin{equation}
F\left(\bar{\alpha},\bar{\beta},\bar{\gamma}\right) \stackrel{\Delta}{=} E \left(\bar{\alpha},\bar{\beta}\right) +\sum_{i=1}^{t}(k+m-2i+1)\gamma_i
\end{equation}
and the vectors $\phi_\alpha (\cdot ) $, $\phi_\beta (\cdot ) $ and $\phi_\gamma (\cdot ) $ defined in Appendix C in equations (\ref{eq_phi_alpha}), (\ref{eq_phi_beta}) and (\ref{eq_phi_gamma}), respectively.

\end{thm}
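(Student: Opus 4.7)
The plan is to evaluate $\Pr(\mathcal{O})$ from equation (\ref{outage_integration}) by Laplace's principle at high SNR and to split the analysis into two exhaustive cases dictated by the formula (\ref{eq_f_dependence_on_channel}) for $f$: namely $f=1$ (relay silent) and $f<1$ (relay successfully decodes and retransmits). By Varadhan-type asymptotics, the integrand $\mathbf{g}(\bar{\alpha},\bar{\beta})\mathbf{h}(\bar{\gamma}) \doteq \rho^{-[E(\bar{\alpha},\bar{\beta})+\sum_{i=1}^{t}(k+m-2i+1)\gamma_i]}$ so each case's contribution is exponentially $\rho^{-d_{\text{case}}}$, where $d_{\text{case}}$ is the infimum of the combined exponent over the feasible region, and $d^*(r)$ is the minimum of the two.

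For the case $f=1$, equation (\ref{eq_f_dependence_on_channel}) forces $\sum_{l=1}^{t}(1-\gamma_l)^+ \leq r$, meaning $H_{SR}$ is itself in outage. The outage event then reduces to an outage on the direct $m\times n$ link alone, i.e.~$\sum_{i=1}^{p}(1-\alpha_i)^+ \leq r$. Since $H_{SR}$ is independent of $(H_{SD},H_{RD})$, the probability factors exponentially, and using the PtP result of \cite{tse1} for each link yields the second term $d_{m,n}(r)+d_{k,m}(r)$. For the case $f<1$, the variable $y:=\sum_{l=1}^{t}(1-\gamma_l)^+$ must satisfy $y>r$; this in turn forces $r<t$, which is captured by the $\varphi(r,t)$ term that is $+\infty$ whenever $r\geq t$ so the case is ruled out. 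With $f=r/y$, the outage constraint becomes
\begin{equation*}
\sum_{i=1}^{p}(1-\alpha_i)^+ + \frac{y-r}{y}\,\sum_{j=1}^{q}(1-\beta_j)^+ \leq r.
\end{equation*}

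The next step is to reparameterize by the scalar aggregates $(y,b)$ where $b:=\sum_{j=1}^{q}(1-\beta_j)^+$, and to perform the minimization in two nested stages. The inner stage fixes $(y,b)$ and also $\sum_{i=1}^{p}(1-\alpha_i)^+=r-b(1-r/y)$ (the outage boundary is tight at the optimum), and minimizes $E(\bar{\alpha},\bar{\beta})+\sum_i(k+m-2i+1)\gamma_i$ over $(\bar{\alpha},\bar{\beta},\bar{\gamma})\in\mathcal{A}\times\mathcal{D}$. Because the coefficients $(k+m-2i+1)$ and $(N_1+N_2-2j+1)$ are decreasing in the index and the coordinates are ordered, the minimizing allocation concentrates the ``deviation budget'' on the highest-index coordinates; this yields explicitly the component vectors $\phi_\alpha, \phi_\beta, \phi_\gamma$ of Appendix C, and the optimized inner value is $F(\phi_\alpha(\cdot),\phi_\beta(b),\phi_\gamma(y))$. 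The outer stage minimizes this expression over $(y,b)$ subject to $y\in(r,t]$ and three upper bounds on $b$: the geometric bound $b\leq q$, the feasibility bound $b\leq ry/(y-r)$ keeping $r-b(y-r)/y\geq 0$, and the bound $b\leq y(n-r)/r$ coming from $\sum(1-\alpha_i)^+\leq n-r$ with $\alpha_i\in[0,1]^p$. Comparing these bounds as $y$ sweeps $(r,t]$ produces exactly the three subintervals $\mathcal{R}_1,\mathcal{R}_2,\mathcal{R}_3$ and the corresponding $\mathcal{B}_i(y)$.

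The main obstacle will be verifying that the separable minimizing vectors $\phi_\alpha,\phi_\beta,\phi_\gamma$ are simultaneously consistent with the coupling constraint $\alpha_i+\beta_j\geq 1$ for $i+j\geq N_2+1$ in $\mathcal{A}$; this requires checking that the inner minimizers — which one would like to choose independently on $\bar{\alpha}$ and $\bar{\beta}$ — automatically satisfy the coupling, so that the decoupled minimization agrees with the coupled one. The second non-trivial piece is delineating the three $(y,b)$ regions rigorously by comparing the three upper bounds on $b$ as a function of $y$, but this is a routine algebraic exercise once the optimizers are identified. Combining the resulting $\hat{d}(r)$ for $f<1$ (with the $\varphi(r,t)$ term to enforce feasibility) with $d_{m,n}(r)+d_{k,m}(r)$ for $f=1$, and taking the minimum, delivers equation (\ref{eq_thm_opt_prob_eq1}).
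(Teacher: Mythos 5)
Your proposal follows essentially the same route as the paper's proof: the identical $f=1$ versus $f<1$ decomposition producing the $d_{m,n}(r)+d_{k,m}(r)$ and $\varphi(r,t)$ terms, Laplace asymptotics on the joint exponent, reduction to the scalar aggregates via the $\phi_\alpha,\phi_\beta,\phi_\gamma$ vectors, and the three $y$-regimes obtained by comparing the upper bounds on $b$. The coupling constraint you flag as the main obstacle is resolved exactly as you anticipate: for the $\phi$-shaped minimizers the constraints $\hat{\alpha}_i+\hat{\beta}_j\geq 1$ for $i+j\geq n+1$ are equivalent to $a+b\leq n$, and it is this inequality (intersected with the outage line $a+b(1-r/y)=r$, rather than a bound on $a$ alone) that yields $b\leq y(n-r)/r$ and hence $\mathcal{B}_1(y)$.
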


\begin{proof}[Proof~(Outline)]
The probability of outage can be expressed as
\begin{equation}
\label{eq_pf_optimization_problem_t1}
\Pr\left(\mathcal{O}\right)=\Pr\left(\mathcal{O}|f<1\right)\Pr\left(f<1\right)+\Pr\left(\mathcal{O}|f=1\right)\Pr\left(f=1\right).
\end{equation}
Using the result from \cite{tse1} in equation~\eqref{eq_outage_event_f1} we get
\begin{equation}
\label{eq_pf_optimization_problem_t2}
\Pr\left(\mathcal{O}|f=1\right)=\Pr\left\{\sum_{i=1}^{p}(1-\alpha_i)^+\leq r\right\}\dot{=}\rho^{-d_{m,n}(r)}.
\end{equation}
From the definition of $f$ in equation \eqref{ratio} and equation \eqref{gamma_distribution}, we get
\begin{equation}
\label{eq_pf_optimization_problem_t3}
\Pr\left(f=1\right)=\Pr\left\{\sum_{l=1}^{t}(1-\gamma_l)^+\leq r\right\}\dot{=}\rho^{-d_{m,k}(r)}.
\end{equation}
Now, using the fact that $f\in [0,1]$ with equation \eqref{eq_pf_optimization_problem_t3} we get
\begin{equation}
\label{eq_pf_optimization_problem_t4}
\Pr\left(f<1\right)=1-\Pr\left(f=1\right)\dot{=}1-\rho^{-d_{m,k}(r)}\dot{=}\rho^{-\varphi(r,t)},
\end{equation}
because for $r\geq t,~d_{m,k}(r)=0$. Finally, denoting the negative SNR exponent of $\Pr\left(\mathcal{O}|f<1\right)$ by $\hat{d}(r)$, i.e., $\Pr\left(\mathcal{O}|f<1\right)\dot{=}\rho^{-\hat{d}(r)}$, and combining it with equations \eqref{eq_pf_optimization_problem_t1}, \eqref{eq_pf_optimization_problem_t2}, \eqref{eq_pf_optimization_problem_t3} and \eqref{eq_pf_optimization_problem_t4} we get
\begin{equation*}
\Pr\left(\mathcal{O}\right)\dot{=}\rho^{-\Big(\hat{d}(r)+\varphi(r,t)\Big)}+\rho^{-\Big(d_{m,n}(r)+d_{k,m}(r)\Big)},
\end{equation*}
which imply
\begin{equation*}
d^*(r)=\min \left\{\left(\hat{d}(r)+\varphi(r,t)\right),\Big(d_{m,n}(r)+d_{k,m}(r)\Big)\right\}.
\end{equation*}
To complete the proof it is only necessary to compute $\hat{d}(r)$, for which we need to evaluate the integral in equation \eqref{outage_integration} under the constraint $f<1$. This integral can be evaluated using Laplace's method of integration as in \cite{tse1} to get $\hat{d}(r)$ as the minimum value of the negative SNR exponent of the pdf of $(\bar{\alpha},\bar{\beta},\bar{\gamma})$, minimized over the intersection of the outage set and the support set of the pdf. Evaluating this minimum value directly is not prescribed for two reasons: 1) it is not a standard convex optimization problem; and 2) the number of optimizing variables increase linearly with the number of antennas at all the nodes (i.e., with $(m+k+n)$). To overcome these problems we first transform the original minimization problem into an equivalent optimization problem having only three variables and then, analyzing it further, we eventually get the much simpler convex optimization problem given in the theorem statement involving only two variables.

This is done in Appendix~\ref{pf_thm_optimization_problem}.
\end{proof}

Note that the main step in computing the DMT of the DDF protocol for any given antenna configuration is the computation of $\hat{d}(r)$. We illustrate this step by an example.

\begin{ex}[DMT on the $(1,1,1)$ channel]
\label{ex_dmt_111_ch}
Putting $n=m=k=1$ in the expression for $F$ we get
\begin{equation*}
G(a,b,y)=F\left(\phi_{\alpha}(a),\phi_{\beta}(b),\phi_{\gamma}(y)\right)=2(1-r)+2b\left(1-\frac{r}{y}\right)-b+(1-y).
\end{equation*}
From Theorem~\ref{thm_optimization_problem} we know that the above objective function has to be minimized over three different set of $(a,b)$ pairs depending on the value of $y$.
Since
\begin{equation*}
\left(1-\frac{r}{y}\right)\leq \frac{1}{2}~\forall~y~\in \mathcal{R}_1,
\end{equation*}
the objective function attains its minimum at the maximum value of $b$ in $\mathcal{B}(y)$, i.e., $b^*=y\left(\frac{1-r}{r}\right)$. Putting this into the objective function we have
\begin{equation}
\label{eq_ex1_temp1}
G(a^*,b^*,y)=y\left(\frac{1-r}{r}\right)+(1-y)=1+y\left(\frac{1-2r}{r}\right).
\end{equation}
It is clear that the optimal value of $y\in \mathcal{R}_1=\left(r,\frac{r}{1-r}\right]$ that minimizes the above function is given as
\begin{equation*}
y^*=\left\{\begin{array}{cc}
r,&\textrm{for}~r\leq \frac{1}{2};\\
1,&\textrm{for}~r\geq \frac{1}{2}.
\end{array}\right.
\end{equation*}
Putting this solution in equation \eqref{eq_ex1_temp1} we get
\begin{equation}
\label{eq_dmt_111_R1}
d\left(\mathcal{R}_1\right)=\left\{\begin{array}{cc}
2(1-r),&\textrm{for}~0\leq r\leq \frac{1}{2};\\
\left(\frac{1-r}{r}\right),&\textrm{for}~\frac{1}{2}\leq r\leq 1.
\end{array}\right.
\end{equation}
Since for $m=n=k$, $\mathcal{R}_2=\Phi$, next we consider the case when $y\in \mathcal{R}_3=\left(\frac{r}{1-r},1\right]$. This set is non-empty only for $r\leq \frac{1}{2}$ and the optimal point lies on the line segment DF in Figure~\ref{y_range-c}. Dividing the set $\mathcal{R}_3$ further into two subsets $\mathcal{R}_{31}=(r,2r]$ and $\mathcal{R}_{32}=(2r,1]$ we see that $\left(1-\frac{r}{y}\right)\leq \frac{1}{2}$ when $y\in \mathcal{R}_{31}$ and $\left(1-\frac{r}{y}\right)\geq \frac{1}{2}$ when $y\in \mathcal{R}_{32}$. The objective function attains its minimum value at point F, where $b$ is maximum when $y\in \mathcal{R}_{31}$ and at point D, where $b$ is minimum when $y\in \mathcal{R}_{32}$ and given by
\begin{equation*}
G(a^*,b^*,y)=\left\{\begin{array}{cc}
3-\frac{yr}{y-r}-y,&\textrm{for}~y\in \mathcal{R}_{31};~\left[b^*=\frac{yr}{y-r}\right]\\
2(1-r)+(1-y),&\textrm{for}~\textrm{for}~y\in \mathcal{R}_{32},~\left[b^*=0\right]
\end{array}\right.
\end{equation*}
Now, optimizing this function in the corresponding sets of $y$ we get
\begin{equation*}
G(a^*,b^*,y^*)=\left\{\begin{array}{cc}
3-4r,&\textrm{for}~y\in \mathcal{R}_{31};~\left[\because ~y^*=2r\right]\\
2(1-r),&\textrm{for}~\textrm{for}~y\in \mathcal{R}_{31};~\left[\because ~y^*=1\right]
\end{array}\right.
\end{equation*}
which in turn implies
\begin{equation}
\label{eq_dmt_111_R3}
d\left(\mathcal{R}_3\right)=2(1-r),~\textrm{for}~0\leq r\leq \frac{1}{2}.
\end{equation}
Combining equations \eqref{eq_dmt_111_R1} and \eqref{eq_dmt_111_R3} we get
\begin{equation*}
\hat{d}(r)=\min \left\{d\left(\mathcal{R}_0\right),d\left(\mathcal{R}_1\right),d\left(\mathcal{R}_3\right)\right\}=\left\{\begin{array}{cc}
2(1-r),&\textrm{for}~0\leq r\leq \frac{1}{2};\\
\left(\frac{1-r}{r}\right),&\textrm{for}~\frac{1}{2}\leq r\leq 1.
\end{array}\right.
\end{equation*}
Using this in Theorem~\ref{thm_optimization_problem} we see that the optimal diversity order of the DDF protocol on a $(1,1,1)$ channel is given by $\hat{d}(r)$, thereby recovering the result of \cite{KHP}.
\end{ex}

\section{Closed form expressions for the DMT of a few simple relay channels}
\label{sec_closed_form_dmt_calculations}
In this section, we shall provide closed form expressions for the DMT of the DDF protocol for three more general channel configurations (than the previous example), namely, the $(n,1,n)$ channel, the $(1,k,1)$ channel (for $k\geq 1$) and the $(2,k,2)$ channel (for $k\geq 2$) by solving the optimization problem in Theorem \ref{thm_optimization_problem}.

\subsection{DMT of the $(n,1,n)$ channel}
\begin{thm}
\label{thm_analytical1}
The optimal DMT of the DDF protocol on a $(n,1,n)$ half-duplex relay channel for multiplexing gains $r$ is
\begin{equation}
\label{eq_dmt_n1n_rlessthan1}
d^*_1(r)=\left\{\begin{array}{cc}
\frac{1-r}{\max\{\frac{1}{2},r\}},&~0\leq r\leq 1~\textrm{and}~ n=1;\\
(n-1)^2+(3n-1)(1-r),&~0\leq r\leq 1~\textrm{and}~ n\geq 2;\\
d_{n,n}(r),&~1\leq r\leq n~\textrm{and} ~n\geq 2.
\end{array}\right.
\end{equation}
\end{thm}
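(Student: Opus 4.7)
The plan is to specialize Theorem~\ref{thm_optimization_problem} to the antenna configuration $(n, 1, n)$, so that $t = \min\{k, m\} = 1$, $p = n$, and $q = 1$. Under this specialization $\bar{\beta}$ collapses to a single scalar, the second argument of the outer minimum in \eqref{eq_thm_opt_prob_eq1} simplifies to $d_{n,n}(r) + d_{1,n}(r)$, and I compare this to $\hat{d}(r) + \varphi(r, 1)$ over the two natural ranges $r \in [1, n]$ and $r \in [0, 1]$.

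For $r \in [1, n]$, the step function $\varphi(r, 1) = +\infty$ removes the first argument of the outer minimum, and $d_{1,n}(r) = n(1-r)^+ = 0$, so $d^*_1(r) = d_{n,n}(r)$, which is the third case of the theorem. Continuity at $r = 1$ is automatic since $d_{n,n}(1) = (n-1)^2$ matches the value of $(n-1)^2 + (3n-1)(1-r)$ at $r = 1$.

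For $r \in [0, 1]$, $\varphi(r, 1) = 0$, so the question reduces to comparing $\hat{d}(r)$ against $d_{n,n}(r) + d_{1,n}(r)$. Using the point-to-point formula $d_{n,n}(r) = n^2 - (2n-1)r$ on $[0, 1]$ and $d_{1,n}(r) = n(1-r)$, a short simplification yields $d_{n,n}(r) + d_{1,n}(r) = (n-1)^2 + (3n-1)(1-r)$, which is exactly the target for $n \geq 2$. For $n = 1$ the target reduces to $2(1-r)$, and Example~\ref{ex_dmt_111_ch} already supplies $\hat{d}(r)$; the minimum of $\hat{d}(r)$ and $2(1-r)$ is $(1-r)/\max\{1/2, r\}$, recovering the first case.

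The remaining task is therefore to show that, for $n \geq 2$, the value $\hat{d}(r)$ obtained from the two-variable optimization in Theorem~\ref{thm_optimization_problem} is bounded below by $(n-1)^2 + (3n-1)(1-r)$ on $[0, 1]$. Since $q = 1$, the objective $F(\phi_\alpha(\cdot), \phi_\beta(b), \phi_\gamma(y))$ is piecewise affine in the scalar $b$ on each of the three sub-regions $\mathcal{R}_i$, so the plan is to minimize first over $b$ at the vertices of $\mathcal{B}_i(y)$ and then reduce to a one-dimensional minimization over $y \in \mathcal{R}_i$. The main obstacle I anticipate is the positive-part contribution $\sum_{j,i} (1 - \alpha_i - \beta_j)^+$ appearing in $E(\bar{\alpha}, \bar{\beta})$, which introduces kinks that determine which vertex of $\mathcal{B}_i(y)$ is optimal; tracking these carefully across the three regions should yield the required lower bound and confirm that the second argument of the outer minimum in \eqref{eq_thm_opt_prob_eq1} is the binding one for $n \geq 2$.
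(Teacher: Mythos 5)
Your reduction is sound and matches the paper's strategy: specializing Theorem~\ref{thm_optimization_problem} to $p=n$, $q=t=1$, disposing of $r\in[1,n]$ via $\varphi(r,1)=+\infty$, recovering $n=1$ from Example~\ref{ex_dmt_111_ch}, and verifying $d_{n,n}(r)+d_{1,n}(r)=(n-1)^2+(3n-1)(1-r)$ on $[0,1]$ are all correct, and your observation that only a lower bound $\hat{d}(r)\geq (n-1)^2+(3n-1)(1-r)$ is needed is a slight economy over the paper, which computes $\hat{d}(r)$ exactly. But the entire content of the theorem for $n\geq 2$ sits in that lower bound, and you have not established it: ``tracking these carefully across the three regions should yield the required lower bound'' is a plan, not a proof. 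The paper carries this out by writing $G(a,b,y)=n(n-1)+2n(1-a)+n(1-b)+n(1-y)-n+(a+b-1)^+$ on the constraint line $a+b(1-\tfrac{r}{y})=r$, noting $\mathcal{R}_1=\Phi$ for $r\leq 1$, locating the optimal $b$ on $\mathcal{R}_2$ and on the split $\mathcal{R}_{31}\cup\mathcal{R}_{32}$ of $\mathcal{R}_3$, and then minimizing the resulting one-variable functions of $y$ (the $\mathcal{R}_2$ branch attains the target as $y\to r$; the $\mathcal{R}_3$ branch gives $n(n-1)+2n(1-r)$, which dominates). None of this appears in your write-up.

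There is also a flaw in the plan itself. Along the constraint line the objective is \emph{convex} piecewise affine in $b$ (the affine part plus the convex kink from $(a+b-1)^+ = (r-1+b\,\tfrac{r}{y})^+$ at $b_0=\tfrac{(1-r)y}{r}$, which lies in the interior of $\mathcal{B}_2(y)=[0,1]$ for $y\in\mathcal{R}_2$). A convex piecewise-affine function can attain its minimum at that interior breakpoint rather than at a vertex: the slope below $b_0$ is $2n(1-\tfrac{r}{y})-n\leq 0$ while the slope above $b_0$ is $(2n-1)(1-\tfrac{r}{y})-(n-1)$, which is strictly positive for $1-\tfrac{r}{y}>\tfrac{n-1}{2n-1}$, a nonempty subrange of $\mathcal{R}_2$. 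Since you are proving a \emph{lower} bound on a minimum, evaluating only at the vertices of $\mathcal{B}_i(y)$ does not suffice; you must also verify the bound at $b_0$ (where one computes $G=n^2+ny\,\tfrac{1-2r}{r}$, which does satisfy the bound, so the result survives — but this check is missing from your plan, and is in fact also glossed over in the paper, which asserts linearity in $a,b$ on the segment). Until the region-by-region minimization, including the kink, is actually executed, the $n\geq 2$, $0\leq r\leq 1$ case of the theorem is not proved.
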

\begin{proof}
The proof is given in Appendix~\ref{pf_thm_analytical1}.
\end{proof}

\begin{remark}
The optimal DMT of the full-duplex decode-and-forward (FD-DF) protocol was derived in~\cite{YEE} and over a $(n,1,n)$ relay channel it is given as
\begin{equation*}
d_{DF}(r)=\left\{\begin{array}{c}\min\{d_{(n+1),n}(r), d_{n,n}(r)+d_{n,1}(r)\}, 0\leq r\leq 1;\\
d_{n,n}(r), ~1\leq r\leq n,\end{array}\right.
\end{equation*}
where $d_{n,m}(r)$ represents the diversity of a $n$-transmit, $m$-receive antenna MIMO channels diversity order at multiplexing gain of $r$. For $n\geq 2$, the DMT given by the above equation is {\em identical} with that given in Theorem~\ref{thm_analytical1}. Thus, when the relay has a single antenna and the source and destination have $n$ antennas each, the optimal DMT of the half-duplex DDF protocol and a FD-DF protocol are identical.
\end{remark}

\subsection{DMT of the $(1,k,1)$ channel}

\begin{thm}
\label{thm_analytical2}
The optimal DMT of the DDF protocol on a $(1,k,1)$ half-duplex relay channel is
\begin{equation}
d_2^*(r)~\triangleq \left\{\begin{array}{c}
(k+1)(1-r), ~~0\leq r\leq \frac{1}{k+1};\\
1+k(\frac{1-2r}{1-r}), ~\frac{1}{k+1}\leq r \leq \frac{1}{2};\\
\left(\frac{1-r}{r}\right), ~~\frac{1}{2}\leq r \leq 1.
\end{array}\right.
\end{equation}
\end{thm}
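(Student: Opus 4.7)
The plan is to specialize Theorem~\ref{thm_optimization_problem} to $m=n=1$ and read off the DMT from the resulting two-variable optimization. With these parameters one has $p=q=t=1$, so $\varphi(r,t)=0$ on $[0,1)$ and the second candidate in the outer minimum is $d_{m,n}(r)+d_{k,m}(r)=(1-r)+k(1-r)=(k+1)(1-r)$. Everything therefore reduces to computing $\hat{d}(r)$ in closed form and then taking a pointwise minimum with $(k+1)(1-r)$.

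First I would write the objective explicitly. Since $p=q=1$, the vectors $\phi_\alpha$, $\phi_\beta$, $\phi_\gamma$ reduce to the scalars $1-a$, $1-b$, $1-y$, and plugging into $F$ with $a=r-b(1-r/y)$ yields the two-variable function
\begin{equation*}
G(b,y;r)\;=\;(2k+1)-(k+1)r+b\left(1-\frac{(k+1)r}{y}\right)-ky.
\end{equation*}
The range $\mathcal{R}_2$ is empty (because $q=n$), so $\hat{d}(r)$ is the minimum of $G$ over $\mathcal{R}_1=(r,r/(1-r)]$ with $b\in[0,y(1-r)/r]$ and over $\mathcal{R}_3=(r/(1-r),1]$ with $b\in[0,ry/(y-r)]$, with the understanding that $y$ is implicitly capped at $t=1$ (which truncates $\mathcal{R}_1$ to $(r,1]$ and empties $\mathcal{R}_3$ whenever $r>1/2$).

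The optimization is then carried out in two layers. Fixing $y$, $G$ is linear in $b$, so $b^{*}$ sits at an endpoint of $\mathcal{B}_i(y)$ chosen by the sign of $1-(k+1)r/y$; after back-substitution, the resulting function of $y$ is affine on the $\mathcal{R}_1$ branch and concave on the $\mathcal{R}_3$ branch (with an interior maximum at $y=2r$), so it is minimized at an endpoint of the $y$-interval in each case. Splitting into the three natural regimes $0\le r\le 1/(k+1)$, $1/(k+1)\le r\le 1/2$, and $1/2\le r\le 1$, the $\mathcal{R}_1$ branch gives the values $(k+1)(1-r)$ (as $y\downarrow r$), $1+k(1-2r)/(1-r)$ (at $y=r/(1-r)$), and $(1-r)/r$ (at $y=1$), respectively; the $\mathcal{R}_3$ branch, whenever non-empty, turns out to yield only values already in this list, so no new candidates arise. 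A final comparison with $(k+1)(1-r)$ confirms that $\hat{d}(r)\le(k+1)(1-r)$ throughout (with equality on $[0,1/(k+1)]$), and the claimed three-piece expression for $d_2^{*}(r)$ follows.

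The main obstacle will be the bookkeeping: one has to track the signs of the coefficients of $b$ and $y$ across the sub-intervals delimited by the critical abscissas $y=(k+1)r$ and $y=r/(1-r)$, handle the truncation $y\le 1$ that is implicit when $r>1/2$, and verify the algebraic coincidences at the breakpoints $r=1/(k+1)$ and $r=1/2$ that make the three pieces fit together continuously. Continuity at these breakpoints serves as a convenient sanity check against slips in the case analysis.
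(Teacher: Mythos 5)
Your proposal is correct and follows essentially the same route as the paper's own proof: specialize Theorem~\ref{thm_optimization_problem} to $p=q=t=1$, note that $\mathcal{R}_2=\Phi$, reduce to the same two-variable objective $k+y\bigl(1-(k+1)r\bigr)/r$ after the linear-in-$b$ step, optimize over $y$ at the endpoints of $\mathcal{R}_1$ and $\mathcal{R}_3$ (with the implicit cap $y\leq t=1$), and observe that $\mathcal{R}_3$ contributes no new candidates. Your explicit final comparison of $\hat{d}(r)$ against $d_{1,1}(r)+d_{k,1}(r)=(k+1)(1-r)$ is a detail the paper leaves implicit, but it is the right check and it goes through.
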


\begin{proof}
The proof is given in Appendix~\ref{pf_thm_analytical2}.
\end{proof}

\begin{remark}
For $k\geq 2$, the optimal DMT of the SCF protocol on a $(1,k,1)$ channel is given by~\cite{OCM} as the piece-wise linear curve whose values at three corner points are given as $d_{SCF}(0)=(1+k), ~~d_{SCF}(\frac{1}{2})=1$ and $d_{SCF}(1)=0$. Comparing this with the corresponding DMT of the DDF protocol given above we see for $0\leq r\leq \frac{1}{2}$, the DDF protocol can achieve better diversity order (see Figure~\ref{DDF_SCF_comparison1}) while requiring less channel state information. Moreover, since the SCF protocol achieves the best DMT among all static protocols, it is evident (for example, see Figure~\ref{DDF_SCF_comparison1}) that the DDF protocol can perform better than the SCF protocol, that a static protocol is not DMT optimal on a MIMO HD relay channel.
\end{remark}

\begin{figure}[!hbt]
\centering
\includegraphics[width=8.0cm,height=6cm,]{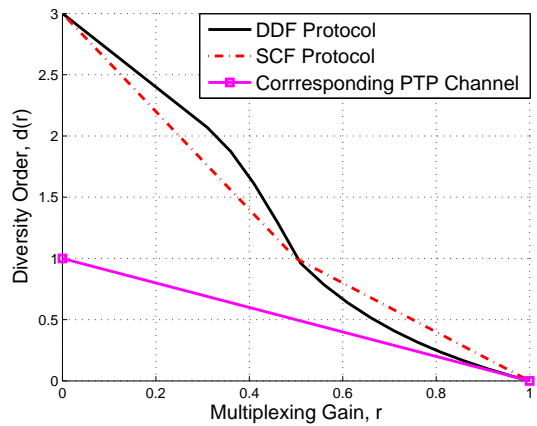}
\caption{DMT comparison of the DDF and SCF protocol on a $(1,2,1)$ relay channel.}
\label{DDF_SCF_comparison1}
\end{figure}

\begin{remark}
Recently, the fundamental DMT of the $(1,k,1)$ relay channel was derived by the authors in \cite{Sanjay_Varanasi_Symmetric_HDRC_DMT}
which is given as
\begin{equation*}
d_{(1,k,1)}(r)~\triangleq \left\{\begin{array}{c}
(k+1)(1-r), ~~0\leq r\leq \frac{1}{k+1};\\
1+k(\frac{1-2r}{1-r}), ~\frac{1}{k+1}\leq r \leq \frac{1}{2};\\
2\left(1-r\right), ~~\frac{1}{2}\leq r \leq 1.
\end{array}\right.
\end{equation*}
Comparing it with the result of Theorem~\ref{thm_analytical2} we see that the DDF protocol can achieve the fundamental DMT of the channel for multiplexing gains in the range $0\leq r\leq \frac{1}{2}$. However, DMT optimality of the DDF protocol is not restricted to just this channel. In the next section we shall see that the DDF protocol can achieve the fundamental DMT of the channel for other antenna configurations also for some multiplexing gains.
\end{remark}

\begin{remark}
Note if Theorem~\ref{thm_analytical1} and Theorem~\ref{thm_analytical2} are specialized to the cases of $n=1$ and $k=1$, respectively, one recovers the result derived in \cite{KHP}.
\end{remark}

\subsection{DMT of the $(2,k,2)$ channel}
\begin{thm}
\label{analytical3}
An upper bound to the optimal DMT of the DDF protocol on a $(2,k,2)$ relay channel is given by the following
\begin{equation*}
d_u(r)=\min\left\{\begin{array}{c}
d_{2,2}(r)+d_{2,k}(r), ~0\leq r\leq 2;\\
k+3+(k+1)\left(\frac{2-3r}{2-r}\right),~0\leq r\leq \frac{2}{3};\\
k+6\left(\frac{1-r}{r}\right),~\frac{2}{3}\leq r\leq 1;\\
4+4(k-1)\left(\frac{1-r}{2-r}\right),~\frac{2}{3}\leq r\leq 1;\\
1+(k-1)\left(\frac{4-3r}{2-r}\right),~1\leq r \leq \frac{4}{3};\\
4\left(\frac{3-2r}{r}\right),~1\leq r \leq \frac{4}{3};\\
2\left(\frac{2-r}{r}\right),~\frac{4}{3}\leq r \leq 2;
\end{array}\right.
\end{equation*}
\end{thm}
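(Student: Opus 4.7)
The plan is to specialize Theorem~\ref{thm_optimization_problem} to $m=n=2$ (with $k\geq 2$, so that $p=q=t=2$) and then to produce each of the seven listed bounds either as one of the two summands already present in that theorem or as an upper bound on $\hat d(r)$ obtained by evaluating its objective at a judicious feasible point. Observe that $d^*(r)\leq \hat d(r)+\varphi(r,t)$, and since the theorem restricts to $0\leq r\leq 2=t$ we have $\varphi(r,t)=0$; moreover $\hat d(r)$ is itself defined as a \emph{minimum} over $(y,b)$, so evaluating $F(\phi_\alpha(r-b(1-r/y)),\phi_\beta(b),\phi_\gamma(y))$ at any single $(y,b)\in \mathcal{R}_i\times\mathcal{B}_i(y)$ furnishes an upper bound on $\hat d(r)$ and hence on $d^*(r)$.

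I would first write $F$ explicitly for this channel. With $N_1=k$, $N_2=N_3=2$ and $p=q=2$, the formula for $E(\bar\alpha,\bar\beta)$ collapses to
\begin{equation*}
E(\bar\alpha,\bar\beta)=3\alpha_1+\alpha_2-k(1-\alpha_1)^+-k(1-\alpha_2)^++(k+1)\beta_1+(k-1)\beta_2+(1-\alpha_1-\beta_1)^+,
\end{equation*}
to which one adds $(k+1)\gamma_1+(k-1)\gamma_2$. Because $q=n$, the middle region $\mathcal{R}_2$ is empty, so only $\mathcal{R}_1=(r,2r/(2-r)]$ and $\mathcal{R}_3=(2r/(2-r),2]$ (the latter nonempty only for $r<1$) need be considered. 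The first bound $d_{2,2}(r)+d_{2,k}(r)$ is precisely the second summand inside the outer minimum of Theorem~\ref{thm_optimization_problem}, so nothing further is needed for it.

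For each of the six remaining bounds I would exhibit a specific vertex or endpoint of the feasible set and simplify. The natural candidates are the corners $b=0$ and $b$ equal to the upper endpoint of $\mathcal{B}_i(y)$ (i.e.\ $y(2-r)/r$ in $\mathcal{R}_1$ or $ry/(y-r)$ in $\mathcal{R}_3$), combined with $y$ taken at an endpoint of $\mathcal{R}_i$ (notably $y=r$, $y=2r/(2-r)$, $y=2$, or an interior critical point like $y=2r$). For instance, choosing $y=2r/(2-r)\in\mathcal{R}_1$ with $b$ at its $\mathcal{B}_1$-maximum should reproduce the bound $k+3+(k+1)(2-3r)/(2-r)$ after the positive-part terms are resolved; $y$ at the endpoint $y=2$ with $b=0$ should reproduce the $d_{2,2}$-like term $2(2-r)/r$ for $4/3\leq r\leq 2$; intermediate choices in $\mathcal{R}_3$ give the middle-$r$ bounds such as $4+4(k-1)(1-r)/(2-r)$ and $k+6(1-r)/r$; the remaining two bounds, valid in $1\leq r\leq 4/3$, come from evaluating $F$ in $\mathcal{R}_1$ at analogous corners. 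In each case one verifies that the chosen $(y,b)$ lies in the stated region for the claimed interval of $r$, and then just reads off $F$ in closed form.

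The main obstacle is purely one of case analysis: the objective $F$ has kinks along the hyperplanes $\alpha_i=1$ and $\alpha_i+\beta_j=1$, so determining which of the $(\cdot)^+$ terms are active at the chosen $(y,b)$ — and consequently identifying \emph{which} corner reproduces each listed bound over \emph{which} subinterval of $r$ — requires carefully tracking how $\phi_\alpha, \phi_\beta, \phi_\gamma$ (from Appendix~C) fold these vertices back into the outage surface. Beyond that bookkeeping, every bound reduces to a one-line substitution and simplification, so no new analytical tool beyond Theorem~\ref{thm_optimization_problem} is needed; the entire proof is an organized tour through the feasible set of the reduced two-variable optimization.
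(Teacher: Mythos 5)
Your logical skeleton is sound and is in fact the same as the paper's: since $\hat d(r)$ is a minimum, evaluating $F$ at any feasible point upper-bounds $d^*(r)$ (for $r\leq t=2$), the term $d_{2,2}(r)+d_{2,k}(r)$ is the second summand of the outer minimum in Theorem~\ref{thm_optimization_problem}, and your reduction of $E(\bar\alpha,\bar\beta)$ and the observation that $\mathcal{R}_2=\Phi$ are both correct. However, the "bookkeeping" you defer is where the proof actually lives, and the specific corner assignments you commit to are wrong in checkable instances. For example, at $y=2$, $b=0$ (hence $a=r$) one gets $\hat\beta=(1,1)$, $\hat\gamma=\phi_\gamma(2)=(0,0)$, $\hat\alpha=(0,2-r)$ for $r\geq 1$, and $F=(k+1)(2-r)$ — not the listed $2\left(\frac{2-r}{r}\right)$, and strictly larger than it on $\frac{4}{3}\leq r<2$ for $k\geq 2$, so this corner cannot reproduce that bound. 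The listed expression is instead obtained at $b=\max\mathcal{B}_1(y)=\frac{2(2-r)}{r}$, i.e.\ at the corner lying on $a+b=n=2$, where $a^*=2-\frac{(2-r)}{r}y$ and $\hat\alpha_2=2-a^*=\frac{2(2-r)}{r}$. Likewise $k+6\left(\frac{1-r}{r}\right)$ and $4+4(k-1)\left(\frac{1-r}{2-r}\right)$ arise at $y=1$ and $y=\frac{2r}{2-r}$ on the line $a+b=2$, which for $\frac{2}{3}\leq r\leq 1$ lie in (the closure of) $\mathcal{R}_1$, not in $\mathcal{R}_3$ as you assert. The $b=0$ corners you invoke never produce any of the seven expressions containing a $\frac{1}{r}$ or $\frac{1}{2-r}$ factor.

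The missing idea is the paper's organizing device: it restricts the minimization to the line $a+b=n$, on which, by Claim~\ref{cl_a_plus_b_equal_1}, $F$ is independent of $b$ and collapses to $3\hat\alpha_1+\hat\alpha_2+(k+1)\hat\gamma_1+(k-1)\hat\gamma_2$ with $a^*=2-\frac{(2-r)}{r}y$ forced by the intersection with $a+b\left(1-\frac{r}{y}\right)=r$. This turns the problem into a one-variable piecewise-linear minimization over $y\in[r,2]$, whose breakpoints ($a^*\in\{0,1,2\}$, $y\in\{1,2\}$) partition $(r,y)$-space into four regions; evaluating at the region endpoints mechanically generates all seven expressions together with their exact ranges of validity in $r$. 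Without this restriction you would have to resolve the $(\cdot)^+$ kinks at a much larger collection of two-dimensional corners, and your proposal as written both misidentifies which corners matter and leaves five of the seven bounds unverified. You should either adopt the $a+b=n$ restriction or redo the corner analysis with the evaluations actually carried out.
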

\begin{proof}
The proof is given in Appendix~\ref{pf_thm_analytical3}.
\end{proof}
\begin{rem}
The DMT of the DDF protocol on $(2,k,2)$ channel was also computed using the numerical method (described later in this section) for the general $(m,k,n)$ channel. It was observed that the DMT coincides with the upper bound given by the above theorem for $k\leq 20$. Thus the upper bound of Theorem \ref{analytical3} is tight for $k\leq 20$. Given the tightness of the bound for $k\leq 20$ we conjecture is that the upper bound of Theorem \ref{analytical3} is tight for all $k\geq 2$.
\end{rem}

\begin{figure}[!hbt]
\centering
\includegraphics[width=8.0cm,height=6cm,]{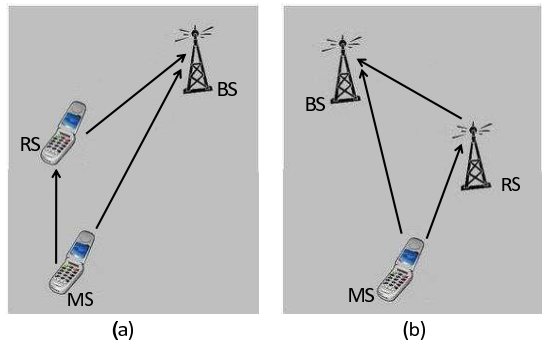}
\caption{Cooperative networks: (a) $\textrm{CN}_1$: A mobile station act as a relay node; (b) $\textrm{CN}_2$: A dedicated relay station acts as the relay node.}
\label{cooperative_scenario}
\end{figure}

\begin{figure}[htp]
  \begin{center}
    \subfigure[Uplink Channel]{\label{scenario1-a}\includegraphics[scale=0.5]{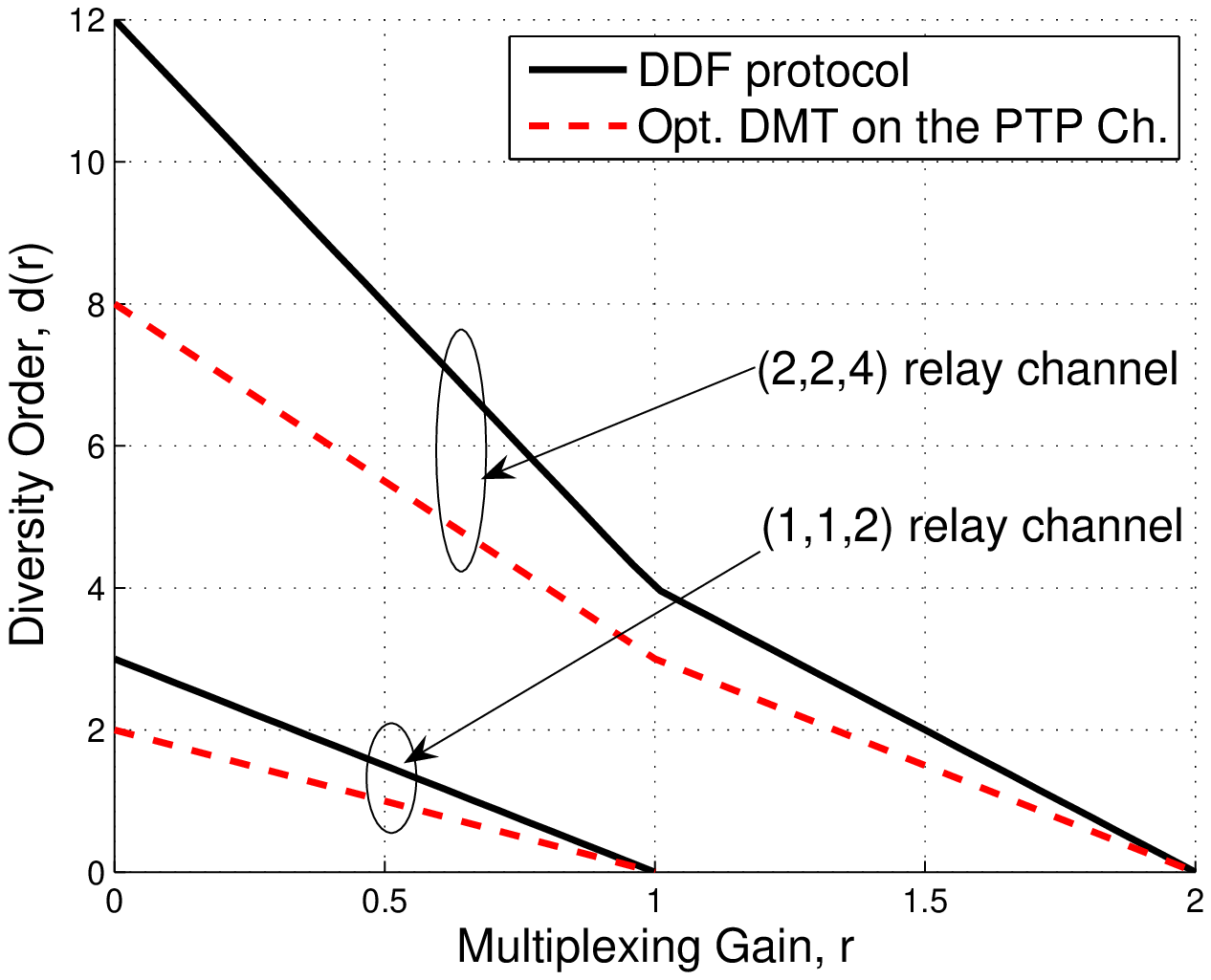}}
    \subfigure[Downlink Channel]{\label{scenario1-b}\includegraphics[scale=0.5]{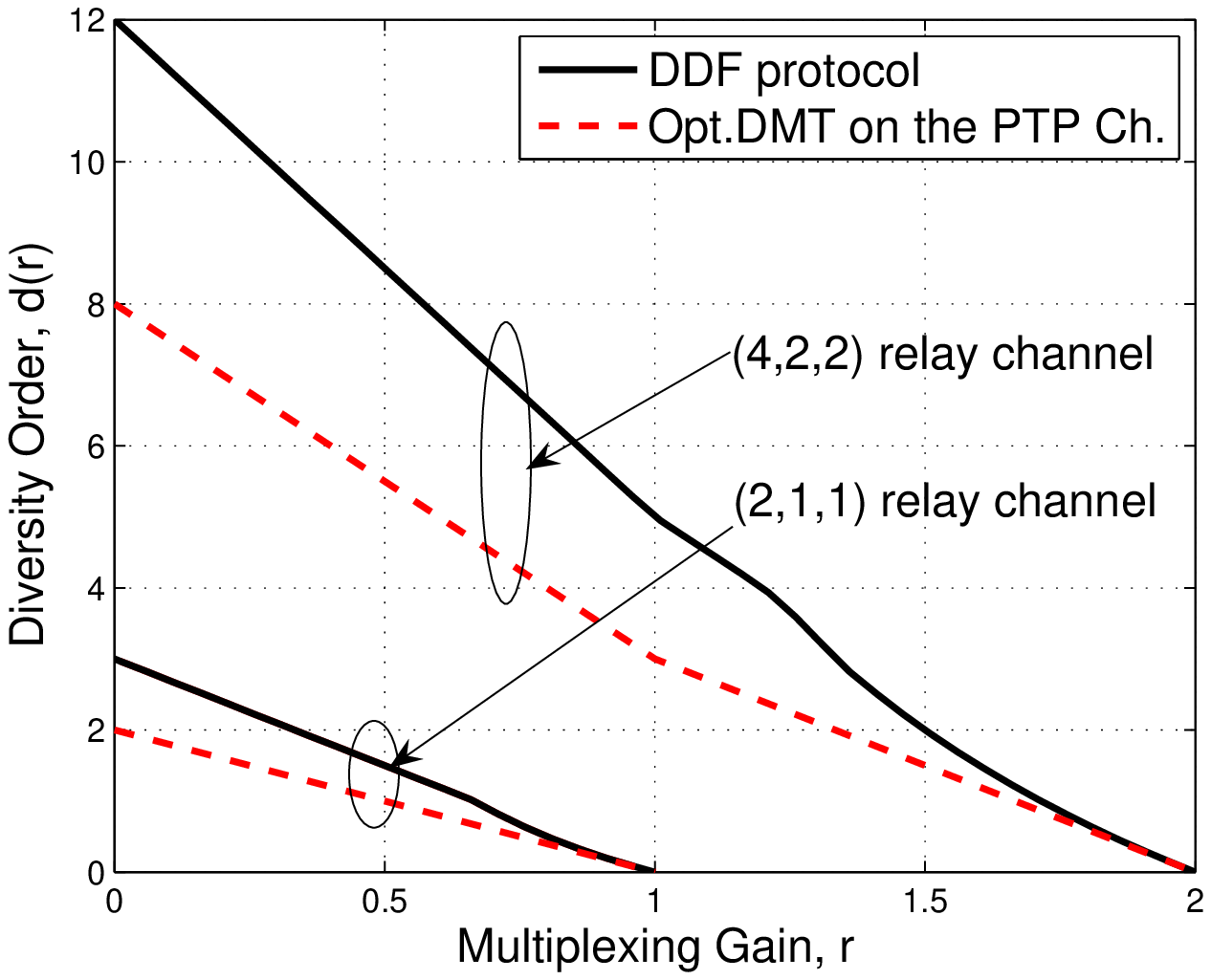}}
  \end{center}
  \caption{DMT curves of the DDF protocol on relay channels of $\textrm{CN}_1$.}
  \label{scenario1}
\end{figure}

%
%

\begin{figure}[htp]
  \begin{center}
    \subfigure[Uplink Channel]{\label{scenario2-a}\includegraphics[scale=0.5]{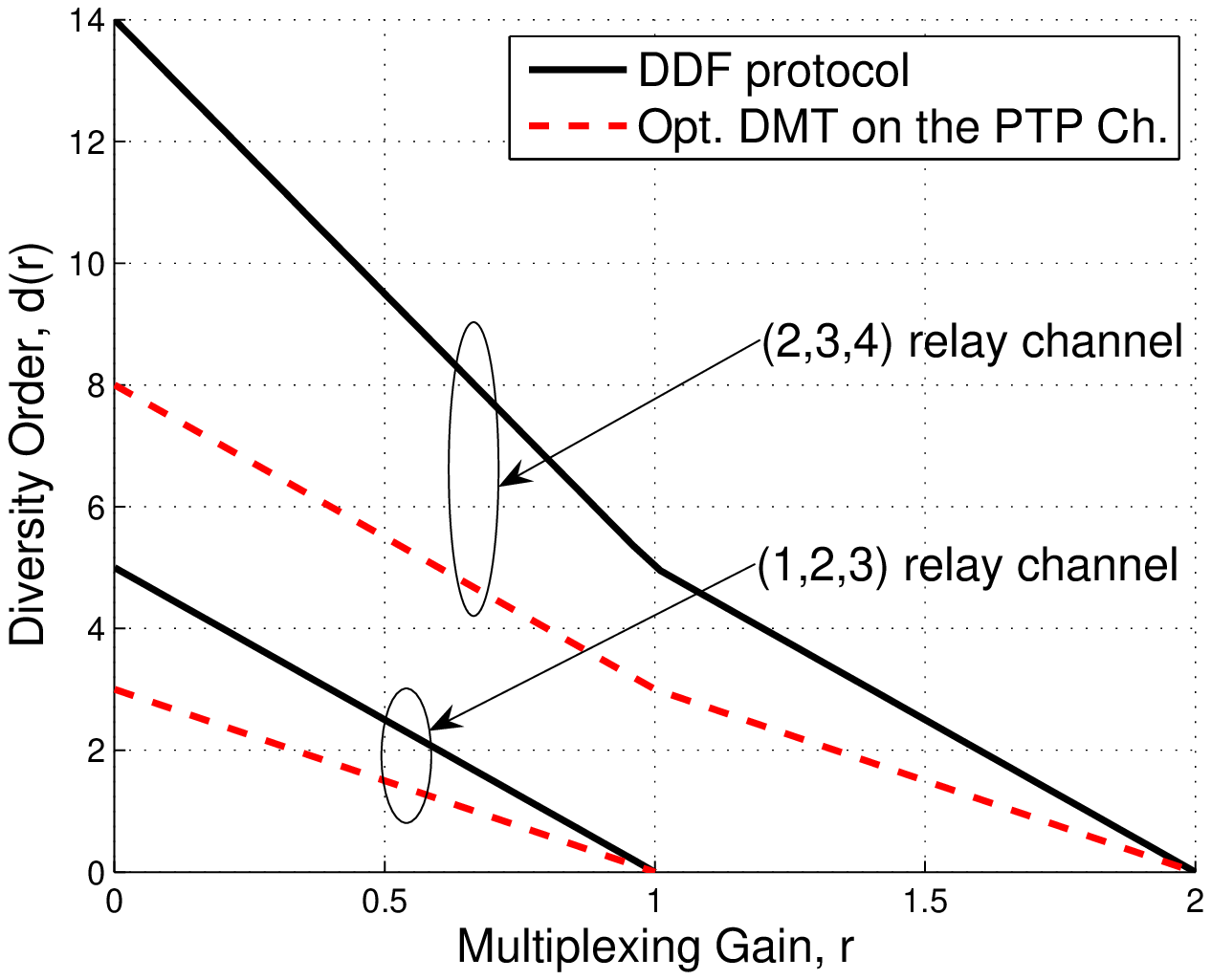}}
    \subfigure[Downlink Channel]{\label{scenario2-b}\includegraphics[scale=0.5]{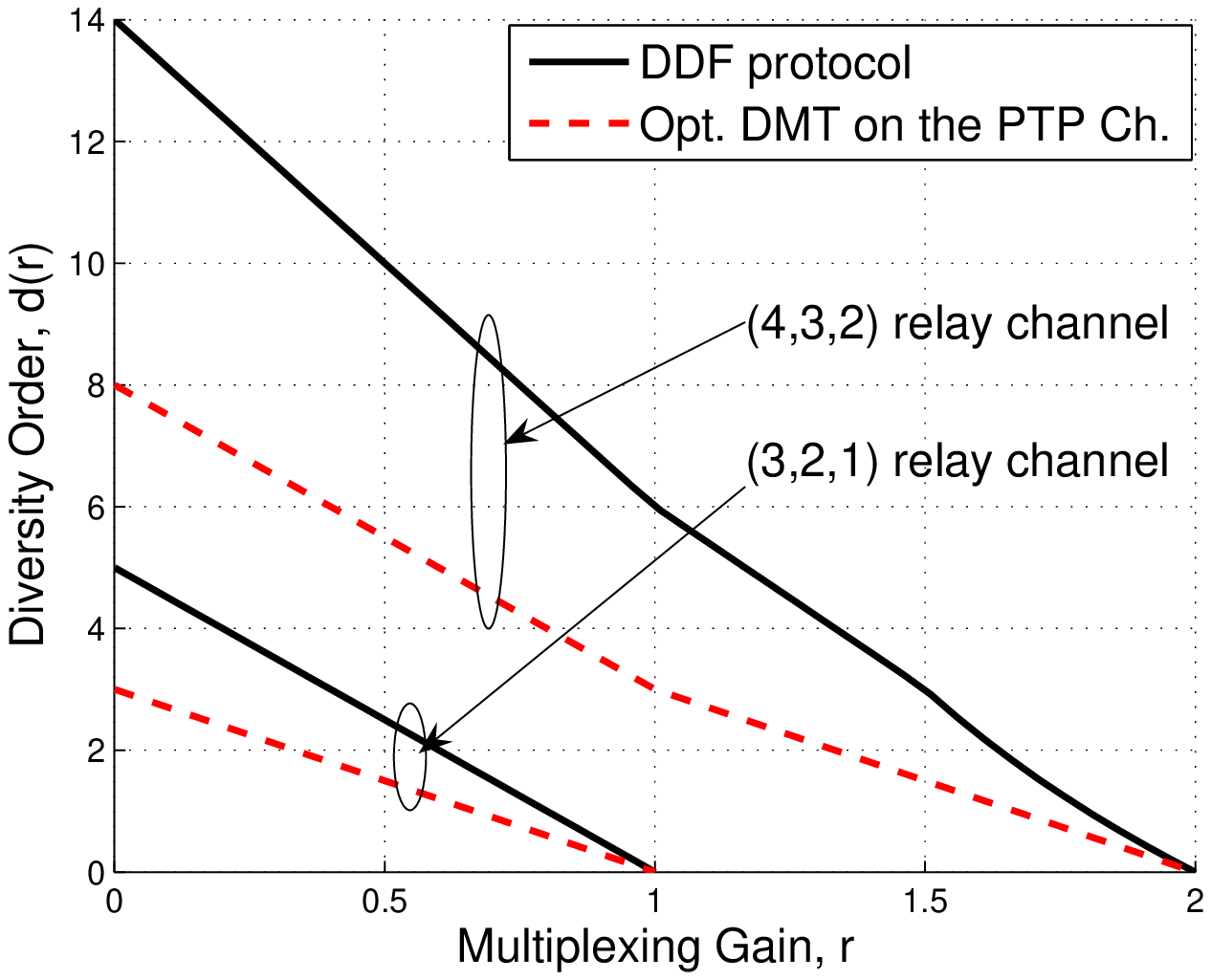}}
  \end{center}
  \caption{DMT curves of the DDF protocol on relay channels of $\textrm{CN}_2$.}
  \label{scenario2}
\end{figure}

\section{Explicit DMTs of the DDF protocol and a comparative study in practical applications}
\label{sec_explicit_dmt}

In this subsection, we illustrate the advantage of the DDF protocol over PtP communication schemes and provide a comparative analysis of its performance with respect to other MIMO cooperative schemes. We consider different practical scenarios where cooperative communication promises potential gain in overall system performance because of which it is being considered by several standardization bodies. Figure~\ref{cooperative_scenario}a depicts a cellular network wherein, a mobile user (or mobile set (MS)) uses another mobile user as the relay station (RS) to communicate its message to and from the base station (BS). This cooperative model was first proposed in \cite{SEB1}. Figures~\ref{scenario1-a} and \ref{scenario1-b} represent the uplink and down-link performances, respectively, of the DDF protocol in such an environment with respect to the fundamental DMT of the corresponding $m$-transmit, $n$-receive antennas PtP channel.

Figure~\ref{cooperative_scenario}b depicts a scenario where, in a cellular network (CN), a particular cell area is divided into more than one sub-cell and each sub-cell is served by an additional dedicated node (a smaller BS) to provide better quality of service. Thus each user in these sub-cells can use these dedicated nodes to relay their messages to and from the BS. This is different from the previous cooperative scenario in the sense that this relay stations can host more number of antennas than a mobile set. This configuration is under consideration to be implemented in LTE-advanced and WiMAX technologies~\cite{YHXM} and being standardized by the IEEE 802.16s relay task group~\cite{relayTG}. Figures~\ref{scenario2-a} and \ref{scenario2-b} show the uplink and downlink performances respectively, of the DDF protocol in such a scenario. These figures clearly demonstrate the superior performance of the DDF protocol over that of the corresponding MIMO channel.

Note the asymmetry in terms of number of antennas at different nodes in both of the above applications, which points out the importance of analyzing MIMO relay channels with an arbitrary number of antennas at each node.

\begin{figure}[htp]
  \begin{center}
    \subfigure[$\textrm{CN}_3$: Sensory network with a mobile relay station (MRS)]{\label{scenario3-a}\includegraphics[scale=0.27]{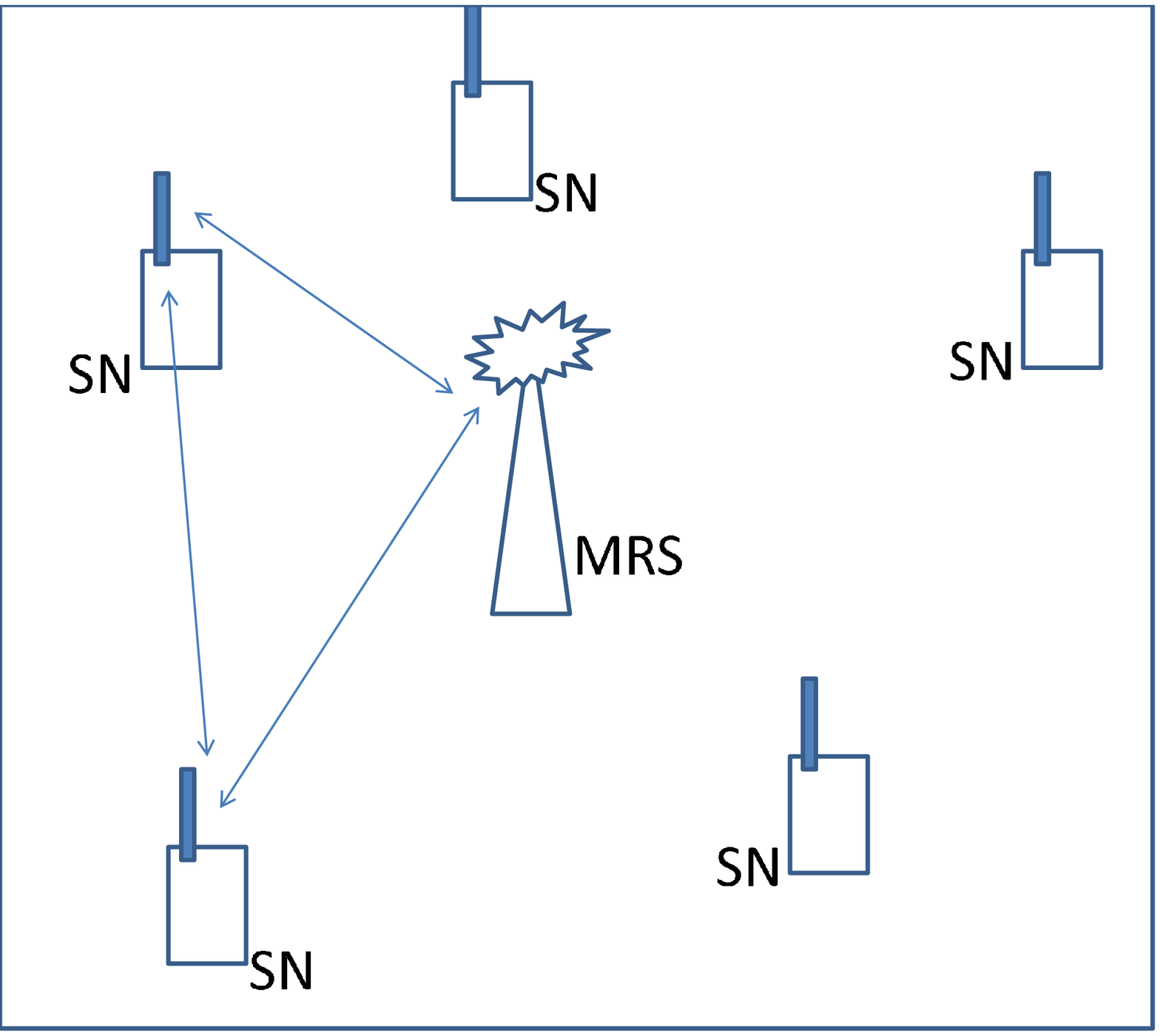}}
    \subfigure[DMT comparison of DDF and SCF protocol]{\label{scenario3-b}\includegraphics[scale=0.5]{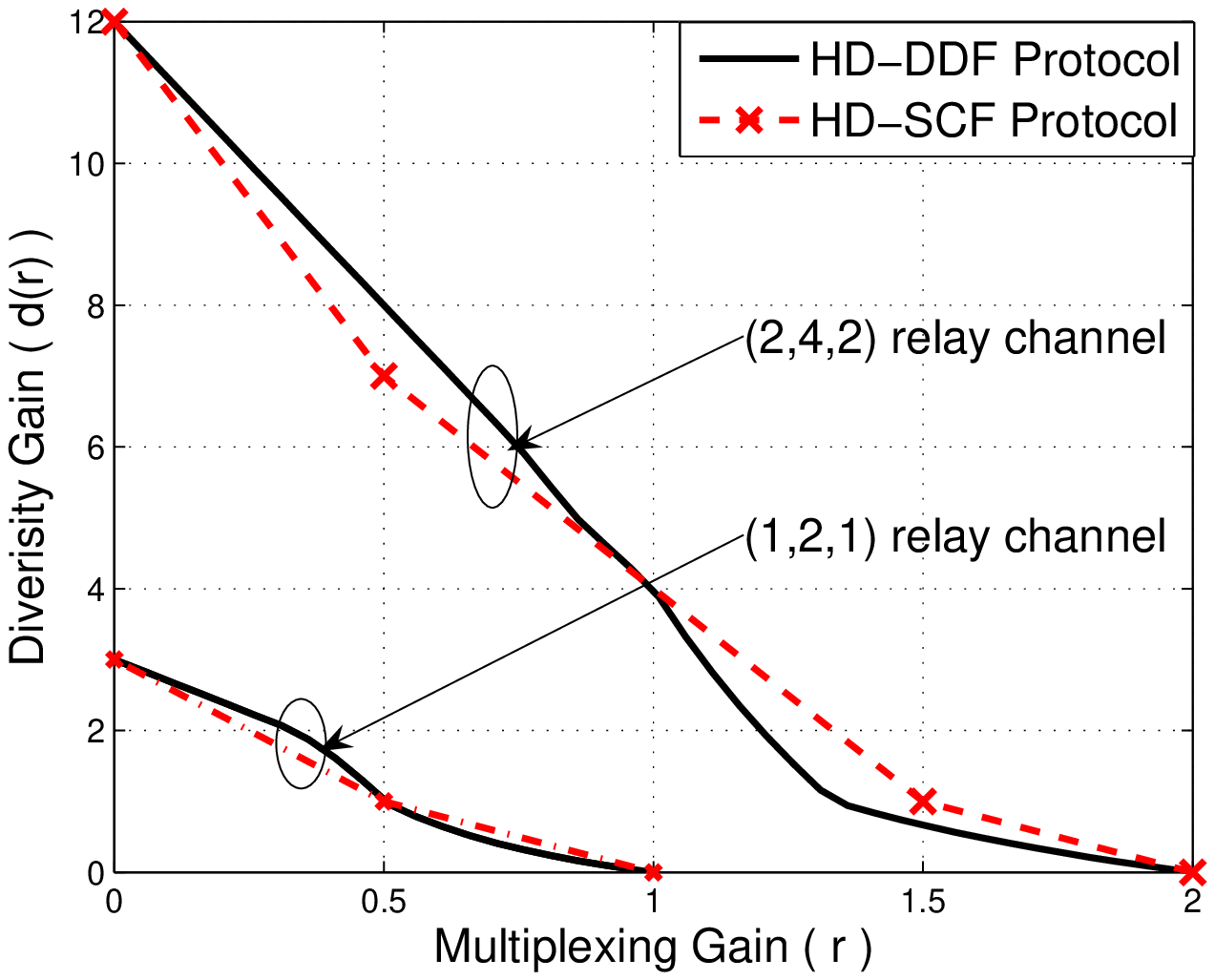}}
  \end{center}
  \caption{Explicit DMTs of the DDF and SCF protocols on relay channels of $\textrm{CN}_3$.}
  \label{scenario3}
\end{figure}

Among the many cooperative protocols available in the literature, the static CF and full-duplex DF are the two protocols that are known to have high performance and whose explicit DMT characterizations on a MIMO relay channel are now known. The DMTs of the DDF and the SCF protocols on the sensor network of Figure~\ref{scenario3-a} first proposed in \cite{WSC} where a more capable mobile relay station (with more antennas) helps several less capable sensor nodes to communicate with each other, is plotted in Figure~\ref{scenario3-b}. This figure illustrates that on such a relay channel, the DDF protocol can outperform the SCF protocol at low multiplexing gains. However, on a relay channel of some other CN, such as the downlink of $\textrm{CN}_1$, the SCF protocol performs marginally better than or identical to that of the DDF protocol uniformly at all multiplexing gains, as shown in Figure~\ref{protocol_comparison_scenario1-a}.

The implementation of a protocol in a practical application however, depends on a number of other issues, with an important one being the channel state information (CSI) required at different nodes. The SCF protocol, in contrast to the DDF protocol, requires global CSI\footnote{ The knowledge of all the instantaneous channel matrices of the relay channel.} at the relay node, which in some application-such as the downlink channel of $\textrm{CN}_1$- may be a challenging or even impossible task. The CSI assumptions for the DDF protocol are that the receivers know the incoming channels which is a much milder assumption given this can be accomplished via training. It is also assumed in this work that the destination has perfect knowledge of the relay decision time which is a function of the source-relay channel. Future work on the DDF protocol for MIMO relay channels in the spirit of bringing theory closer to practice would be to relax the assumptions of infinite block length and genie-aided relay decision time information at the destination as was done for the single-antenna relay channel in \cite{RC}, for which the results of this paper would serve as a benchmark. The DDF protocol would thus provide a practical alternative to the SCF protocol without sacrificing much by way of performance. For instance, on the $(4,2,2)$ relay channel of Figure~\ref{protocol_comparison_scenario1-a}, the burden of providing global CSI to the relay node which is an MS having limited capability can be avoided through negligible loss in diversity order.

The performance comparison of the DDF protocol with that of the SCF protocol depicted in Figure~\ref{scenario3-b} is also interesting in light of a recent result~\cite{PAT}, where it was proved that, on a {\em single-antenna} relay channel, a static protocol, namely the quantize-and-map protocol, is DMT optimal among all static and dynamic cooperative protocols. This result~\cite{PAT} raises a natural question: can a static protocol achieve the fundamental DMT of a MIMO half-duplex relay channel? The analysis of this paper shows that the DDF protocol can be better than the theoretical limit of static protocols, which is the DMT of the SCF protocol, and thus answers the above question in the negative. The comparison in Figure~\ref{scenario3-b} proves that static protocols fundamentally can not fully exploit the resources available on a HD MIMO relay channel.

In Figure~\ref{protocol_comparison_scenario1-b}, we compare the performance of the DDF protocol with the full-duplex DF protocol on the uplink channel of Figure~\ref{cooperative_scenario}(a). This figure illustrates that the dynamic operation of the half-duplex relay node in the DDF protocol can help achieve almost the same or equal performance as in the FD-DF protocol without full-duplex relaying. While the large difference between the transmitted and received power levels makes full-duplex operation impractical, if not entirely infeasible, the DDF protocol may be implemented with little or no loss in performance while avoiding the cost of full-duplex operation.

\begin{figure}[htp]
  \begin{center}
    \subfigure[Downlink Channel]{\label{protocol_comparison_scenario1-a}\includegraphics[scale=0.5]{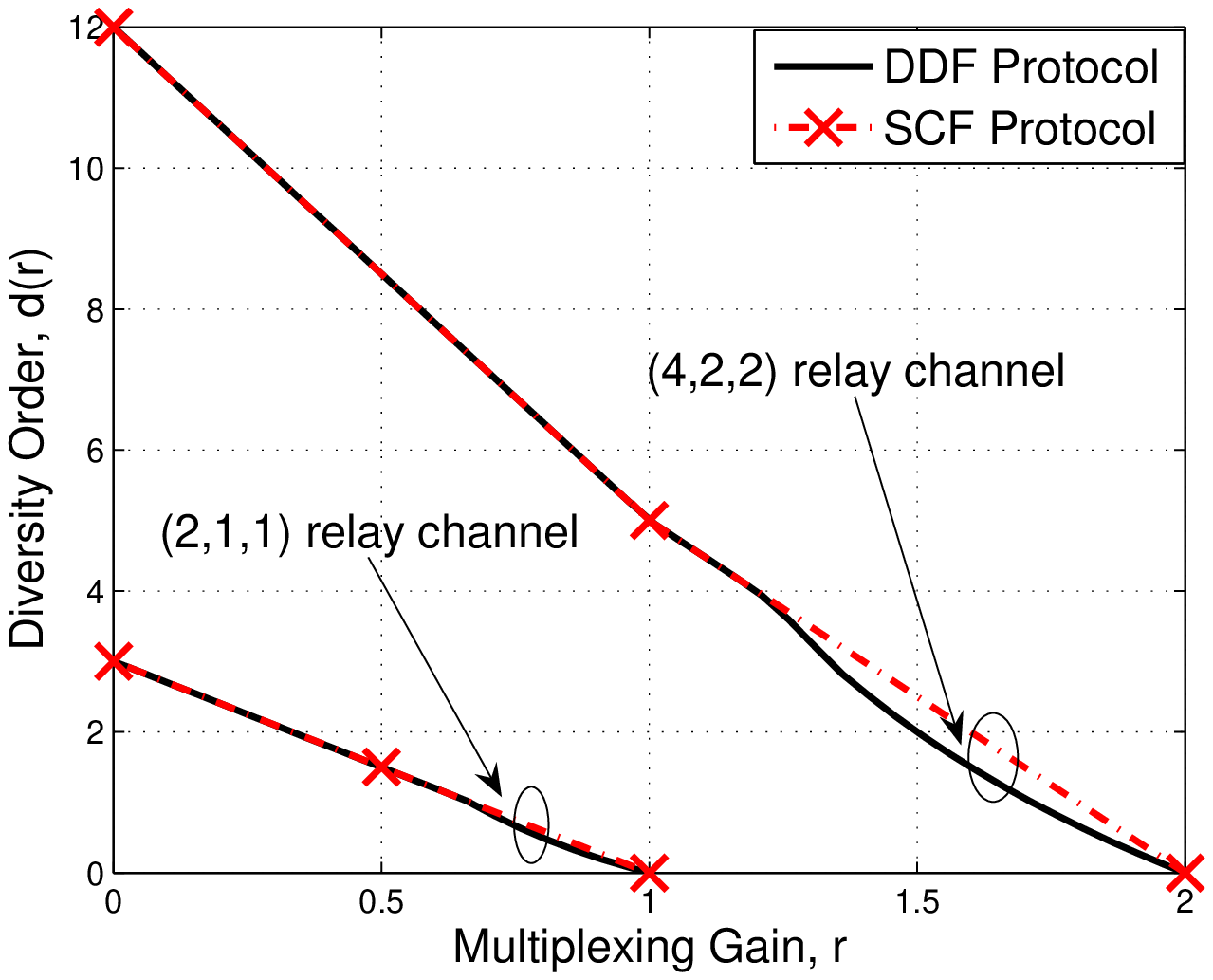}}
    \subfigure[Uplink Channel]{\label{protocol_comparison_scenario1-b}\includegraphics[scale=0.5]{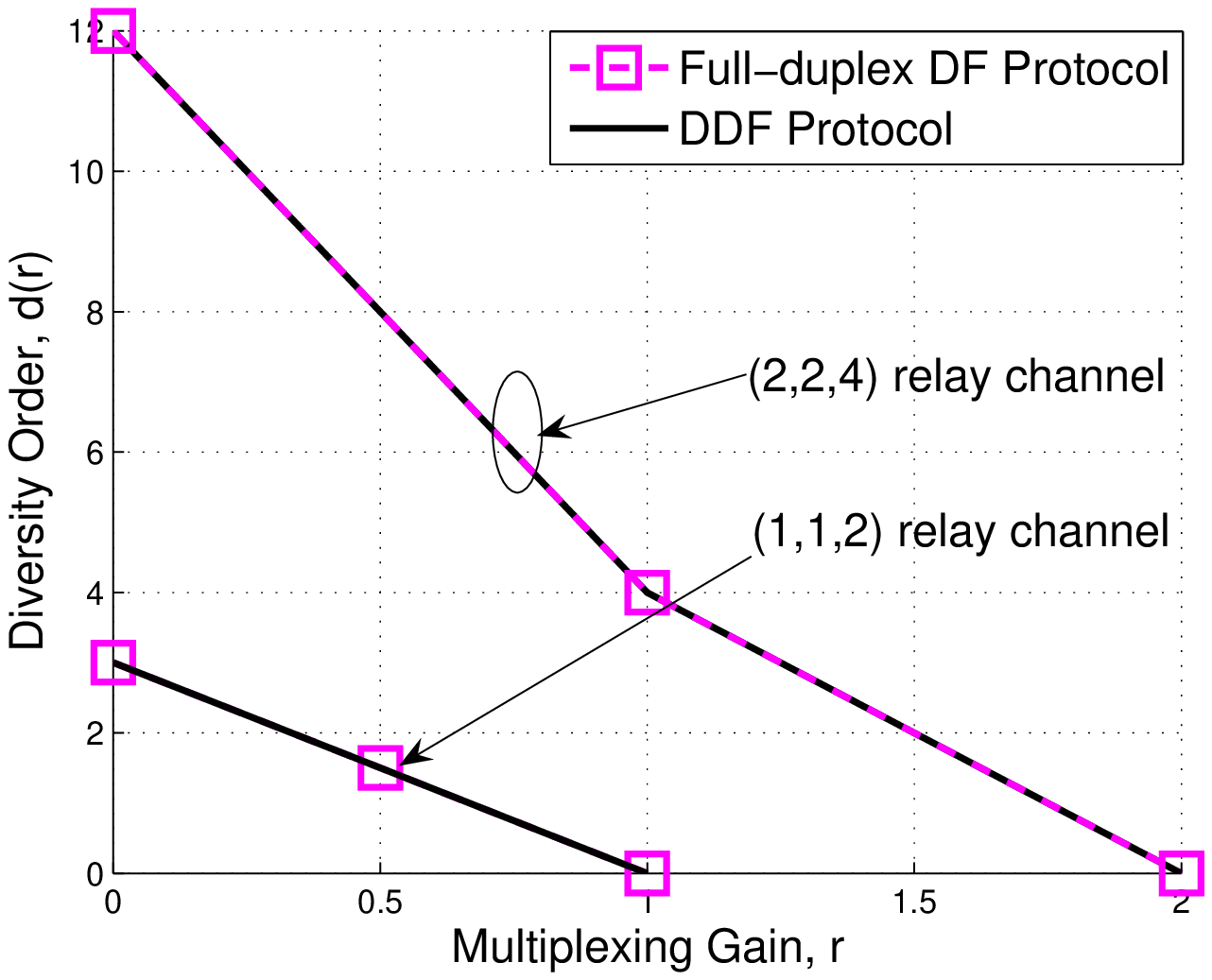}}
  \end{center}
  \caption{DMT comparison among the DDF, SCF and the FD-DF protocols on relay channels of $\textrm{CN}_1$.}
  \label{protocol_comparison_scenario1}
\end{figure}


%

\section{Conclusion}
\label{conclusion}
The asymptotic joint eigenvalue distribution of two specially correlated random Wishart matrices was derived, and using this result, the optimal diversity-multiplexing tradeoff was obtained of a three node half-duplex MIMO relay network, where each node has an arbitrary number of antennas, operating in the dynamic decode-and-forward protocol. For several specific channel configurations we computed explicit DMT curves for the protocol and compared it with FD-DF and HD-SCF protocols. These comparisons reveal some interesting facts such as, for some channel configurations, the optimal DMT of HD-DDF and FD-DF protocols are identical while for other channel configurations the diversity orders of the HD-DDF protocol are marginally less than those over FD-DF protocol at high multiplexing gains. 
Further, the comparison with the HD-SCF protocol shows that for some channel configurations, at low multiplexing gains, the optimal diversity orders of the HD-DDF protocol are greater than those of the corresponding DMT of HD-SCF protocol, which should motivate one to further investigate other dynamic protocols such as the dynamic compress-and-forward protocol, on a three node relay channel. This work also motivates further research on the generalization of the single-antenna relay channel results in \cite{RC} to the MIMO relay channel by considering finite lengths codes and doing away with the assumption of genie-aided information about the relay decision time at the destination. Extending the present analysis for a relay network having multiple relay nodes is another topic for future research.


\appendices

\section{Proof of Theorem~\ref{thm_eigenvalue_distribution_1}}
\label{pf_thm_eigenvalue_distribution_1}

We begin by proving the theorem for $N_2 \leq N_3$ and later extend the proof for $N_2 > N_3$.
Let $\lambda_1 > \lambda_2>\cdots > \lambda_{N_2}>0$ represent the ordered non-zero\footnote{ The eigenvalues of $V_2$ are distinct and non-zero with probability 1.} eigenvalues of $V_2\triangleq H_2H_2^{\dagger}$. The spectral decomposition of $V_2$ can be written as $V_2 = U \Lambda U^{\dagger}$, where $\Lambda\triangleq\textrm{diag}(\bar{\lambda})$ and $U\in \mathbb{C}^{N_2\times N_2}$ is an unitary matrix containing the eigenvectors of $V_2$. We can now write $V_1\triangleq \hat{H}_1^{\dagger}\Sigma \hat{H}_1 $, where $\hat{H}_1=U^\dagger H_1$ and $\Sigma=(I_{N_2}+ \rho\Lambda )^{-1}$.

\begin{rem}
Note that since $U, \Lambda$ (e.g., see Lemma~2.6 in \cite{Tulino_Verdu}) and $H_1$ are mutually independent so are $\Lambda$ and $\hat{H}_1$. Also, since $H_1$ is unitarily invariant, $H_1$ and $\hat{H}_1$ are identically distributed.
\end{rem}

Before proceeding further let us recall a result derived in~\cite{GaS} which deals with the eigenvalue distribution of a random matrix of the similar form.
\begin{lemma}[\cite{GaS}]
\label{lem_gas}
Let $x_1> x_2> \cdots >x_M$ be the ordered non-zero eigenvalues of $Z$, where $Z=X L X^{\dagger}$ and $X\in \mathbb{C}^{M\times N}, M<N$, has mutually independent complex Gaussian vectors as its columns with zero mean and covariance $I_M$, and $y_1> y_2> \cdots >y_N$ be the ordered non-zero eigenvalues of $L$, then the density function of $Z$ is given by
\begin{equation}
f(Z)=C_1 \pi^{\frac{M(M-1)}{2}}\frac{\triangle_1\left(\bar{y},\bar{x}\right)}{|\mathbf{V}_1(\bar{x})| |\mathbf{V}_1(\bar{y})|},
\end{equation}
where $\mathbf{V}_1(\bar{x})$ and $\mathbf{V}_1(\bar{y})$ are Vandermonde matrices formed by the vectors $\bar{x}$ and $\bar{y}$, respectively and
\begin{eqnarray}
\label{eq_del1}
\triangle_1\left(\bar{y},\bar{x}\right)=\left|\begin{array}{cccccc}
1 & y_N & \cdots y_N^{N-M-1} & y_N^{N-M-1}e^{-\frac{x_M}{y_N}} & \cdots & y_1^{N-M-1}e^{-\frac{x_1}{ y_N}}\\
1 & y_{(N-1)} & \cdots y_{(N-1)}^{N-M-1} & y_{(N-1)}^{N-M-1}e^{-\frac{x_M}{ y_{(N-1)}}} & \cdots & y_2^{N-M-1}e^{-\frac{x_1}{ y_{(N-1)}}}\\
&&& \vdots &&\\
1 & y_1 & \cdots y_1^{N-M-1} & y_1^{N-M-1}e^{-\frac{x_M}{ y_1}} & \cdots & y_N^{N-M-1}e^{-\frac{x_1}{ y_1}}\\
\end{array}\right|.
\end{eqnarray}
\end{lemma}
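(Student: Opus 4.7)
My plan is to reduce the claim to a density calculation for a one-sided correlated Wishart matrix, pass to singular value decomposition (SVD) coordinates, and then invoke a rectangular Harish-Chandra-Itzykson-Zuber (HCIZ)-type orbit integral to obtain the mixed polynomial/exponential determinant $\triangle_1$. First, by unitary invariance of the complex Gaussian distribution, if $L = W\,\text{diag}(\bar y)\,W^{\dagger}$ is the spectral decomposition then $XLX^{\dagger}$ is identically distributed with $X\,\text{diag}(\bar y)\,X^{\dagger}$ (because $XW$ has the same law as $X$), so without loss of generality $L = \text{diag}(y_1,\ldots,y_N)$. Writing $Z = \tilde X \tilde X^{\dagger}$ with $\tilde X = XL^{1/2}$, the matrix $\tilde X$ has iid rows distributed as $\mathcal{CN}(0,L)$, yielding the matrix density $p(\tilde X) \propto \det(L)^{-M}\exp\!\left(-\text{tr}(L^{-1}\tilde X^{\dagger}\tilde X)\right)$. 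Since $f(Z)$ is unitarily invariant in $Z$ it depends only on $\bar x$, so it suffices to derive the joint density of the eigenvalues.

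Second, I would introduce the SVD $\tilde X = U D V^{\dagger}$ with $U\in U(M)$, $D = \text{diag}(\sqrt{x_1},\ldots,\sqrt{x_M})$, and $V$ on the complex Stiefel manifold $V(M,N)$. The standard Jacobian of this transformation is proportional to $\prod_i x_i^{N-M}\,|\mathbf{V}_1(\bar x)|^2$ times the Haar measures on $U(M)$ and $V(M,N)$, and cyclicity of the trace gives $\text{tr}(L^{-1}\tilde X^{\dagger}\tilde X) = \text{tr}\bigl(\text{diag}(\bar x)\,V^{\dagger}L^{-1}V\bigr)$, which is independent of $U$. Integration over $U$ therefore contributes only the volume of $U(M)$, to be absorbed into $C_1$.

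The only remaining, and also most delicate, piece is the Stiefel-manifold integral
\begin{equation*}
I(\bar x,\bar y) \;=\; \int_{V(M,N)} \exp\!\left(-\text{tr}\bigl(\text{diag}(\bar x)\,V^{\dagger}L^{-1}V\bigr)\right)\,dV,
\end{equation*}
and this is the main obstacle since, with $M < N$, the classical square HCIZ identity does not directly apply. I would evaluate $I(\bar x,\bar y)$ by expanding the integrand in zonal polynomials (James), performing the orbit integration term by term, and then resumming via the determinantal form of the hypergeometric ${}_{0}F_{0}$ of matrix argument for distinct spectra. The result is a ratio whose numerator is the $N\times N$ determinant $\triangle_1(\bar y,\bar x)$, consisting of $N-M$ ``padding'' columns $\{1,y,y^2,\ldots,y^{N-M-1}\}$ (needed because $N>M$) and $M$ exponential columns of the form $y^{N-M-1}e^{-x_i/y}$, and whose denominator is the Vandermonde determinant $|\mathbf{V}_1(\bar y)|$. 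Multiplying by the Jacobian factor $|\mathbf{V}_1(\bar x)|^2$, using one copy of $|\mathbf{V}_1(\bar x)|$ to convert the joint-eigenvalue density into a density on Hermitian matrices $Z$, and absorbing $\det(L)^{-M}$, the Haar volumes, and all remaining numerical constants into $C_1$ together with the $\pi^{M(M-1)/2}$ prefactor, one recovers the claimed formula.
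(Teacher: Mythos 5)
You should first note that the paper does not prove this statement at all: it is stated as ``Lemma~1 [\cite{GaS}]'' and imported verbatim from the cited reference, so there is no in-paper proof to compare against. The only meaningful question is whether your sketch is a sound derivation of the cited result. Your skeleton is the standard one (and essentially the route of the reference itself): diagonalize $L$ by right-unitary invariance of $X$, write $Z=\tilde X\tilde X^{\dagger}$ with $\tilde X=XL^{1/2}$, pass to SVD/Stiefel coordinates so that the $U(M)$ factor integrates out, and reduce everything to the orbit integral $I(\bar x,\bar y)$ over $V(M,N)$. That part is correct.

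The genuine gap is that the decisive step --- the closed-form determinantal evaluation of $I(\bar x,\bar y)$ --- is asserted rather than derived, and ``expand in zonal polynomials and resum'' does not by itself produce \eqref{eq_del1}. For unequal-size arguments the correct route is to lift the Stiefel integral to $U(N)$ with the padded matrix $\mathrm{diag}(x_1,\dots,x_M,0,\dots,0)$, apply the square $N\times N$ Harish-Chandra--Itzykson--Zuber formula, and then take the confluent limit in which the $N-M$ padded eigenvalues coalesce at zero. It is precisely this degeneration (divided differences / L'H\^opital on the vanishing Vandermonde factors) that generates the $N-M$ polynomial columns $1,y,\dots,y^{N-M-1}$ of $\triangle_1$ and a compensating factor $\prod_{i=1}^{M}x_i^{-(N-M)}$ that cancels the $\det(Z)^{N-M}$ arising from the Jacobian of $\tilde X\mapsto(Z,V)$; without exhibiting this cancellation the absence of any residual power of $x_i$ in the claimed density is unexplained. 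There is also a bookkeeping slip at the end: converting a joint-eigenvalue density to a density on Hermitian matrices costs the full factor $|\mathbf{V}_1(\bar x)|^{2}$, not ``one copy''; the single $|\mathbf{V}_1(\bar x)|$ surviving in the denominator of the final formula comes from the denominator of the orbit integral itself, not from a leftover half of the Jacobian. (A cleaner accounting avoids eigenvalue coordinates for $Z$ altogether and computes $f(Z)$ directly as $c\,\det(Z)^{N-M}\det(L)^{-M}I(\bar x,\bar y)$.) These points need to be filled in before the sketch constitutes a proof.
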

Note that $Z$ in the above lemma has the same structure as $V_1$ but is only valid for $M<N$. Thus assuming $N_1<N_2$ we can substitute $X=H_1^{\dagger}$ and $L=\Sigma$ in Lemma~\ref{lem_gas} to obtain
\begin{equation}
\label{eq_conditional_temp1}
f(V_1|\bar{\sigma})=C_1 \frac{\triangle_1\left(\bar{\sigma},\bar{\xi}\right)}{|\mathbf{V}_1(\bar{\xi})| |\mathbf{V}_1(\bar{\sigma})|},
\end{equation}
where $C_1$ is a constant independent of all the eigenvalues and $\bar{\sigma}$ is the vector of the ordered eigenvalues of $\Sigma$, i.e.,
\begin{equation*}
    \sigma_i=\left(1+\rho \lambda_{(N_2-i+1)}\right)^{-1},~\textrm{for}~1\leq i\leq N_2.
\end{equation*}
The conditioning on $\bar{\sigma}$ in equation \eqref{eq_conditional_temp1} is present due to the randomness of $\Sigma$. However, we are interested in the distribution of the eigenvalues of $V_1$ and not $V_1$ itself.

Let us assume that the spectral decomposition of $V_1$ is given as $V_1=V\Pi V^{\dagger}$, where $V\in \mathbb{C}^{N_1\times N_1}$ is an unitary matrix and $\Pi=\textrm{diag}(x_1, \cdots x_{N_1})$ is the diagonal matrix containing the ordered eigenvalues of $V_1$. It is well known that the Jacobian of the transformation $V_1 \mapsto (\Pi, V)$ is given by
\begin{equation}
J(\Pi, V)=|\mathbf{V}_1(\bar{\xi})|^2=\prod_{i<j}^{N_1}(\xi_i-\xi_j)^2
\end{equation}
Using this expression for the Jacobian and equation \eqref{eq_conditional_temp1} we can find the conditional joint pdf of $(\Pi, V)$ conditioned on $\bar{\sigma}$. Then integrating the resulting pdf over the space of unitary matrices we get the joint distribution of the eigenvalues as
\begin{equation}
\label{thm1_pf_eqn3}
\mathbf{f_2}(\bar{\xi}|\bar{\sigma})=C_2 |\mathbf{V}_1(\bar{\xi})|^2 \frac{\triangle_1\left(\bar{\sigma},\bar{\xi}\right)}{|\mathbf{V}_1(\bar{\xi})| |\mathbf{V}_1(\bar{\sigma})|}.
\end{equation}
Evaluating this expression in general is complicated due to the term $\triangle_1\left(\bar{\sigma},\bar{\xi}\right)$ in the above expression. The following Lemma helps us to evaluate it in the high SNR regime.
\begin{lemma}
\label{lemma_eig1_help1}
If the eigenvalues $\{\xi_1, \cdots \xi_{N_1}\}$ and $\{\lambda_1, \cdots \lambda_{N_2}\}$ vary exponentially with $\rho$, then for asymptotically large $\rho$ the following identity holds
\begin{equation*}
J_1\triangleq \frac{\triangle_1\left(\bar{\sigma},\bar{\xi}\right)}{|\mathbf{V}_1(\bar{\xi})| |\mathbf{V}_1(\bar{\sigma})|}\dot{=}  \prod_{i=1}^{N_2}(1+\rho\lambda_i)^{N_1} \prod_{j=1}^{N_1}\left(\xi_j^{(N_2-N_1)} e^{-\xi_j}e^{-(\rho \xi_j \lambda_{N_2+1-j})}\right) \prod_{j=1}^{N_1}\prod_{i=1}^{(N_2-j)}\left(\frac{1-e^{-(\rho \xi_j \lambda_i)}}{\rho \xi_j\lambda_i}\right).
\end{equation*}
\end{lemma}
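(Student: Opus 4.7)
The plan is to track the asymptotic behavior of $J_1$ as $\rho\to\infty$ by substituting $\sigma_i=(1+\rho\lambda_{N_2-i+1})^{-1}$ into $\triangle_1$, factoring common terms across rows and columns of the resulting determinant, and then identifying the dominant contribution in the Leibniz expansion of a reduced $N_1\times N_1$ sub-determinant under the scalings $\lambda_i=\rho^{-\alpha_i}$, $\xi_j=\rho^{-\beta_j}$.

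The first move is purely algebraic. Each entry in the exponential columns of $\triangle_1$ can be rewritten as $\sigma_i^{N_2-N_1-1}e^{-\xi_j(1+\rho\lambda_{N_2-i+1})}=\sigma_i^{N_2-N_1-1}\,e^{-\xi_j}\,e^{-\rho\xi_j\lambda_{N_2-i+1}}$, so the product $\prod_{j=1}^{N_1}e^{-\xi_j}$ pulls out immediately, one factor per exponential column. The Vandermonde denominator $|\mathbf{V}_1(\bar\sigma)|$ factors via the identity $\sigma_a-\sigma_b=\rho(\lambda_{N_2-b+1}-\lambda_{N_2-a+1})/[(1+\rho\lambda_{N_2-a+1})(1+\rho\lambda_{N_2-b+1})]$, producing a Vandermonde in $\bar\lambda$, a power of $\rho$, and $\prod_i(1+\rho\lambda_i)^{-(N_2-1)}$. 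Combined with the $\sigma_i^{N_2-N_1-1}$ prefactors pulled from the exponential columns of $\triangle_1$, these contributions, after careful bookkeeping of the powers of $(1+\rho\lambda_i)$, produce the target factor $\prod_i(1+\rho\lambda_i)^{N_1}$.

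Next, column operations reduce $\triangle_1$ further: the first $N_2-N_1$ polynomial columns form a Vandermonde sub-block in $\bar\sigma$ that cancels the corresponding portion of $|\mathbf{V}_1(\bar\sigma)|$ in the denominator, while the factor $\prod_j\xi_j^{N_2-N_1}/|\mathbf{V}_1(\bar\xi)|$ is collected from the remaining structure. After these cancellations, what remains is essentially an $N_1\times N_1$ determinant whose $(i,j)$-th entry (after reindexing) is $e^{-\rho\xi_j\lambda_i}$, with the residual Vandermonde factors in $\bar\lambda$ still present in the denominator.

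The main obstacle is the Leibniz expansion of this residual determinant. Under the stated orderings, $\alpha_i+\beta_j$ is largest when $i$ and $j$ are both large, so entries with $i+j>N_2+1$ satisfy $\rho\xi_j\lambda_i\to 0$ and contribute $\dot{=}1$, while the anti-diagonal $i+j=N_2+1$ is the unique dominant permutation and yields $\prod_{j=1}^{N_1}e^{-\rho\xi_j\lambda_{N_2+1-j}}$. The factors $(1-e^{-\rho\xi_j\lambda_i})/(\rho\xi_j\lambda_i)=\int_0^1 e^{-t\rho\xi_j\lambda_i}\,dt$ for pairs with $i+j\le N_2$ then arise by grouping permutations that differ from the anti-diagonal by a single transposition: the difference of the two corresponding exponentials, divided by the associated Vandermonde-in-$\bar\lambda$ factor, reduces algebraically to precisely this ratio. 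Verifying this matching combinatorially, and confirming that all unmatched permutations are subdominant in the $\dot{=}$ sense under the exponential scaling, is the technically delicate part of the argument.
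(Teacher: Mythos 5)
Your overall skeleton (substitute $\sigma_i=(1+\rho\lambda_{N_2-i+1})^{-1}$, pull out $\prod_j e^{-\xi_j}$, evaluate the Vandermondes asymptotically, reduce to an exponential determinant dominated by the anti-diagonal) matches the paper's, but two of your key steps have genuine gaps. First, the polynomial columns do not simply ``cancel'' against a Vandermonde sub-block of $|\mathbf{V}_1(\bar\sigma)|$. In the paper, eliminating those columns is done by successive column operations of the form $C_i\to C_i-C_1e^{-\xi\rho\lambda_1}$ applied to the \emph{exponential} columns as well, and it is exactly this mixing that generates the factors $\prod_{j=1}^{N_1}(1-e^{-\rho\xi_j\lambda_i})$ for $i=1,\dots,N_2-N_1$ (via $e^{-\xi_j\rho\lambda_k}-e^{-\xi_j\rho\lambda_1}\dot{=}e^{-\xi_j\rho\lambda_k}(1-e^{-\xi_j\rho\lambda_1})$). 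Your accounting produces no such factors for the indices $i\le N_2-N_1$, yet the target double product runs over all $i\le N_2-j$, so your reduction cannot reach the stated expression.

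Second, the permutation-pairing mechanism for the residual $N_1\times N_1$ determinant cannot work as described. Expanding the claimed asymptotic form $\prod_{j}e^{-\rho\xi_j\lambda_{N_1+1-j}}\prod_{j}\prod_{i=1}^{N_1-j}(1-e^{-\rho\xi_j\lambda_i})$ yields $2^{N_1(N_1-1)/2}$ terms, which already exceeds the $N_1!$ Leibniz terms at $N_1=3$ ($8>6$), so no matching of permutations (by single transpositions or otherwise) can reproduce it algebraically; the cross terms in the product have no permutation counterpart and the identity holds only in the $\dot{=}$ sense after a different argument. The paper avoids this entirely by an iterative Schur-complement reduction (the method of Lemma~\ref{lemma_eig1_help2}): peel off one row and column at a time, each step extracting one factor $e^{-\rho\xi_j\lambda_{N_2+1-j}}\prod_{i}(1-e^{-\rho\xi_j\lambda_i})$ using the asymptotic relations $\lambda_{N_2}-\lambda_i\dot{=}-\lambda_i$ and $\xi_1\lambda_i-\xi_j\lambda_i\dot{=}\xi_1\lambda_i$. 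You would need to replace your Leibniz argument with such a recursive reduction, or else supply a genuinely new proof that both upper- and lower-bounds the determinant by the claimed product in exponential order.
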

Also, since $|\mathbf{V}_1(\bar{\xi})|=\prod_{i<j}^{N_2}(\xi_i-\xi_j)$ ~\cite{HJ}, for asymptotic high $\rho$, using the ordering among the $\xi_i$'s and equation \eqref{eq_asymptotic_def}, we have $|\mathbf{V}_1(\bar{\xi})|\dot{=} \prod_{i=1}^{N_2}\xi_i^{(N_2-i)}$. Putting these simplified expressions for $\mathbf{V}_1(\bar{\xi})$ and $J_1$ in equation (\ref{thm1_pf_eqn3}) we get
\begin{equation}
\label{thm1_pf_eqn5}
\mathbf{f_2}(\bar{\xi}|\bar{\gamma})\dot{=}C_2  \prod_{i=1}^{N_2}(1+\rho\lambda_i)^{N_1} \prod_{j=1}^{N_1}\left(e^{-\xi_j}e^{-(\rho \xi_j \lambda_{N_2+1-j})}\xi_j^{(N_2+N_1-2j)}\right) \prod_{j=1}^{N_1}\prod_{i=1}^{(N_2-j)}\left(\frac{1-e^{-(\rho \xi_j \lambda_i)}}{\rho \xi_j\lambda_i}\right).
\end{equation}
Recall that to use Lemma~\ref{lem_gas} we had to assume $N_1<N_2$; in what follows, we consider the case when $N_1\geq N_2$. Denoting $\tilde{H_1}\triangleq \Sigma^{\frac{1}{2}}\hat{H}_1$, $V_1=\hat{H}_1^\dagger\Sigma \hat{H}_1$ can be alternatively written as $V_1=\tilde{H_1}^{\dagger}\tilde{H_1}$. Also, the eigenvalues of $V_1$ and $\tilde{H_1}\tilde{H_1}^{\dagger}$ are the same for each realization of $\tilde{H}_1$. Since we are only interested in the eigenvalues of $V_1$, it is sufficient to compute the joint pdf of the eigenvalues of $W_1=\tilde{H_1}\tilde{H_1}^{\dagger}$. Now, $\tilde{H_1}\in \mathbb{C}^{N_2\times N_1}$ can be thought as a channel matrix, semi-correlated at the receiver, of an $N_2$-receive and $N_1$-transmit antenna MIMO channel. The conditional eigen-value distribution of $W_1$ given the eigenvalues- $\bar{\sigma} \triangleq [\sigma_1, \sigma_2, \cdots \sigma_{N_2}]$- of the correlation matrix, was found in~\cite{CZZ} for $N_2\leq N_1$ and is given by the following
\begin{equation}
\label{thm1_pf_eqn1}
\mathbf{f_2}(\bar{\xi}|\bar{\sigma})=C |\Sigma|^{-N_1}|W_1|^{N_1-N_2} |\mathbf{V}_1(\bar{\xi})|^2 \frac{|D(\bar{\xi},\bar{\sigma})|}{|\mathbf{V}_1(\bar{\xi})| |\mathbf{V}_2(\bar{\sigma})|}
\end{equation}
where $\xi_1 > \xi_2 > \cdots > \xi_{N_2} > 0$ are the ordered non-zero (w.p.1) eigenvalues of both $\tilde{H_1}\tilde{H_1}^\dagger$ and $V_1$, $\sigma_1>\sigma_2\cdots >\sigma_{N_2}>0$ are the ordered eigenvalues of $\Sigma$, $C$ is a constant independent of all the eigenvalues and
\begin{IEEEeqnarray}{rl}
D(\bar{\xi},\bar{\sigma})\triangleq & \left[e^{-\frac{\xi_j}{\sigma_i}}\right]_{i,j=1}^{N_2,N_2}, \nonumber\\
\mathbf{V}_2(\bar{\sigma})=&\mathbf{V}_1(-[\sigma_1^{-1}, \sigma_2^{-1}, \cdots \sigma_{N_2}^{-1}]).
\end{IEEEeqnarray}
In what follows, we will simplify the expression given in equation (\ref{thm1_pf_eqn1}) for asymptotic high values of SNR ($\rho$), assuming that the eigenvalues $\{\xi_1, \cdots \xi_{N_2}\}$ and $\{\lambda_1, \cdots \lambda_{N_2}\}$ vary exponentially with $\rho$. For asymptotically high $\rho$, using the ordering among the $\xi_j$'s, $\lambda_i$s and equation \eqref{eq_asymptotic_def}, it can be easily shown that $|\mathbf{V}_1(\bar{\xi})|\dot{=} \prod_{i=1}^{N_2}\xi_i^{(N_2-i)}$ and $|\mathbf{V}_2(\bar{\sigma})|\dot{=} \prod_{i=1}^{N_2}(\rho \lambda_i)^{(N_2-i)}$. Finally, the term $D(\bar{\xi},\bar{\sigma})$ in equation (\ref{thm1_pf_eqn1}) can also be simplified using the following Lemma.
\begin{lemma}
\label{lemma_eig1_help2}
If the eigenvalues $\{\xi_1\geq  \cdots \geq \xi_{N_2}\}$ and $\{\lambda_1\geq  \cdots \geq \lambda_{N_2}\}$ vary exponentially with $\rho$, then for asymptotic high SNR we have the following identity
\begin{equation*}
D(\bar{\xi},\bar{\sigma})\triangleq |[e^{-\frac{\xi_j}{\sigma_i}}]_{i,j=1}^{N_2,N_2}|~\dot{=} \prod_{j=1}^{N_2}\left(e^{-\xi_j}e^{-(\rho \xi_j \lambda_{N_2+1-j})}\right) \prod_{j=1}^{N_2}\prod_{i=1}^{(N_2-j)}\left(1-e^{-(\rho \xi_j \lambda_i)}\right).
\end{equation*}
\end{lemma}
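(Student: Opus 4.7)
My plan is to prove the lemma by isolating the dominant Leibniz term in $\det D$ and identifying the residual as the claimed product of $(1-e^{-\rho\xi_j\lambda_i})$ correction factors. The starting point is to use $\sigma_i^{-1}=1+\rho\lambda_{N_2-i+1}$ to rewrite each entry as $D_{ij}=e^{-\xi_j}\,e^{-\rho\xi_j\lambda_{N_2-i+1}}$ and pull the column-constant $e^{-\xi_j}$ out of column $j$. This yields $D(\bar\xi,\bar\sigma)=\bigl(\prod_{j=1}^{N_2} e^{-\xi_j}\bigr)\det M$ with $M_{ij}=e^{-\rho\xi_j\lambda_{N_2-i+1}}$, so it suffices to show $\det M \,\dot{=}\, \prod_{j=1}^{N_2} e^{-\rho\xi_j\lambda_{N_2-j+1}}\cdot\prod_{j=1}^{N_2}\prod_{i=1}^{N_2-j}\bigl(1-e^{-\rho\xi_j\lambda_i}\bigr)$.

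Next, I would apply the rearrangement inequality to the Leibniz expansion $\det M=\sum_\pi \mathrm{sgn}(\pi)\,\exp\bigl(-\rho\sum_j\xi_j\lambda_{N_2-\pi(j)+1}\bigr)$. Because $(\xi_j)$ and $(\lambda_k)$ are both decreasing in their indices, the sum $\sum_j \xi_j\lambda_{N_2-\pi(j)+1}$ is minimized uniquely at $\pi=\mathrm{id}$, producing the dominant Leibniz term $\prod_j e^{-\rho\xi_j\lambda_{N_2-j+1}}$---precisely the first product in the claimed formula. Factoring this out gives $\det M=\prod_j e^{-\rho\xi_j\lambda_{N_2-j+1}}\cdot R$ with $R=1+\sum_{\pi\neq\mathrm{id}}\mathrm{sgn}(\pi)\,\exp(-\rho\Omega_\pi)$, where $\Omega_\pi=\sum_j\xi_j(\lambda_{N_2-\pi(j)+1}-\lambda_{N_2-j+1})>0$.

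The remaining task is to show $R \,\dot{=}\, \prod_{i+j\leq N_2}(1-e^{-\rho\xi_j\lambda_i})$. I would expand the product by inclusion--exclusion as $\sum_S(-1)^{|S|}\exp\bigl(-\rho\sum_{(i,j)\in S}\xi_j\lambda_i\bigr)$ over subsets $S$ of $\{(i,j):i+j\leq N_2\}$, and match each such $S$ with the permutation $\pi_S$ that displaces $j\mapsto N_2-i+1$ for every $(i,j)\in S$ (so that $\pi_S(j)>j$ exactly for the $j$'s appearing in $S$). Using the asymptotic rule $\lambda_a-\lambda_b \,\dot{=}\, \lambda_{\min(a,b)}$ for $a\neq b$, which is standard in the paper's exponential-equality framework, one shows $\Omega_{\pi_S} \,\dot{=}\, \sum_{(i,j)\in S}\xi_j\lambda_i$ and $\mathrm{sgn}(\pi_S)=(-1)^{|S|}$, so the two alternating sums agree term-by-term at the asymptotic level.

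The hard part will be this final combinatorial/asymptotic matching: both $R$ and the expanded product are alternating sums indexed by different objects (permutations versus subsets of the upper-left triangle), and verifying that subdominant cancellations do not spoil the exponential equality requires careful bookkeeping. A potentially cleaner alternative is induction on $N_2$ via cofactor expansion of $\det D$ along the first column: the inductive hypothesis applied to $D_{1,1}\det D^{(1,1)}$ reproduces every factor in the claim \emph{except} the missing $j=1$ factors $\prod_{i=1}^{N_2-1}(1-e^{-\rho\xi_1\lambda_i})$, which must then be seen to arise from collecting the dominant exponents of the remaining cofactors $\sum_{i\geq 2}(-1)^{i+1}D_{i,1}\det D^{(i,1)}$. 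Either route reduces the lemma to a careful identification of dominant exponents; the technical content of the proof is the combinatorial alignment between Leibniz terms and the product expansion.
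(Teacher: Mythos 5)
There is a genuine gap at the central step of your primary route. After isolating the dominant Leibniz term you are left with an \emph{alternating} sum $R=\sum_{\pi}\mathrm{sgn}(\pi)e^{-\rho\Omega_\pi}$, and you propose to show it is exponentially equal to the inclusion--exclusion expansion of $\prod_{i+j\le N_2}(1-e^{-\rho\xi_j\lambda_i})$ by a term-by-term matching. This matching cannot work as described: the two sums are indexed by sets of different cardinality ($N_2!$ permutations versus $2^{N_2(N_2-1)/2}$ subsets, already $6$ versus $8$ at $N_2=3$), and the map $S\mapsto\pi_S$ is not even well defined when $S$ contains two pairs $(i_1,j),(i_2,j)$ with the same column index (nor is $j\mapsto N_2-i+1$ on $S$, identity elsewhere, generally a permutation). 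More fundamentally, exponential equality is not preserved term-by-term through an alternating sum: the regime that matters here is exactly $\rho\xi_j\lambda_i\to 0$, where $1-e^{-\rho\xi_j\lambda_i}\dot{=}\rho\xi_j\lambda_i$ arises from a near-cancellation of two terms that are each $\dot{=}1$. Knowing each Leibniz term only up to exponential order therefore cannot determine $R$ up to exponential order; the cancellations must be performed exactly before any asymptotic approximation is applied.

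The paper's proof does precisely this. It pivots on the $(1,1)$ entry via the Schur-complement identity $\det\begin{pmatrix}A&B\\C&K\end{pmatrix}=|A|\,|K-CA^{-1}B|$, so that the subtraction $K-CA^{-1}B$ combines dominant and subdominant contributions \emph{inside} the determinant; after the asymptotic simplifications $\lambda_{N_2}-\lambda_i\dot{=}-\lambda_i$ and $\xi_1\lambda_i-\xi_j\lambda_i\dot{=}\xi_1\lambda_i$, each row of the reduced matrix carries an exact common factor $(1-e^{-\rho\xi_1\lambda_i})$, which is pulled out, leaving a determinant of the same form in $(\xi_2,\dots,\xi_{N_2})$ and $(\lambda_1,\dots,\lambda_{N_2-1})$; the lemma then follows by iterating. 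Your proposed fallback (cofactor expansion along the first column plus induction) suffers from the same defect as the Leibniz route unless you likewise fold the off-pivot contributions into the minor before expanding --- the ``collecting the dominant exponents of the remaining cofactors'' step is again an alternating sum whose leading orders cancel. To repair your argument, replace the combinatorial matching by the exact elimination step above (or an equivalent exact row/column reduction) and only then invoke the asymptotic identities.
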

Putting these simplified expressions for $|\mathbf{V}_1(\bar{\xi})|$, $|\mathbf{V}_2(\bar{\sigma})|$ and $D(\bar{\xi},\bar{\sigma})$ in equation (\ref{thm1_pf_eqn1}) we get
\begin{equation}
\label{thm1_pf_eqn2}
\mathbf{f}(\bar{\xi}|\bar{\lambda})\dot{=} \left(\prod_{i=1}^{N_2}(1+\rho \lambda_i)^{N_1}\right)\left( \prod_{j=1}^{N_2}\left(e^{-\xi_j}e^{-(\rho \xi_j \lambda_{N_2+1-j})} \xi_j^{(N_1+N_2-2j)}\right)\right)  \prod_{j=1}^{N_2}\prod_{i=1}^{(N_2-j)}\left(\frac{1-e^{-(\rho \xi_j \lambda_i)}}{\rho \lambda_i \xi_j}\right).
\end{equation}

Comparing equations (\ref{thm1_pf_eqn5}) and (\ref{thm1_pf_eqn2}), the expression for the distribution for $N_2\leq N_3$, can be written as
\begin{equation}
\label{thm1_pf_eqn6}
\mathbf{f_2}(\bar{\xi}|\bar{\gamma})\dot{=} \prod_{i=1}^{N_2}(1+\rho\lambda_i)^{N_1} \prod_{j=1}^{q}\left(e^{-\xi_j}e^{-(\rho \xi_j \lambda_{N_2+1-j})}\xi_j^{(N_2+N_1-2j)}\right) \prod_{j=1}^{q}\prod_{i=1}^{(N_2-j)}\left(\frac{1-e^{-(\rho \xi_j \lambda_i)}}{\rho \xi_j\lambda_i}\right).
\end{equation}

Recall that, we assumed that $N_2\leq N_3$ in the foregoing analysis. This assumption was important since it implies that all the eigenvalues of $V_2$ are non-zero and distinct. However, if $N_2>N_3$, then exactly $(N_2-N_3)$ eigenvalues of $V_2$ are zero and we need to consider this case separately. Because if two or more $\lambda_i$s are zero then both the numerator and denominator of the terms $J_1$ and $J_2=\frac{D(\bar{\xi},\bar{\sigma})}{\mathbf{V}_2(\bar{\sigma})}$ of equations \eqref{thm1_pf_eqn3} and \eqref{thm1_pf_eqn1}, respectively become zero. On the other hand, if only one of the $\lambda_i$s is zero, we can no longer assume that it varies with SNR. In either case the foregoing analysis cannot be pursued when $N_2>N_3$. However, both these problems can be overcome by doing the analysis in the limit when these $(N_2-N_3)$ eigenvalues tend to zero. It can be easily shown (e.g., see Lemma~6 in \cite{SMM}) that both $J_1$ and $J_2$ are well-defined in the limit when $(N_2-N_3)$ smallest eigenvalues of $V_2$ tend to zero, i.e.,
\begin{IEEEeqnarray}{l}
\lim_{\lambda_{(N_3+1)},\lambda_{(N_3+1)},\cdots ,\lambda_{N_2}\to 0}J_i=\lim_{\sigma_{1},\sigma_{2},\cdots ,\sigma_{(N_2-N_3)}\to 1}J_i
\end{IEEEeqnarray}
is well defined, for both $i=1,2$. Using this fact and following a similar approach as before we get
\begin{eqnarray}
\mathbf{f_2}(\bar{\xi}|\bar{\lambda})\dot{=}\lim_{(\lambda_{(N_3+1)}, \cdots \lambda_{(N_2)})\to 0}   \prod_{i=1}^{N_2}(1+\rho\lambda_i)^{N_1} \prod_{j=1}^{q}\left(e^{-\xi_j}e^{-(\rho \xi_j \lambda_{N_2+1-j})}\xi_j^{(N_2+N_1-2j)}\right) \prod_{j=1}^{q}\prod_{i=1}^{(N_2-j)}\left(\frac{1-e^{-(\rho \xi_j \lambda_i)}}{\rho \xi_j\lambda_i}\right) \nonumber\\
\label{thm1_pf_eqn7}
\dot{=}  \prod_{i=1}^{N_3}(1+\rho\lambda_i)^{N_1} \prod_{j=1}^{q}\left(e^{-\xi_j}e^{-(\rho \xi_j \tilde{\lambda}_{N_2+1-j})}\xi_j^{(N_2+N_1-2j)}\right) \prod_{j=1}^{q}\prod_{i=1}^{(N_2-j)\land N_3}\left(\frac{1-e^{-(\rho \xi_j \lambda_i)}}{\rho \xi_j\lambda_i}\right),
\end{eqnarray}
where
\begin{equation*}
\tilde{\lambda}_{N_2+1-j}=\left\{
\begin{array}{ll}
\lambda_{(N_2+1-j)}, &~\textrm{if}~ (N_2+1-j)\leq N_3;\\
0, &~\textrm{Otherwise}.
\end{array}\right.
\end{equation*}
Combining equations (\ref{thm1_pf_eqn6}) and (\ref{thm1_pf_eqn7}) Theorem~\ref{thm_eigenvalue_distribution_1} is proved.

\section{Proof of Theorem~\ref{thm_eigenvalue_distribution_2}}
\label{pf_thm_eigenvalue_distribution_2}

The ordering among the $\alpha_i$'s and $\beta_j$'s in $\mathcal{A}$ follows from the ordering of the eigenvalues $\lambda_i$'s and $\xi_j$'s, respectively. Besides the ordering, if $(\bar{\alpha},\bar{\beta})\notin \mathcal{A}$ then one or more of the following are true
\begin{IEEEeqnarray*}{rl}
e^{-\rho \lambda_i \xi_j}=&e^{-\rho^{(1-\alpha_i-\beta_j)}}\dot{=}~ 0,~\textrm{if}~(\alpha_i+\beta_j)<1;\\
e^{-\lambda_1}=&e^{-\rho^{-\alpha_1}}\dot{=}~ 0~\textrm{if}~\alpha_1<0;\\
e^{-\xi_1}=&e^{-\rho^{-\beta_1}}\dot{=} ~0~\textrm{if}~\beta_1<0;
\end{IEEEeqnarray*}
Since each of the terms on the left hand side of the above equations is a multiplying factor to the asymptotic expression of the joint pdf \eqref{joint_distribution}, it becomes zero if $(\bar{\alpha},\bar{\beta})\notin \mathcal{A}$. On the other hand if $(\bar{\alpha},\bar{\beta})\in \mathcal{A}$, then
\begin{IEEEeqnarray*}{rl}
(1+\rho\lambda_i)\dot{=}&\rho^{-(1-\alpha_i)^+},\\
e^{-\rho^{-\alpha_i}}\dot{=}&1,~\forall i,\\
e^{-\rho^{-\beta_j}}\dot{=}&1,~\forall j,\\ e^{-\rho^{(1-\alpha_i-\beta_j)}}\dot{=}&1,~\forall (i+j)=(N_2+1).
\end{IEEEeqnarray*}
Using this along with the Jacobians of the transformations $\lambda_i=\rho^{-\alpha_i}, ~1\leq i\leq p$ and $\xi_j=\rho^{-\beta_j}, ~1\leq j\leq q$ which are given by $J(\alpha_i)=\log(\alpha_i)\rho^{-\alpha_i}$ and $J(\beta_j)=\log(\beta_j)\rho^{-\beta_j}$, respectively, into equation (\ref{joint_distribution}) we get
\begin{eqnarray}
\label{exponent_distribution_1}
\mathbf{f}(\bar{\alpha},\bar{\beta})\dot{=}\rho^{-\Big(\sum_{i=1}^{p}\left((N_3+N_2-2i+1)\alpha_i-N_1(1-\alpha_i)^+\right)
+\sum_{j=1}^{q}(N_1+N_2-2j+1)\beta_j \Big)} \prod_{j=1}^{q}\prod_{i=1}^{(N_2-j)\land N_3}\left(\frac{1-{e}^{-\rho^{1-\alpha_i-\beta_j} }}{\rho^{1-\alpha_i-\beta_j}}\right).
\end{eqnarray}
Also, at high SNR limit for any $i,j$
\begin{equation*}
\left(\frac{1-{e}^{-\rho^{1-\alpha_i-\beta_j} }}{\rho^{1-\alpha_i-\beta_j}}\right)\dot{=}
\left\{\begin{array}{c}
1, ~~~~~~~~~~\textrm{if} ~(\alpha_i+\beta_j)>1;\\
\rho^{-(1-\alpha_i-\beta_j)}, \textrm{if} ~(\alpha_i+\beta_j)\leq 1.
\end{array}\right.~\left[\because~ \lim_{z\to 0}\frac{1-e^{-z}}{z}=1\right].
\end{equation*}
Now, using the fact that the product of several converging sequences converges to the product of their individual limiting values we get
\begin{equation}
\prod_{j=1}^{q}\prod_{i=1}^{(N_2-j)\land N_3}\left(\frac{1-{e}^{-\rho^{1-\alpha_i-\beta_j} }}{\rho^{1-\alpha_i-\beta_j}}\right)\dot{=}\prod_{j=1}^{q}\prod_{i=1}^{(N_2-j)\land N_3}\rho^{-(1-\alpha_i-\beta_j)^+}.
\end{equation}
Finally, putting this in equation \eqref{exponent_distribution_1} we get Theorem~\ref{thm_eigenvalue_distribution_2}.

\section{Proof of theorem~\ref{thm_optimization_problem}}
\label{pf_thm_optimization_problem}

As discussed before, here we shall compute the SNR exponent of $\Pr\left(\mathcal{O}|f<1\right)$. The proof of this part is rather long and hence divided into three steps. We start with an outline of the different parts. In the first part a straightforward analysis of the outage probability following a similar method as in \cite{tse1} yields $\hat{d}(r)$ as the minimum value of the negative SNR exponent of the corresponding pdf, where the minimization is over the intersection of the outage set and the support set of the pdf. In the next step, this problem is then transformed into an equivalent one having smaller number of variables which is solved in the final and third step.

{\bf Step 1:}~ Using Laplace's method of integration as in \cite{tse1} we get from equation \eqref{outage_integration}
\begin{IEEEeqnarray}{rl}
\hat{d}(r)=\min_{\left\{\left(\bar{\alpha},\bar{\beta},\bar{\gamma}\right)\in \mathcal{O}\cap \mathcal{S}\right\}}\left(E\left(\bar{\alpha},\bar{\beta}\right) +\sum_{i=1}^{t}(k+m-2i+1)\gamma_i\right)=\min_{\left\{\left(\bar{\alpha},\bar{\beta},\bar{\gamma}\right)\in \mathcal{O}\cap \mathcal{S}\right\}}F\left(\bar{\alpha},\bar{\beta},\bar{\gamma}\right),
\end{IEEEeqnarray}
where $\mathcal{S}=\mathcal{A}\cap \mathcal{D}$ represents the support set of the pdf of $\left(\bar{\alpha},\bar{\beta},\bar{\gamma}\right)$.

Suppose at a given $r$, the objective function attains the minimum value for an $\bar{\alpha}\in \mathcal{A}\cap \mathcal{D}$ where $\alpha_i>1$ for one or more $i$'s. Let $\widetilde{\bar{\alpha}}=\min \{[1, 1, \cdots ,1],\bar{\alpha}\}$, where the minimization is componentwise. Clearly, $\widetilde{\bar{\alpha}}\in \mathcal{A}\cap \mathcal{D}$ but at this point $E$ has a strictly smaller value. This proves that in the optimal solution, $\alpha_i\in [0, 1]$ for all $i$. The same is true for $\bar{\beta}$ and $\bar{\gamma}$. Thus $\hat{d}(r)$ is given as

\begin{IEEEeqnarray}{rl}
\label{general_opt}
\hat{d}(r)= \min_{\left\{\left(\bar{\alpha},\bar{\beta},\bar{\gamma}\right)\in \hat{\mathcal{O}}\right\}}\sum_{i=1}^{p}(n+m+k-2i+1)\alpha_i -kp &+ \sum_{j=1}^{q}(n+k-2j+1)\beta_j+\sum_{j,i=1}^{q,(n-j)\land m}(1-\alpha_i-\beta_j)^+ \nonumber\\
\label{eq_opt_1}
& +\sum_{i=1}^{(k\land m)}(k+m-2i+1)\gamma_i,
\end{IEEEeqnarray}
where
\begin{IEEEeqnarray}{rl}
\hat{\mathcal{O}}=\Bigg\{ \left(\bar{\alpha},\bar{\beta},\bar{\gamma}\right):
\label{mi_constraint}
\sum_{i=1}^{p}(1-\alpha_i)+(1-f)\sum_{j=1}^{q}(1-\beta_j)\leq & r,\\
\label{ddf_constraint}
0\leq f = \frac{r}{\sum_{i=1}^{t}(1-\gamma_i)}< & 1,\\
\label{redun1}
\left(\alpha_{i}+\beta_j\right) \geq &1, \forall (i,j):(i+j)\geq (n+1),\\
\label{redun2}
0\leq \alpha_1 \leq \cdots \leq \alpha_p \leq &1,\\
0\leq \beta_1 \leq \cdots \leq \beta_{q} \leq &1, \\
\label{redun4}
0\leq \gamma_1 \leq \cdots \leq \gamma_{t} \leq &1\Bigg\},
\end{IEEEeqnarray}
where equation \eqref{ddf_constraint} follows from equation \eqref{ratio} and the fact that $f<1$. Note that the number of optimization variables increases linearly with the number of antennas at all the nodes. To overcome this problem, in what follows, we transform the previous optimization problem into an equivalent one having a fixed number of variables which is independent of the number of antennas at the nodes.
\\
{\bf Step~2:}~The objective function in \eqref{eq_opt_1} decreases strictly monotonically as $\alpha_i$ is decreased for any $i$ and the rate of decrease with $\alpha_i$ is smaller for a larger value of $i$. The same is true for $\bar{\beta}$ and $\bar{\gamma}$. Thus, following a similar method as in $\cite{tse1}$, it can be shown that if $\sum_{i=1}^{p}(1-\alpha_i)=a$, $\sum_{j=1}^{q}(1-\beta_j)=b$, $\sum_{l=1}^{t}(1-\gamma_l)=y$ and $\left(\bar{\alpha},\bar{\beta},\bar{\gamma}\right)$ satisfy equations \eqref{redun2}-\eqref{redun4}, then the optimal choice of $\left(\bar{\alpha},\bar{\beta},\bar{\gamma}\right)$ that minimizes $F(.)$ is given by $(\phi_{\alpha}(a),\phi_{\beta}(b),\phi_{\gamma}(y))$, where
\begin{IEEEeqnarray}{l}
\label{eq_phi_alpha}
\phi_{\alpha}(a)=[\hat{\alpha}_1, \hat{\alpha}_2, \cdots, \hat{\alpha}_p ]^T~ :~\hat{\alpha}_i=\left(1-\left(a-i+1\right)^+\right)^+, ~1\leq i\leq p.
\end{IEEEeqnarray}
\begin{IEEEeqnarray}{l}
\label{eq_phi_beta}
\phi_{\beta}(b)=[\hat{\beta}_1, \hat{\beta}_2, \cdots, \hat{\beta}_q ]^T~ :~\hat{\beta}_j=\left(1-\left(b-j+1\right)^+\right)^+, ~1\leq j\leq q;\\
\label{eq_phi_gamma}
\phi_{\gamma}(y)=[\hat{\gamma}_1, \hat{\gamma}_2, \cdots, \hat{\gamma}_t ]^T~ :~\hat{\gamma}_l=\left(1-\left(y-l+1\right)^+\right)^+, ~1\leq l\leq t.
\end{IEEEeqnarray}

Denoting by $\mathcal{T}(a,b,y)$ the following set
\begin{IEEEeqnarray*}{rl}
\Big\{(\bar{\alpha}, \bar{\beta},\bar{\gamma}):& \sum_{i=1}^{p}(1-\alpha_i)=a, \sum_{j=1}^{q}(1-\beta_j)=b, \sum_{l=1}^{q}(1-\delta_l)=y,~\left(\bar{\alpha},\bar{\beta},\bar{\gamma}\right) ~\textrm{satisfy equations \eqref{redun2}-\eqref{redun4}}\Big\},
\end{IEEEeqnarray*}
from the foregoing argument we have
\begin{equation}
\label{eq_pf_opt_problem_first_min}
\min_{\{\mathcal{T}(a,b,y)\}} F\left((\bar{\alpha}, \bar{\beta},\bar{\gamma})\right)=F\left(\phi_{\alpha}(a),\phi_{\beta}(b),\phi_{\gamma}(y)\right).
\end{equation}
Let us now define the following set of new variables
\begin{IEEEeqnarray}{l}
\mathcal{O}_1= \left\{(a,b,y): a+b\left(1-\frac{r}{y}\right)\leq r,~ (a+b)\leq n,~ 0\leq b\leq q,~r< y\leq t\right\}.
\end{IEEEeqnarray}
It is clear from the definition of $\mathcal{T}(a,b,y)$ that,
\begin{equation}
\hat{\mathcal{O}}_1 \triangleq \bigcup_{\{(a,b,y)\in \mathcal{O}_1\}}\mathcal{T}\left(a,b,y\right)\supset \hat{\mathcal{O}}.
\end{equation}
Since the minimum of a function over a set is not larger than the minimum of that function over a subset of it, the above relation along with equation~\eqref{eq_pf_opt_problem_first_min} imply
\begin{equation}
\label{eq_equivalent_opt_lb}
\min_{\{(a,b,y)\in \mathcal{O}_1\}} F\left(\phi_{\alpha}(a),\phi_{\beta}(b),\phi_{\gamma}(y)\right)= \min_{\{(\bar{\alpha}, \bar{\beta},\bar{\gamma})\in \hat{\mathcal{O}}_1\}} F\left(\bar{\alpha}, \bar{\beta},\bar{\gamma}\right) \leq \min_{\{(\bar{\alpha}, \bar{\beta},\bar{\gamma})\in \hat{\mathcal{O}}\}} F\left(\bar{\alpha}, \bar{\beta},\bar{\gamma}\right).
\end{equation}
Before proceeding further we take note of a few properties of the newly defined variables $a, b$ and $y$. From the definition of $\phi_i$'s it is clear that if $(a,b,y)\in \mathcal{O}_1$, then $\left(\phi_{\alpha}(a),\phi_{\beta}(b),\phi_{\gamma}(y)\right)$ satisfy equations \eqref{mi_constraint}, \eqref{ddf_constraint} and \eqref{redun2}-\eqref{redun4}. Suppose for some $(i+j)=(n+1)$, $(\hat{\alpha}_i+\hat{\beta_j})<1$, then it can be shown that $\sum_{i=1}^{p}(1-\hat{\alpha}_i)+\sum_{j=1}^{q}(1-\hat{\beta}_j)> n$, which is impossible. Thus $(\hat{\alpha}_i+\hat{\beta_j})\geq 1$ for all $(i+j)\geq 1$, which in turn imply that the $\left(\phi_{\alpha}(a),\phi_{\beta}(b),\phi_{\gamma}(y)\right)$ tuple also satisfies equation \eqref{redun1}. That is
\begin{IEEEeqnarray*}{l}
(a,b,y)\in \mathcal{O}_1 ~\Rightarrow~ (\phi_{\alpha}(a),\phi_{\beta}(b),\phi_{{\gamma}}(y))\in \hat{\mathcal{O}},\\
\label{eq_equivalent_opt_ub}
\Rightarrow~\min_{\{(\bar{\alpha}, \bar{\beta},\bar{\gamma})\in \hat{\mathcal{O}}\}} F\left(\bar{\alpha}, \bar{\beta},\bar{\gamma}\right) \leq \min_{\{(a,b,y)\in \mathcal{O}_1\}} F\left(\phi_{{\alpha}}(a),\phi_{{\beta}}(b),\phi_{{\gamma}}(y)\right).
\end{IEEEeqnarray*}
Combining this with equation \eqref{eq_equivalent_opt_lb}, we get
\begin{equation}
\label{eq_equivalent_opt_1}
\min_{\{(\bar{\alpha}, \bar{\beta},\bar{\gamma})\in \hat{\mathcal{O}}\}} F\left(\bar{\alpha}, \bar{\beta},\bar{\gamma}\right) ~=~ \min_{\{(a,b,y)\in \mathcal{O}_1\}} F\left(\phi_{\alpha}(a),\phi_{\beta}(b),\phi_{\gamma}(y)\right).
\end{equation}
Therefore, we have an equivalent optimization problem to that presented in equation \eqref{eq_opt_1}, but with less number of variables, i.e., $\hat{d}(r)$ can be equivalently written as
\begin{equation}
\hat{d}(r)=\min_{\left\{(a,b,y)\in \mathcal{O}_1 \right\}}~ F\left(\phi_{{\alpha}}(a),\phi_{{\beta}}(b),\phi_{{\gamma}}(y)\right).
\end{equation}

When $(a+b)=n$, the objective function has a property which we state now and will be helpful to solve the minimization problem in the next section.
\begin{cl}
\label{cl_a_plus_b_equal_1}
$F\left(\phi_{{\alpha}}(a),\phi_{{\beta}}(n-a),\phi_{{\gamma}}(y)\right)$ is monotonically decreasing with $a$.
\end{cl}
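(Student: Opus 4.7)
The plan is to exploit the fact that the $\bar\gamma$-contribution of $F$, namely $\sum_i(k+m-2i+1)\gamma_i$, does not depend on $a$, so that the claim reduces to monotonicity of the $a$-dependent piece $G(a):=E(\phi_\alpha(a),\phi_\beta(n-a))$. Each coordinate of $\phi_\alpha(a)$ and $\phi_\beta(n-a)$ is a continuous piecewise-linear function of $a$, and $E$ is piecewise linear on $[0,1]^p\times[0,1]^q$; hence $G$ is continuous and piecewise linear on the feasible interval $[\max\{0,n-q\},\,p]$. It therefore suffices to prove $G(a+1)<G(a)$ at each integer $a$ with $a$ and $a+1$ both feasible, since linear interpolation extends monotone decrease to the entire interval.

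To handle the integer shift, observe that for integer $a$, $\phi_\alpha(a)$ and $\phi_\beta(b)$ (with $b=n-a$) are $\{0,1\}$-valued step sequences: $\hat\alpha_i=0$ for $i\leq a$, $\hat\alpha_i=1$ for $i>a$, and analogously for $\hat\beta$. The shift $a\mapsto a+1$ flips precisely two coordinates, namely $\hat\alpha_{a+1}$ from $1$ down to $0$ and $\hat\beta_b$ from $0$ up to $1$. Since $\hat\alpha_i\in[0,1]$ allows the rewriting $-k(1-\hat\alpha_i)^+=-k+k\hat\alpha_i$, the ``linear'' part of $E$ becomes
$$L(\bar\alpha,\bar\beta)=\sum_{i=1}^p (m+n+k-2i+1)\hat\alpha_i \,-\, kp \,+\, \sum_{j=1}^q (k+n-2j+1)\hat\beta_j,$$
and a direct computation gives $\Delta L = -(m+n+k-2(a+1)+1)+(k+n-2b+1)= 4a-2n-m+2$.

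Next I would evaluate the change in the double sum $S=\sum_{j=1}^q\sum_{i=1}^{(n-j)\land m}(1-\hat\alpha_i-\hat\beta_j)^+$. Partitioning index pairs according to whether $i=a+1$ or $j=b$, the step structure of $\phi_\alpha,\phi_\beta$ shows that only two families of pairs change value: pairs $(a+1,j)$ with $j<b$ jump from $0$ to $1$ (there are $b-1$ such pairs, provided $a+1\leq m$ and $b-1\leq q$), and pairs $(i,b)$ with $i\leq a$ drop from $1$ to $0$ (there are $a$ such pairs, provided $b\leq q$ and $a\leq m$). The range conditions $a+1\leq m$, $b\leq q$, and $b-1\leq q$ are automatic in the feasible region, since $p=m\land n$ forces $a+1\leq p\leq m$, while $b\leq q$ is part of the defining constraints of $\mathcal{O}_1$ on the boundary $a+b=n$. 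Hence $\Delta S=(b-1)-a=n-2a-1$.

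Combining the two pieces,
$$G(a+1)-G(a)=\Delta L+\Delta S=2a-n-m+1.$$
The feasibility constraint $a+1\leq p=m\land n$ forces $a\leq m-1$ when $p=m$, and $a\leq n-1$ when $p=n\leq m$; in either case $G(a+1)-G(a)\leq -1<0$, which delivers the required strict monotone decrease. The main obstacle I foresee is the bookkeeping in the case analysis for $\Delta S$: one must enumerate how the index caps $(n-j)\land m$ interact with the two flipped coordinates and verify the range conditions above throughout the feasible region. Beyond that, the argument is elementary algebra.
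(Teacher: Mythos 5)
Your arithmetic checks out and is consistent with the paper: the per-step decrement you obtain, $G(a+1)-G(a)=2a-n-m+1\leq 2(m\wedge n)-1-n-m\leq -1$, is exactly the increment of the closed form the paper derives, namely $F=\sum_{i=1}^{p}(m+n+1-2i)\hat{\alpha}_i+\sum_{l}(m+k+1-2l)\hat{\gamma}_l$ on the face $a+b=n$. But the route is genuinely different. The paper first observes that on this face $(\hat{\alpha}_i+\hat{\beta}_j)=1$ for all $(i+j)=n+1$ and $(\hat{\alpha}_i+\hat{\beta}_j)\leq 1$ for all $(i+j)\leq n$, so every positive part in the double sum can be replaced by its argument; after that the $\hat{\beta}$-terms cancel identically and $F$ collapses to the expression above, which is manifestly non-increasing in $a$ (positive coefficients times non-increasing $\hat{\alpha}_i$) \emph{and} independent of $b$ --- a stronger fact that the paper uses separately in Step 3 of Appendix C (to argue the optimum on the segment BC sits at C). Your discrete-differencing argument recovers the monotonicity but not the $b$-independence.

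The one step you should shore up is the reduction ``$G$ is continuous and piecewise linear, hence it suffices to check $G(a+1)<G(a)$ at integers.'' That implication is false for a general piecewise-linear function: if $G$ had a breakpoint strictly inside some $[a,a+1]$ it could rise and then fall while still decreasing across the step. You must show the breakpoints of $a\mapsto E\left(\phi_\alpha(a),\phi_\beta(n-a)\right)$ occur only at integers. The coordinates of $\phi_\alpha(a)$ and $\phi_\beta(n-a)$ kink only at integer $a$, but the terms $\left(1-\hat{\alpha}_i-\hat{\beta}_j\right)^+$ could in principle switch branch at a non-integer $a$ where $\hat{\alpha}_i(a)+\hat{\beta}_j(n-a)=1$. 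Ruling this out requires observing that on the face $a+b=n$ one has $\hat{\alpha}_i+\hat{\beta}_j\leq 1$ throughout for every pair appearing in the double sum (which runs over $i\leq (n-j)\wedge m$, i.e., $i+j\leq n$), so the positive part never activates its kink --- which is precisely the paper's key observation. Once you have made it, you can drop the positive parts, cancel the $\hat{\beta}$-terms, and read off monotonicity directly, making the integer-step bookkeeping unnecessary. So the proposal is correct in substance, but its shortest honest completion essentially passes through the paper's argument.
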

\begin{proof}
It can be shown using equations \eqref{eq_phi_alpha}-\eqref{eq_phi_gamma} that when $(a+b)=n$ we have
\begin{equation*}
\label{eq_iplusj_equalto1}
    (\hat{\alpha}_i+\hat{\beta}_j)=1, ~\forall (i+j)=(n+1)~\textrm{and}~
    (\hat{\alpha}_i+\hat{\beta}_j)\leq 1, ~\forall (i+j)\leq n.
\end{equation*}
Using these relations in the expression for $F\left(\bar{\hat{\alpha}},\bar{\hat{\beta}},\bar{\hat{\gamma}}\right)$, after some algebra we get
\begin{IEEEeqnarray*}{rl}
F\left(\bar{\hat{\alpha}},\bar{\hat{\beta}},\bar{\hat{\gamma}}\right)=& \sum_{i=1}^{p}(m+n+1-2i)\hat{\alpha}_i+\sum_{l=1}^{t}(m+k+1-2l)\hat{\gamma}_l,\\
=&\sum_{i=1}^{p}(m+n+1-2i)\left(1-\left(a-i+1\right)^+\right)^++\sum_{l=1}^{t}(m+k+1-2l)\hat{\gamma}_l.
\end{IEEEeqnarray*}
The above expression is clearly independent of $b$ and monotonically decreasing with $a$.
\end{proof}

{\bf Step 3:} In this final step, we determine the minimum of $F(.)$ on $\mathcal{O}_1$ and establish the theorem. Depending on the value of $y$ the set of feasible $(a,b)$ pairs takes on different shapes as shown in the following figures. For example, when $\frac{qr}{(n-r)}<y\leq \frac{qr}{(q-r)}$ the feasible set of $(a,b)$ pairs is the trapezoid ABDE shown in Figure~\ref{y_range-b}.

\begin{figure}[htp]
\subfigure[$\mathcal{R}_1=\left\{r<y\leq \frac{qr}{(n-r)}\right\}.$]{\label{y_range-a}\setlength{\unitlength}{1mm}\begin{picture}(70,40)
\thicklines
\put(0,0){\vector(1,0){50}}
\put(45,-4){$b$}
\put(0,0){\vector(0,1){40}}
\put(-4,35){$a$}
\put(0,22){\line(2,-1){40}}
\put(25,10){$a+b(1-\frac{r}{y})=r$}
\put(20,0){\line(0,1){20}}
\put(11,3){$b=q$}
\put(0,28){\line(1,-1){28}}
\put(6,23){$(a+b)=n$}
\put(-1,-3){E}
\put(19,-3){A}
\put(-3,21){D}
\put(20,8){B}
\put(12,16){C}
\end{picture}}
\subfigure[$\mathcal{R}_2=\left\{\frac{qr}{(n-r)}<y\leq \frac{qr}{(q-r)}\right\}.$]{\label{y_range-b}\setlength{\unitlength}{1mm}\begin{picture}(95,40)
\thicklines
\put(0,0){\vector(1,0){60}}
\put(55,-4){$b$}
\put(0,0){\vector(0,1){40}}
\put(-4,35){$a$}
\put(0,22){\line(4,-3){29}}
\put(25,4){$a+b(1-\frac{r}{y})=r$}
\put(20,0){\line(0,1){20}}
\put(21,15){$b=q$}
\put(0,31){\line(1,-1){23}}
\put(6,27){$(a+b)=n$}
\put(-1,-3){E}
\put(19,-3){A}
\put(-3,21){D}
\put(20,7){B}
\end{picture}}
\subfigure[$\mathcal{R}_3=\left\{\frac{qr}{(q-r)}<y\leq t\right\}.$]{\label{y_range-c}\setlength{\unitlength}{1mm}\begin{picture}(70,40)
\thicklines
\put(0,0){\vector(1,0){50}}
\put(45,-4){$b$}
\put(0,0){\vector(0,1){40}}
\put(-4,35){$a$}
\put(0,22){\line(3,-4){16.5}}
\put(14,5){$a+b(1-\frac{r}{y})=r$}
\multiput(20,0)(0,2){10}{\line(0,1){.5}}
\put(0,31){\line(1,-1){23}}
\put(6,27){$(a+b)=n$}
\put(-1,-3){E}
\put(19,-3){A}
\put(-3,21){D}
\put(16,-3){F}
\end{picture}}
{\center \caption{Sets of feasible $(a,b)$ tuples for different range of $y$.}}
\label{y_range}
\end{figure}
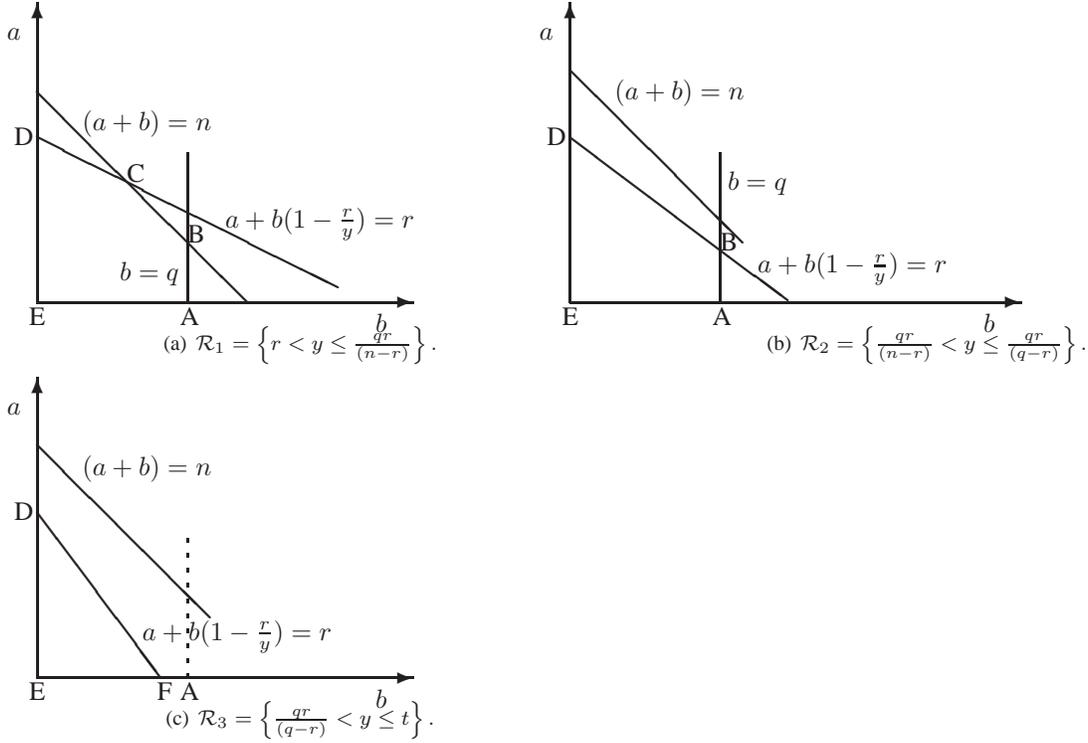

For any given value of $y$ the following observations will help us solve the problem:
\begin{itemize}
\item  The optimal $(a,b)$ pair always lies on the boundary, because the objective function is monotonically decreasing with both $a$ and $b$.
\item By the same argument the optimal point on the line segment AB is B.
\item The optimal point on the line segment BC is C. Because, by Claim~\ref{cl_a_plus_b_equal_1} when $(a+b)=n$, the objective function is independent of $b$ and monotonically decreasing with $a$.
\end{itemize}
Thus for a given $y$, the objective function has to be minimized on the line segment CD, BD or DF. In what follows, we treat each of these cases individually:
\begin{enumerate}
\item When $\mathcal{R}_1=\left(r, \frac{qr}{(n-r)}\right]$, the optimal $(a,b)$ lies on DC and the diversity order is given by
\begin{equation*}
d^*(r)=\min_{\left\{y\in \mathcal{R}_1,~0\leq b\leq \frac{y(n-r)}{r}\right\}} F\left(\phi_{{\alpha}}(a),\phi_{{\beta}}(b),\phi_{{\gamma}}(y)\right).
\end{equation*}

\item When $\mathcal{R}_2=\left(\frac{qr}{(n-r)}, \frac{qr}{(q-r)}\right]$, the optimal $(a,b)$ lies on BD and the diversity order is given by
\begin{equation*}
d^*(r)=\min_{\left\{y\in \mathcal{R}_2,~0\leq b\leq q\right\}} F\left(\phi_{{\alpha}}(a),\phi_{{\beta}}(b),\phi_{{\gamma}}(y)\right).
\end{equation*}

\item When $\mathcal{R}_3=\left(\frac{qr}{(q-r)}, t\right]$, the optimal $(a,b)$ lies on DF and the diversity order is given by
\begin{equation*}
d^*(r)=\min_{\left\{y\in \mathcal{R}_3,~0\leq b\leq \frac{ry}{(y-r)}\right\}} F\left(\phi_{{\alpha}}(a),\phi_{{\beta}}(b),\phi_{{\gamma}}(y)\right).
\end{equation*}
\end{enumerate}
where $a+b\left(1-\frac{r}{y}\right)=r$ in all of the above cases. Finally, combining the different cases we get
\begin{equation*}
\hat{d}(r)= \min_{1\leq i\leq 3}~\min_{\left\{y\in \mathcal{R}_i,~b\in \mathcal{B}_i(y)\right\}} F\left(\phi_{{\alpha}}\left(r-b\left(1-\frac{r}{y}\right)\right), \phi_{{\beta}}(b), \phi_{{\gamma}}\left(y\right) \right),
\end{equation*}
where $\mathcal{B}_1(y)=\left[0,\frac{y(n-r)}{r}\right];~
\mathcal{B}_2(y)=\left[0,q\right];~
\mathcal{B}_3(y)=\left[0, \frac{ry}{(y-r)}\right]$.

\section{Proof of Theorem~\ref{thm_analytical1}: The DMT on a ($n,1,n$) relay channel}
\label{pf_thm_analytical1}
We consider the case where $n=m\geq 2$ and $k=1$. Combining it with Example~\ref{ex_dmt_111_ch}, the proof of the theorem will be complete. From Theorem~\ref{thm_optimization_problem} we know that when $r\geq t=1$, $d_1^*(r)=d_{n,n}(r)$. So, let us consider the case when $r\leq 1$. Since the optimal solution always lie on the line $a+b\left(1-\frac{r}{y}\right)=r$, $r\leq 1$ implies that $a\leq 1$. Using this in the definitions of $\phi_i$s, we get
\begin{IEEEeqnarray*}{rl}
G(a,b,y)=& F\left(\phi_{{\alpha}}(a),\phi_{{\beta}}(b),\phi_{{\gamma}}(y)\right),\\
=& n(n-1)+2n(1-a)+n(1-b)+n(1-y)-n+(a+b-1)^+.
\end{IEEEeqnarray*}
We know from Theorem~\ref{thm_optimization_problem} that the above objective function has to be minimized over three different sets of $(a,b)$ pairs. For $r\leq 1$, $\mathcal{R}_1=\Phi$, so we consider $y\in \mathcal{R}_2=\left(\frac{r}{n-r},\frac{r}{1-r}\right]$. We know from Figure~\ref{y_range-b} that the optimal solution lie on the line segment BD and since the objective function is also linear in $a$ and $b$ the optimal solution is one of the extreme points depending on the slope of the line
\begin{equation*}
a+b\left(1-\frac{r}{y}\right)=r.
\end{equation*}
Since
\begin{equation*}
\left(1-\frac{r}{y}\right)\leq \frac{1}{2}~\forall~y~\in \mathcal{R}_2,
\end{equation*}
the objective function attains its minimum at B, where $b^*=1$ and $a^*=r-\left(1-\frac{r}{y}\right)$. Putting this into the objective function we have
\begin{equation}
\label{eq_dmt_n1n_temp1}
G(a^*,b^*,y)=(n-1)^2+(2n-1)(1-r)+(2n-1)\left(1-\frac{r}{y}\right)+n(1-y).
\end{equation}
The above function is convex in $y$ and the infimum is attained at $y=r$ which is given as
\begin{equation}
\label{eq_dmt_n1n_R2}
d\left(\mathcal{R}_2\right)=(n-1)^2+(3n-1)(1-r),~0\leq r\leq 1.
\end{equation}

Next we consider the case when $y\in \mathcal{R}_3=\left(\frac{r}{1-r},1\right]$. This set is non-empty only for $r\leq \frac{1}{2}$ and the optimal point lie on the line segment DF in Figure~\ref{y_range-c}. Dividing the set $\mathcal{R}_3$ further into two subsets $\mathcal{R}_{31}=\left(\frac{r}{1-r},2r\right]$ and $\mathcal{R}_{32}=(2r,1]$ we see that $\left(1-\frac{r}{y}\right)\leq \frac{1}{2}$ when $y\in \mathcal{R}_{31}$ and $\left(1-\frac{r}{y}\right)\geq \frac{1}{2}$ when $y\in \mathcal{R}_{32}$. The objective function attains minimum value at point F when $y\in \mathcal{R}_{31}$ and at point D when $y\in \mathcal{R}_{32}$ and given by
\begin{equation*}
G(a^*,b^*,y)=\left\{\begin{array}{cc}
n^2+n\left(2-\frac{y^2}{y-r}\right),&\textrm{for}~y\in \mathcal{R}_{31};~\left[\because~(a^*,b^*)=\left(0,\frac{yr}{y-r}\right)\right]\\
n(n-1)+2n(1-r)+(1-y),&\textrm{for}~\textrm{for}~y\in \mathcal{R}_{32},~\left[\because~(a^*,b^*)=\left(r,0\right)\right].
\end{array}\right.
\end{equation*}
Now, optimizing this function in the corresponding sets of $y$ we get
\begin{equation*}
G(a^*,b^*,y^*)=\left\{\begin{array}{cc}
n^2+2n(1-2r),&~\left[\because ~y^*=2r,~\textrm{for}~y\in \mathcal{R}_{31}\right]\\
n(n-1)+2n(1-r),&~\left[\because ~y^*=1,~\textrm{for}~y\in \mathcal{R}_{32}\right].
\end{array}\right.
\end{equation*}
which in turn imply
\begin{equation}
\label{eq_dmt_n1n_R3}
d\left(\mathcal{R}_3\right)=n(n-1)+2n(1-r),~\textrm{for}~0\leq r\leq \frac{1}{2}.
\end{equation}
Now, combining equations \eqref{eq_dmt_n1n_R2} and \eqref{eq_dmt_n1n_R3} we get
\begin{equation}
\label{eq_dmt_n1n_R23}
\hat{d}(r)=\min \left\{d\left(\mathcal{R}_2\right), d\left(\mathcal{R}_3\right)\right\}=(n-1)^2+(3n-1)(1-r),~\textrm{for}~0\leq r\leq 1.
\end{equation}
Finally, putting this in Theorem~\ref{thm_optimization_problem} and combining the result with Example~\ref{ex_dmt_111_ch} we get
\begin{equation*}
\label{eq_dmt_n1n_rlessthan1}
d^*_1(r)=\left\{\begin{array}{cc}
\frac{(1-r)}{\max\{\frac{1}{2},r\}},&~0\leq r\leq 1~\textrm{and}~ n=1;\\
(n-1)^2+(3n-1)(1-r),&~0\leq r\leq 1~\textrm{and}~ n\geq 2;\\
d_{n,n}(r),&~1\leq r\leq n~\textrm{and} ~n\geq 2.
\end{array}\right.
\end{equation*}

\section{Proof of Theorem~\ref{thm_analytical2}: The DMT on a $(1,k,1)$ relay channel}
\label{pf_thm_analytical2}
On a $(1,k,1)$ channel we have $a, b, y\leq 1$. Putting this in the definitions of $\phi_i$s from Theorem~\ref{thm_optimization_problem} we get
\begin{equation*}
G(a,b,y)=F\left(\phi_{{\alpha}}(a),\phi_{{\beta}}(b),\phi_{{\gamma}}(y)\right)=(k+1)(1-r)+(k+1)b\left(1-\frac{r}{y}\right)-kb+k(1-y).
\end{equation*}
Since
\begin{equation*}
\left(1-\frac{r}{y}\right)\leq \frac{1}{2}~\forall~y~\in \mathcal{R}_1,
\end{equation*}
the objective function attains its minimum at $b^*=\max\{\mathcal{B}_1(y)\}=y\left(\frac{1-r}{r}\right)$. Putting this into the objective function we have
\begin{equation}
\label{eq_dmt_1k1_temp1}
G(a^*,b^*,y)=b^*+k(1-y)=k+y\left(\frac{1-(k+1)r}{r}\right).
\end{equation}
Clearly, the largest and smallest feasible value of $y$ in $\mathcal{R}_1=\left(r,\frac{r}{1-r}\right]$ minimizes the above function when the coefficient of $y$ is negative and non-negative, respectively. That is, the optimal $y$ is given as
\begin{equation*}
y^*=\left\{\begin{array}{cc}
r,&\textrm{for}~0\leq r\leq \frac{1}{(1+k)};\\
\frac{r}{1-r},&\textrm{for}~\frac{1}{(1+k)}\leq r \leq \frac{1}{2};\\
1,&\textrm{for}~r\geq \frac{1}{2}.
\end{array}\right.
\end{equation*}
Putting this solution in equation \eqref{eq_dmt_1k1_temp1} we get
\begin{equation}
\label{eq_dmt_1k1_R1}
d\left(\mathcal{R}_1\right)=\left\{\begin{array}{cc}
(k+1)(1-r),&\textrm{for}~0\leq r\leq \frac{1}{2};\\
1+k\left(\frac{1-2r}{1-r}\right), & \textrm{for}~\frac{1}{(k+1)}\leq r\leq \frac{1}{2};\\
\left(\frac{1-r}{r}\right),&\textrm{for}~\frac{1}{2}\leq r\leq 1.
\end{array}\right.
\end{equation}
Since for $m=n=1$, $\mathcal{R}_2=\Phi$, next we consider the case when $y\in \mathcal{R}_3=\left(\frac{r}{1-r},1\right]$. This set is non-empty only for $r\leq \frac{1}{2}$ and the optimal point lies on the line segment DF in Figure~\ref{y_range-c}. Dividing the set $\mathcal{R}_3$ further into two subsets namely $\mathcal{R}_{31}=\left(\frac{r}{1-r},(k+1)r\right]$ and $\mathcal{R}_{32}=\Big((k+1)r,1\Big]$, we see that $\left(1-\frac{r}{y}\right)\leq \frac{1}{2}$ when $y\in \mathcal{R}_{31}$ and $\left(1-\frac{r}{y}\right)\geq \frac{1}{2}$ when $y\in \mathcal{R}_{32}$. The objective function attains minimum value at point F, where $b=\max \mathcal{B}_3(y)$ when $y\in \mathcal{R}_{31}$ and at point D, where $b=\min \mathcal{B}_3(y)$ when $y\in \mathcal{R}_{32}$ and given by
\begin{equation*}
G(a^*,b^*,y)=\left\{\begin{array}{cc}
1+k\left(2-\frac{y^2}{y-r}\right),&\textrm{for}~y\in \mathcal{R}_{31};~\left[b^*=\frac{yr}{y-r}\right]\\
(k+1)(1-r)+k(1-y),&\textrm{for}~\textrm{for}~y\in \mathcal{R}_{32}.~\left[b^*=0\right].
\end{array}\right.
\end{equation*}
Both these functions are minimized by the maximum value of $y$ in their corresponding range which are
\begin{equation*}
y^*=\left\{\begin{array}{cc}
(k+1)r,&\textrm{when}~y\in \mathcal{R}_{31} \cap \left\{r\leq \frac{1}{(1+k)}\right\};\\
1,&\textrm{when}~y\in \mathcal{R}_{31} \cap \left\{r\geq \frac{1}{(1+k)}\right\};\\
1,&\textrm{when}~~y\in \mathcal{R}_{32} \cap \left\{r\leq \frac{1}{(1+k)}\right\},
\end{array}\right.
\end{equation*}
which in turn gives us
\begin{equation}
\label{eq_dmt_1k1_R3}
d\left(\mathcal{R}_3\right)=\left\{\begin{array}{cc}
(k+1)(1-r),~\textrm{for}~0\leq r\leq \frac{1}{(k+1)};\\
1+k\left(\frac{1-2r}{1-r}\right),~\textrm{for}~\frac{1}{(k+1)}\leq r\leq \frac{1}{2}.
\end{array}\right.
\end{equation}
Combining equations \eqref{eq_dmt_1k1_R1} and \eqref{eq_dmt_1k1_R3} we get
\begin{IEEEeqnarray*}{rl}
\hat{d}(r)=&\min \left\{d\left(\mathcal{R}_1\right),d\left(\mathcal{R}_3\right)\right\},\\
=&\left\{\begin{array}{cc}
(k+1)(1-r),&\textrm{for}~0\leq r\leq \frac{1}{2};\\
1+k\left(\frac{1-2r}{1-r}\right), & \textrm{for}~\frac{1}{(k+1)}\leq r\leq \frac{1}{2};\\
\left(\frac{1-r}{r}\right),&\textrm{for}~\frac{1}{2}\leq r\leq 1.
\end{array}\right.
\end{IEEEeqnarray*}
Finally, by Theorem~\ref{thm_optimization_problem} we have $d_2^*(r)=\hat{d}(r)$.

\section{Proof of Theorem~\ref{analytical3}: The DMT of the $(2,k,2)$ relay channel}
\label{pf_thm_analytical3}
Instead of optimizing the objective function over all possible values of $(a,b)$, we restrict the minimization over a set on which $a+b=n$ and thus obtain an upper bound. From the proof of Theorem~\ref{thm_optimization_problem} we know that the optimal $(a,b)$ pair lies on the line
\begin{equation}
\label{constraint_thm_analytic3_pf_1st}
a+b\left(1-\frac{r}{y}\right)=r.
\end{equation}
Combining it with the assumption just made, it is clear that the optimal choice of $a$, for any feasible $y$, is given as
\begin{equation}
a^*=2-\left(\frac{(2-r)}{r}\right)y.
\end{equation}
Now, by Claim~\ref{cl_a_plus_b_equal_1} we know that when $a+b=n$, the objective function becomes
\begin{equation*}
\sum_{i=1}^{2}(5-2i)\hat{\alpha}_i+\sum_{l=1}^{2}(k+3-2l)\hat{\gamma}_l.
\end{equation*}
Computing the values of $\bar{\hat{\alpha}}$ and $\bar{\hat{\gamma}}$ from the definitions of $\phi_i$s and substituting in the above equation we get
\begin{IEEEeqnarray}{rl}
d_u(r)=&\min_{r\leq y\leq q} \sum_{i=1}^{2}(5-2i)\left((i-a^*)\land 1\right)^+ + \sum_{l=1}^{q}(k+3-2l)\left((l-y)\land 1\right)^+,\nonumber\\
\label{eq_dmt_222_func_y}
=&\min_{r\leq y\leq q} \sum_{i=1}^{2}(5-2i)\left(\left(i-2+\left(\frac{2-r}{r}\right)y\right)\land 1\right)^+ + \sum_{l=1}^{q}(k+3-2l)\left((l-y)\land 1\right)^+.
\end{IEEEeqnarray}
To optimize this function with respect to $y$, in the following we divide the values of $a^*$ and $y$ in four different regions as follows
\begin{equation}
\mathcal{R}_{u,v}=\left\{(u-1)\leq a^*\leq u, ~~~(v-1)\leq y\leq v\right\}, ~~u,v=1,2.
\end{equation}
In any of these regions the function is minimized either if $y$ attains its maximum value, denoted as $y_M$ in that region or the minimum value, denoted as $y_m$ in that region. However, in a particular region, the extreme values depend also on the value of $r$. So in the following, depending on $r$ each region will be further divided into several sub-regions and in each sub-region the two extreme values of $y$ will yield two values of the objective function as functions of $r$. We will have to take the minimum among all these functions to get the desired minimum of the objective function.

{\it Region $\mathcal{R}_{1,1}$:} In this region the set of feasible values of $y$ is given by the following
\begin{equation*}
\left\{\frac{r}{2-r}\leq y\leq \frac{2r}{2-r}\right\}\cap \Big\{0\leq y\leq 1\Big\}\cap \Big\{r\leq y\Big\}.
\end{equation*}
Note for $0\leq r\leq 1$, $y_m=r$, for $0\leq r\leq \frac{2}{3}$, $y_M=\frac{2r}{2-r}$, for $\frac{2}{3}\leq r\leq 1$ $y_M=1$ and for $r\geq 1$ the above set is empty. From the aforementioned argument, we get the minimum value of the objective function in this region by putting these values of $y$ in equation \eqref{eq_dmt_222_func_y} as
\begin{equation}
\label{solution_region_11}
d(r,\mathcal{R}_{1,1})=\min\left\{\begin{array}{c}
d_{2,2}(r)+d_{2,k}(r), ~0\leq r\leq 1;\\
k+3+(k+1)\left(\frac{2-3r}{2-r}\right),~0\leq r\leq \frac{2}{3};\\
k+6\left(\frac{1-r}{r}\right),~\frac{2}{3}\leq r\leq 1;
\end{array}\right.
\end{equation}

{\it Region $\mathcal{R}_{1,2}$:} In this region, the set of feasible values of $y$ is given by
\begin{equation*}
\left\{\frac{r}{2-r}\leq y\leq \frac{2r}{2-r}\right\}\cap \Big\{1\leq y\leq 2\Big\}\cap \Big\{r\leq y\Big\}.
\end{equation*}
For $\frac{2}{3}\leq r\leq 1$, $y_m=1$ and $y_M=\frac{2r}{2-r}$, for $1\leq r\leq \frac{4}{3}$, $y_m=\frac{r}{2-r}$ and $y_M=2$ and for $r\geq \frac{4}{3}$ the above set is empty. Putting these values of $y$ in equation \eqref{eq_dmt_222_func_y}, we get
\begin{equation}
\label{solution_region_12}
d(r,\mathcal{R}_{1,2})=\min\left\{\begin{array}{c}
k+6\left(\frac{1-r}{r}\right),~\frac{2}{3}\leq r\leq 1;\\
4+4(k-1)\left(\frac{1-r}{2-r}\right),~\frac{2}{3}\leq r\leq 1;\\
1+(k-1)\left(\frac{4-3r}{2-r}\right),~1\leq r \leq \frac{4}{3};\\
4\left(\frac{3-2r}{r}\right),~1\leq r \leq \frac{4}{3};
\end{array}\right.
\end{equation}

{\it Region $\mathcal{R}_{2,1}$:} In this region the set of feasible values of $y$ is given by
\begin{equation*}
\left\{0\leq y\leq \frac{r}{2-r}\right\}\cap \Big\{0\leq y\leq 1\Big\}\cap \Big\{r\leq y\Big\},
\end{equation*}
which is an empty set.

{\it Region $\mathcal{R}_{2,2}$:} In this region the set of feasible values of $y$ is given by
\begin{equation*}
\left\{0\leq y\leq \frac{r}{2-r}\right\}\cap \Big\{1\leq y\leq 2\Big\}\cap\Big\{r\leq y\Big\}.
\end{equation*}
This set is empty for $r\leq 1$; for $1\leq r\leq \frac{4}{3}$, $y_m=r$ and $y_M=\frac{r}{2-r}$, for $\frac{4}{3}\leq r\leq 2$, $y_m=r$ and $y_M=2$. Again, putting these values of $y$ in equation \eqref{eq_dmt_222_func_y}, we get
\begin{equation}
\label{solution_region_22}
d(r,\mathcal{R}_{2,2})=\min\left\{\begin{array}{c}
d_{2,2}(r)+d_{2,k}(r), ~1\leq r\leq 2;\\
1+(k-1)\left(\frac{4-3r}{2-r}\right),~1\leq r \leq \frac{4}{3};\\
2\left(\frac{2-r}{r}\right),~\frac{4}{3}\leq r \leq 2;
\end{array}\right.
\end{equation}
Finally, combining equations \eqref{solution_region_11}, \eqref{solution_region_12} and \eqref{solution_region_22} the theorem is proved.

\section{Proof of Lemma~\ref{lemma_eig1_help1}}
The determinant of a matrix can be written as the sum of several terms, where each term is a product of some of the elements of the matrix and $\pm 1$ (e.g., see Section $0.3.2$ in \cite{HJ}). In what follows, we shall first use this result implicitly to simplify $\triangle_1(\bar{\sigma},\bar{\xi})$. The terms of $\triangle_1(\bar{\sigma},\bar{\xi})$ will be gradually approximated in such a way that when expanded as a sum, the exponential order of each term in the sum remains unchanged, which in turn imply that the exponential order of $\triangle_1(\bar{\sigma},\bar{\xi})$ itself remains unchanged. Then combining it with the asymptotic expressions for the Vandermonde matrices $\mathbf{V}_1(\bar{\xi})$ and $\mathbf{V}_1(\bar{\sigma})$ we get the desired identity.

Replacing $\bar{x}$ by $\bar{\xi}$ and $\bar{y}$ by $\bar{\sigma}$ in the expression for $\triangle_1$ in equation (\ref{eq_del1}), we get
\begin{IEEEeqnarray*}{rl}
\triangle_1\left(\bar{\sigma},\bar{\xi}\right)=&\left|\begin{array}{cccccc}
1 & \sigma_{N_2} & \cdots \sigma_{N_2}^{(N_2-N_1-1)} & \sigma_{N_2}^{(N_2-N_1-1)}e^{-\frac{\xi_{N_1}}{\sigma_{N_2}}} & \cdots & \sigma_{N_2}^{(N_2-N_1-1)}e^{-\frac{\xi_1}{ \sigma_{N_2}}}\\
1 & \sigma_{(N_2-1)} & \cdots \sigma_{(N_2-1)}^{(N_2-N_1-1)} & \sigma_{(N_2-1)}^{(N_2-N_1-1)}e^{-\frac{\xi_{N_1}}{ \sigma_{(N_2-1)}}} & \cdots & \sigma_{(N_2-1)}^{(N_2-N_1-1)}e^{-\frac{\xi_1}{ \sigma_{(N_2-1)}}}\\
&&& \vdots &&\\
1 & \sigma_1 & \cdots \sigma_1^{(N_2-N_1-1)} & \sigma_1^{(N_2-N_1-1)}e^{-\frac{\xi_{N_1}}{ \sigma_1}} & \cdots & \sigma_1^{(N_2-N_1-1)}e^{-\frac{\xi_1}{ \sigma_1}}\\
\end{array}\right|,\\
=&T_0\left|\begin{array}{cccccc}
\sigma_{N_2}^{-(N_2-N_1-1)} \cdots  \sigma_{N_2}^{-1} & 1 & e^{-\frac{\xi_{N_1}}{\sigma_{N_2}}} & \cdots & e^{-\frac{\xi_1}{ \sigma_{N_2}}}\\
\sigma_{(N_2-1)}^{-(N_2-N_1-1)} \cdots \sigma_{(N_2-1)}^{-1} & 1 & e^{-\frac{\xi_{N_1}}{ \sigma_{(N_2-1)}}} & \cdots & e^{-\frac{\xi_1}{ \sigma_{(N_2-1)}}}\\
&&& \vdots &&\\
\sigma_1^{-(N_2-N_1-1)} \cdots \sigma_1^{-1} & 1 & e^{-\frac{\xi_{N_1}}{ \sigma_1}} & \cdots & e^{-\frac{\xi_1}{ \sigma_1}}\\
\end{array}\right|,
\end{IEEEeqnarray*}
where $T_0=\prod_{l=1}^{N_2}\left(\sigma_l^{(N_2-N_1-1)}\right)$. Now, putting $\sigma_i=(1+\rho\lambda_{(N_2-i+1)})^{-1}, ~\forall i$ in the above equation we get
\begin{IEEEeqnarray*}{rl}
\triangle_1\left(\bar{\sigma},\bar{\xi}\right)=&T_0\left|\begin{array}{cccccc}
(1+\rho\lambda_1)^{(N_2-N_1-1)} \cdots  (1+\rho\lambda_1) & 1 & e^{-\xi_{N_1}(1+\rho\lambda_1)} & \cdots & e^{-\xi_1(1+\rho\lambda_1)}\\
(1+\rho\lambda_2)^{(N_2-N_1-1)} \cdots (1+\rho\lambda_2) & 1 & e^{-\xi_{N_1}(1+\rho\lambda_2)} & \cdots & e^{-\xi_1(1+\rho\lambda_2)}\\
&&& \vdots &&\\
(1+\rho\lambda_{N_2})^{(N_2-N_1-1)} \cdots (1+\rho\lambda_{N_2}) & 1 & e^{-\xi_{N_1}(1+\rho\lambda_{N_2})} & \cdots & e^{-\xi_1(1+\rho\lambda_{N_2})}\\
\end{array}\right|,\\
=& ~T_1 \left|\begin{array}{cccccc}
1 & (1+\rho\lambda_1) & \cdots (1+\rho\lambda_1)^{(N_2-N_1-1)} & e^{-\xi_{N_1}(\rho\lambda_1)} & \cdots & e^{-\xi_1(\rho\lambda_1)}\\
1 & (1+\rho\lambda_2) & \cdots (1+\rho\lambda_2)^{(N_2-N_1-1)} & e^{-\xi_{N_1}(\rho\lambda_2)} & \cdots & e^{-\xi_1(\rho\lambda_2)}\\
&&& \vdots &&\\
1 & (1+\rho\lambda_{N_2})
 & \cdots (1+\rho\lambda_{N_2})^{(N_2-N_1-1)} & e^{-\xi_{N_1}(\rho\lambda_{N_2})} & \cdots & e^{-\xi_1(\rho\lambda_{N_2})}\\
\end{array}\right|,
\end{IEEEeqnarray*}
where $T_1=T_0\prod_{j=1}^{N_1}e^{-\xi_j}$. We have also ignored the sign change due to the row operation since $\triangle_1$ is a part of a pdf. To simplify the determinant in the above equation we do the following column operations: $C_i\to C_i-C_{i-1}(1+\rho \lambda_1), 2\leq i\leq (N_2-N_1)$ and $C_i\to C_i-C_1 e^{-\xi_{(i-N_2+N_1)} \rho \lambda_1}, (N_2-N_1+1)\leq i\leq N_2$. Since all the eigenvalues vary exponentially with $\rho$, for asymptotic $\rho$ using the approximation $\rho \xi_j(\lambda_i-\lambda_1)\dot{=}-\rho \xi_j \lambda_1$ and $\lambda_i-\lambda_1 \dot{=} -\lambda_1, \forall i\geq 2$ and $\forall ~ j$, we get
\begin{IEEEeqnarray*}{rl}
\triangle_1=&T_1 \left|\begin{array}{ccccc}
1 & 0 & \cdots 0 & \cdots & 0\\
1 & -\rho\lambda_1 & \cdots -(1+\rho\lambda_2)^{(N_2-N_1-2)}\rho\lambda_1 & \cdots & e^{-\xi_1(\rho\lambda_2)}\left(1-e^{-\xi_1\rho\lambda_1}\right)\\
&& \vdots &&\\
1 & -\rho\lambda_1
& \cdots -(1+\rho\lambda_{N_2})^{(N_2-N_1-2)}\rho\lambda_1 & \cdots & e^{-\xi_1(\rho\lambda_{N_2})}\left(1-e^{-\xi_1\rho\lambda_1}\right)\\
\end{array}\right|\\
=&T_2 \left|\begin{array}{cccccc}
1 & (1+\rho\lambda_2) & \cdots (1+\rho\lambda_2)^{(N_2-N_1-2)} & e^{-\xi_{N_1}(\rho\lambda_2)} & \cdots & e^{-\xi_1(\rho\lambda_2)}\\
&&& \vdots &&\\
1 & (1+\rho\lambda_{N_2})
& \cdots (1+\rho\lambda_{N_2})^{(N_2-N_1-2)} & e^{-\xi_{N_1}\rho\lambda_{N_2}}& \cdots & e^{-\xi_1(\rho\lambda_{N_2})}\\
\end{array}\right|,
\end{IEEEeqnarray*}
where $T_2=T_1(\rho\lambda_1)^{(N_2-N_1-1)}\prod_{j=1}^{N_1}\left(1-e^{-\rho \xi_j \lambda_1}\right)$. Proceeding in the same way we get
\begin{IEEEeqnarray}{rl}
\triangle_1=&\prod_{i=1}^{N_2-N_1}\left((\rho\lambda_i)^{(N_2-N_1-i)}\prod_{j=1}^{N_1}(1-e^{-\rho \xi_j\lambda_i})\right)T_1
\left|\begin{array}{ccc}
e^{-\rho \xi_{N_1}\lambda_{(N_2-N_1+1)}}  & \cdots & e^{-\xi_1\rho\lambda_{(N_2-N_1+1)}}\\
& \vdots &\\
e^{-\rho \xi_{N_1}\lambda_{N_2}} & \cdots & e^{-\xi_1\rho\lambda_{N_2}}\\
\end{array}\right| \nonumber \\
\label{pf_lemma_eig1_help2_final1}
= &\prod_{i=1}^{N_2-N_1}\left((\rho\lambda_i)^{(N_2-N_1-i)}\prod_{j=1}^{N_1}(1-e^{-\rho \xi_j\lambda_i})\right) T_2
\left|\begin{array}{ccc}
e^{-\xi_{N_1}(1+\rho \mu_1)}  & \cdots & e^{-\xi_1(1+\rho\mu_1)}\\
& \vdots &\\
e^{-\rho \xi{N_1}(1+\mu_{N_1})} & \cdots & e^{-\xi_1(1+\rho\mu_{N_1})}\\
\end{array}\right|,
\end{IEEEeqnarray}
where $\mu_l=\lambda_{N_2-N_1+l}, 1\leq l\leq N_1$ and $ T_2=\prod_{i=1}^{N_2}(1+\rho\lambda_i)^{-(N_2-N_1-1)}$. Now, denoting the determinant in equation \eqref{pf_lemma_eig1_help2_final1} by $\triangle_0$, we get


\begin{IEEEeqnarray}{rl}
\triangle_0\stackrel{(a)}{=}&\left|\begin{array}{cccc}
e^{-\xi_1(1+\rho \mu_{N_1})} & e^{-\xi_2(1+\rho\mu_{N_1})} & \cdots & e^{-\xi_{N_1}(1+\rho\mu_{N_1})}\\
&& \vdots &\\
e^{-\rho \xi_1(1+\mu_1)} & e^{-\xi_2(1+\rho\mu_1)}& \cdots & e^{-\xi_{N_1}(1+\rho\mu_1)}\\
\end{array}\right|,\\\dot{=}& \prod_{j=1}^{N_1}\left(e^{-\xi_j}e^{-(\rho \xi_j \mu_{(N_1+1-j)})}\right) \prod_{j=1}^{N_1}\prod_{i=1}^{(N_1-j)}\left(1-e^{-(\rho \xi_j \mu_i)}\right),
\end{IEEEeqnarray}
where equality $(a)$ is obtained by rearranging both the rows and columns in the reverse order and the last equality follows from Lemma~\ref{lemma_eig1_help1}. Using this expression and replacing the values of $\mu_i$'s in equation (\ref{pf_lemma_eig1_help2_final1}) we get
\begin{equation}
\label{eq_lemma3_temp1}
\triangle_1\dot{=}\prod_{i=1}^{N_2}(1+\rho\lambda_i)^{-(N_2-N_1-1)} \prod_{i=1}^{N_2-N_1}\left((\rho\lambda_i)^{(N_2-N_1-i)}\right) \prod_{j=1}^{N_1}\left(e^{-\xi_j}e^{-(\rho \xi_j \lambda_{(N_2+1-j)})}\right) \prod_{j=1}^{N_1}\prod_{i=1}^{(N_2-j)}\left(1-e^{-(\rho \xi_j \lambda_i)}\right)
\end{equation}
On the other hand, using equation \eqref{eq_asymptotic_def} at high SNR, we have
\begin{IEEEeqnarray}{l}
\label{eq_lemma3_temp2}
\mathbf{V}_1(\bar{\xi})\dot{=}\prod_{j=1}^{N_1}\xi_j^{(N_1-j)}=\left(\prod_{l=1}^{N_1}\xi_l^{(N_2-N_1)}\right)^{-1} \prod_{j=1}^{N_1}\xi_j^{(N_2-j)}.
\end{IEEEeqnarray}
\begin{IEEEeqnarray}{rl}
\mathbf{V}_1(\bar{\sigma})= & \det \left([(1+\rho\lambda_{(N_2-i+1)})^{-(j-1)}]_{i,j=1}^{N_2,N_2}\right),\\
= & \left(\prod_{i=1}^{N_2}(1+\rho\lambda_i)^{(1-N_2)}\right)\det \left([(1+\rho\lambda_{(N_2-i+1)})^{(N_2-j)}]_{i,j=1}^{N_2,N_2}\right),\nonumber \\
=&\left(\prod_{i=1}^{N_2}(1+\rho\lambda_i)^{(1-N_2)}\right)\det \left([(1+\rho\lambda_i)^{(j-1)}]_{i,j=1}^{N_2,N_2}\right),\nonumber \\
\label{eq_lemma3_temp3}
\dot{=} & \left(\prod_{i=1}^{N_2}(1+\rho\lambda_i)^{(1-N_2)}\right) \left(\prod_{i=1}^{N_2}(\rho\lambda_{(N_2-i+1)})^{(N_2-i)}\right).
\end{IEEEeqnarray}
Finally, combining equations \eqref{eq_lemma3_temp1}-\eqref{eq_lemma3_temp3} we have
\begin{IEEEeqnarray*}{rl}
\frac{\triangle_1(\bar{\sigma},\bar{\xi})}{\mathbf{V}_1(\bar{\xi}) \mathbf{V}_1(\bar{\sigma})}\dot{=} & \prod_{i=1}^{N_2}(1+\rho\lambda_i)^{N_1} \prod_{j=1}^{N_1}\left(\xi_j^{(N_2-N_1)}e^{-\xi_j}e^{-(\rho \xi_j \lambda_{(N_2+1-j)})}\right) \frac{\prod_{j=1}^{N_1}\prod_{i=1}^{(N_2-j)}\left(1-e^{-(\rho \xi_j \lambda_i)}\right)}{\left(\prod_{j=1}^{N_1}\xi_j^{(N_2-j)}\right)\left(\prod_{i=1}^{N_2}(\rho \lambda_i)^{(N_2-i)\land N_1}\right)},\\
= & \prod_{i=1}^{N_2}(1+\rho\lambda_i)^{N_1} \prod_{j=1}^{N_1}\left(\xi_j^{(N_2-N_1)}e^{-\xi_j}e^{-(\rho \xi_j \lambda_{(N_2+1-j)})}\right) \prod_{j=1}^{N_1}\prod_{i=1}^{(N_2-j)}\left(\frac{1-e^{-(\rho \xi_j \lambda_i)}}{\rho \xi_j \lambda_i}\right).
\end{IEEEeqnarray*}

\section{Proof of Lemma~\ref{lemma_eig1_help2}}
We shall simplify the term $D(\bar{\xi},\bar{\lambda})$ using a similar method as in the proof of Lemma~\ref{lemma_eig1_help1}. Let us start by expressing $D(\bar{\xi},\bar{\lambda})$ as follows
\begin{equation}
\label{eq_pf_help2_t1}
D(\bar{\xi},\bar{\lambda})=\prod_{i=1}^{N_2}\lambda_i \left|\begin{array}{ccc}
e^{-\xi_1\rho\lambda_{N_2}} & \cdots & e^{-\xi_{N_2}\rho\lambda_{N_2}}\\
& \vdots &\\
e^{-\xi_1\rho\lambda_1}& \cdots & e^{-\xi_{N_2}\rho\lambda_1}
\end{array}\right|\triangleq \left(\prod_{i=1}^{N_2}\lambda_i\right)\triangle.
\end{equation}
It is well known~\cite{HJ} that, for any square invertible matrix $A$ and square matrix $K$ the following identity holds.
\begin{equation*}
\left|\left[\begin{array}{cc}
A & B\\
C & K
\end{array}\right]\right|=|A||K-CA^{-1}B|.
\end{equation*}
To simplify $\triangle$ we substitute $A=e^{-\xi_1\rho \lambda_{N_2}}$ and use the above equation to get
\begin{equation*}
\triangle =e^{-\xi_1\rho\lambda_{N_2}}\left|\left[\begin{array}{ccc}
e^{-\xi_2\rho\lambda_{N_2-1}} & \cdots & e^{-\xi_{N_2}\rho\lambda_{N_2-1}}\\
& \vdots &\\
e^{-\xi_2\rho\lambda_1}& \cdots & e^{-\xi_{N_2}\rho\lambda_1}
\end{array}\right] - e^{\xi_1\rho\lambda_{N_2}} \left[\begin{array}{c}
e^{-\xi_1\rho\lambda_{N_2-1}} \\
 \vdots \\
e^{-\xi_1\rho\lambda_1}
\end{array}\right] \left[e^{-\xi_2\rho\lambda_{N_2}}  \cdots  e^{-\xi_{N_2}\rho\lambda_{N_2}}\right]\right|.
\end{equation*}
Since the eigenvalues $\xi_j$'s and $\lambda_i$'s vary exponentially with SNR ($\rho$), from the ordering among themselves and equation \eqref{eq_asymptotic_def} we have $\lambda_{N_2}-\lambda_i \dot{=}-\lambda_i, ~\forall i\leq (N_2-1)$. Using this in the above equation we get
\begin{IEEEeqnarray*}{rl}
\triangle ~\dot{=}&~e^{-\xi_1\rho\lambda_{N_2}}\left|\left[\begin{array}{ccc}
e^{-\xi_2\rho\lambda_{N_2-1}} & \cdots & e^{-\xi_{N_2}\rho\lambda_{N_2-1}}\\
& \vdots &\\
e^{-\xi_2\rho\lambda_1}& \cdots & e^{-\xi_{N_2}\rho\lambda_1}
\end{array}\right] -\left[\begin{array}{c}
e^{-\xi_1\rho\lambda_{N_2-1}} \\
 \vdots \\
e^{-\xi_1\rho\lambda_1}
\end{array}\right] \left[e^{-\xi_2\rho\lambda_{N_2}}  \cdots  e^{-\xi_{N_2}\rho\lambda_{N_2}}\right]\right|,\\
\stackrel{(a)}{\dot{=}}& e^{-\xi_1\rho\lambda_{N_2}}\left|\left[\begin{array}{ccc}
e^{-\xi_2\rho\lambda_{N_2-1}} & \cdots & e^{-\xi_{N_2}\rho\lambda_{N_2-1}}\\
& \vdots &\\
e^{-\xi_2\rho\lambda_1}& \cdots & e^{-\xi_{N_2}\rho\lambda_1}
\end{array}\right] -\left[\begin{array}{ccc}
e^{-\xi_1\rho\lambda_{N_2-1}} & \cdots & e^{-\xi_1\rho\lambda_{N_2-1}}\\
& \vdots &\\
e^{-\xi_1\rho\lambda_1} & \cdots & e^{-\xi_1\rho\lambda_1}
\end{array}\right] \right|, \\
\stackrel{(b)}{\dot{=}}& e^{-\xi_1\rho\lambda_{N_2}}\left|\left[\begin{array}{ccc}
e^{-\xi_2\rho\lambda_{N_2-1}}\left(1-e^{-\xi_1\rho \lambda_{N_2-1}}\right) & \cdots & e^{-\xi_{N_2}\rho\lambda_{N_2-1}}\left(1-e^{-\xi_1\rho \lambda_{N_2-1}}\right)\\
& \vdots &\\
e^{-\xi_2\rho\lambda_1}\left(1-e^{-\xi_1\rho \lambda_1}\right)& \cdots & e^{-\xi_{N_2}\rho\lambda_1}\left(1-e^{-\xi_1\rho \lambda_1}\right)
\end{array}\right]\right|,\\
\dot{=}& e^{-\xi_1\rho\lambda_{N_2}}\prod_{i=1}^{N_2-1}\left(1-e^{-\xi_1\rho \lambda_i}\right)\left|\left[\begin{array}{ccc}
e^{-\xi_2\rho\lambda_{N_2-1}} & \cdots & e^{-\xi_{N_2}\rho\lambda_{N_2-1}}\\
& \vdots &\\
e^{-\xi_2\rho\lambda_1}& \cdots & e^{-\xi_{N_2}\rho\lambda_1}
\end{array}\right]\right|,
\end{IEEEeqnarray*}
where step $(a)$ follows from the fact that $(\xi_1\lambda_i+\xi_2\lambda_{N_2})\dot{=}\xi_1\lambda_i,~
\forall i\leq (N_2-1)$ and step $(b)$ follows from the fact that $(\xi_1\lambda_i-\xi_j\lambda_i)\dot{=}\xi_1\lambda_i~
\forall j\geq 2, i$. Proceeding in the same way we get
\begin{equation*}
\triangle~\dot{=}~\prod_{j=1}^{N_2}e^{-\xi_j\rho\lambda_{N_2-j+1}}\prod_{i=1}^{N_2-j}\left(1-e^{-\xi_j\rho \lambda_i}\right).
\end{equation*}
Finally, using this asymptotic expression of $\triangle$ in equation \eqref{eq_pf_help2_t1}, we get
\begin{IEEEeqnarray*}{l}
D(\bar{\xi},\bar{\lambda})=\prod_{j=1}^{N_2}e^{-\xi_j} \triangle ~\dot{=}~\prod_{j=1}^{N_2}\left(e^{-\xi_j}e^{-\xi_j\rho\lambda_{N_2-j+1}}\prod_{i=1}^{N_2-j}\left(1-e^{-\xi_j\rho \lambda_i}\right)\right).\nonumber
\end{IEEEeqnarray*}

\bibliographystyle{IEEETran}
\bibliography{mybibliography,mybibliography-mv}

\end{document}